\pgfplotsset{compat=1.9}
\pgfplotsset{compat=1.9}
\renewcommand{\oe}{\mathcal{O}\left(\epsilon\right)}
\newcommand{\oee}{\mathcal{O}\left(\epsilon^2\right)}
\newcommand{\E}{\mathbb{E}}
\newcommand{\subsubsubsection}[1]{{\flushleft \bf #1}}
\providecommand{\customgenericname}{}
\newcommand{\newcustomtheorem}[2]{%
  \newenvironment{#1}[1]
  {%
   \renewcommand\customgenericname{#2}%
   \renewcommand\theinnercustomgeneric{##1}%
   \innercustomgeneric
  }
  {\endinnercustomgeneric}
}
\newcommand{\defn}{\stackrel{\Delta}{=}} 
\newcommand{\R}{\mathbb{R}}
\newcommand{\Na}{Na$^+$} 
\newcommand{\K}{K$^+$}
\newcommand{\mref}{M_\text{ref}}
\newcommand{\nref}{N_\text{ref}}
\newcommand{\mtot}{M_\text{tot}}
\newcommand{\ntot}{N_\text{tot}} 
\renewcommand{\Pr}{\mathbb{P}}
\newcommand{\mbe}{\mathbf{e}}
\newcommand{\mbQ}{\mathbf{Q}}
\newcommand{\mbr}{\mathbf{r}}
\newcommand{\mbx}{\mathbf{x}}
\newcommand{\mby}{\mathbf{y}}
\newcommand{\mbX}{\mathbf{X}}
\newcommand{\LL}{\mathcal{L}}
\def\given{\:|\:}
\newcommand{\mbN}{\mathbf{N}}
\newcommand{\mbn}{\mathbf{n}}
\newcommand{\mbdW}{\mathbf{dW}}
\newcommand{\mbW}{\mathbf{W}}
\newcommand{\mbF}{\mathbf{F}}
\newcommand{\mbZ}{\mathbf{Z}}
\newcommand{\mbM}{\mathbf{M}}
\newcommand{\mbm}{\mathbf{m}}
\definecolor{blue}{rgb}{0,0,1}
\definecolor{darkgreen}{rgb}{0,.5,0}
\definecolor{darkred}{rgb}{.75,0,0}
\definecolor{red}{rgb}{1,0,0}
\newcommand{\Ld}{\mathcal{L}^\dagger}
\newcommand{\bigzero}{\mbox{\normalfont\Large\bfseries 0}}
\newcommand{\tbar}{\overline{T}}
\begin{document}

\title{Resolving Molecular Contributions of Ion Channel Noise to  Interspike Interval Variability through Stochastic Shielding
}

\titlerunning{Stochastic Shielding and ISI Variability}        

\author{Shusen Pu$^{\displaystyle 1}$  \and Peter J.~Thomas$^{\displaystyle 1- \displaystyle 4}$}


\institute{Shusen Pu\at
            $^{\displaystyle 1}$ Case Western Reserve University Department of \\
 Mathematics, Applied Mathematics, and Statistics\\
              \email{sxp600@case.edu}           
           \and
           Peter J.~Thomas$^{\displaystyle 1- \displaystyle 5}$  \at
           Case Western Reserve University Departments of \\
$^{\displaystyle 1}$ Mathematics, Applied Mathematics, and Statistics\\
$^{\displaystyle 2}$ Biology, $^{\displaystyle 3}$ Cognitive Science, $^{\displaystyle 4}$ Data and Computer Science, and $^{\displaystyle 5}$ Electrical, Computer, and Systems Engineering\\
\email{pjthomas@case.edu}
}

\date{Received: Oct, 2020}

\maketitle

\begin{abstract}
Molecular fluctuations can lead to macroscopically observable effects.  
The random gating of ion channels in the membrane of a nerve cell provides an important example.
The contributions of independent noise sources to the variability of action potential timing has not previously been studied at the level of molecular transitions within a conductance-based model ion-state graph.  
Here we study a stochastic Langevin model for the Hodgkin-Huxley (HH) system based on a detailed representation of the underlying channel-state Markov process, the ``$14\times 28$D model'' introduced in (Pu and Thomas 2020, Neural Computation).
We show how to resolve the individual contributions that each transition in the ion channel graph makes to the variance of the interspike interval (ISI).
We extend the mean--return-time (MRT) phase reduction developed in (Cao et al.~2020, SIAM J.~Appl.~Math) to the second moment of the return time from an MRT isochron to itself.  
Because fixed-voltage spike-detection triggers do not correspond to MRT isochrons, the \emph{inter-phase interval} (IPI) variance only approximates the ISI variance.  
We find the IPI variance and ISI variance agree to within a few percent when both can be computed.
Moreover, we prove rigorously, and show numerically, that our expression for the IPI variance is accurate in the small noise (large system size) regime;
our theory is exact in the limit of small noise.   
%
By selectively including the noise associated with only those few transitions responsible for most of the ISI variance, our analysis extends the stochastic shielding (SS) paradigm (Schmandt and Gal\'{a}n 2012, Phys.~Rev.~Lett.) from the stationary voltage-clamp case to the current-clamp case. We show numerically that the SS approximation has a high degree of accuracy even for larger, physiologically relevant noise levels.
Finally, we demonstrate that the ISI variance is not an unambiguously defined quantity, but depends on the choice of voltage level set as the spike-detection threshold.  
We find a small but significant increase in ISI variance, the higher the spike detection voltage, both for simulated stochastic HH data and for voltage traces recorded in \textit{in vitro} experiments.
In contrast, the IPI variance is invariant with respect to the choice of isochron used as a trigger for counting ``spikes".

\keywords{Channel Noise \and Stochastic Shielding \and Phase Response Curve \and Inter-spike-interval \and Neural Oscillators \and Langevin Models}
\end{abstract}




\section{Introduction}
Nerve cells communicate with one another, process sensory information, and control motor systems through transient voltage pulses, or spikes.
At the single-cell level, neurons exhibit a combination of deterministic and stochastic behaviors.  
In the supra-threshold regime, the regular firing of action potentials under steady current drive suggests limit cycle dynamics, with the precise timing of voltage spikes perturbed by noise. 
Variability of action potential timing persists even under blockade of synaptic connections, consistent with an intrinsically noisy neural dynamics arising from the random gating of ion channel populations, or ``channel noise" \cite{White2000Elsevier}. 

Understanding the molecular origins of spike time variability may shed light on several phenomena in which channel noise plays a role.
For example, microscopic noise can give rise to a stochastic resonance behavior \cite{SchmidGoychukHanggi2001EPL}, and can contribute to cellular- and systems-level timing changes in the integrative properties of neurons \cite{Dorval2005JNeuro}.
Jitter in spike times under steady drive may be observed in neurons from the auditory system \cite{GerstnerKempterVanHemmenWagner1996Nature,Goldwyn2010JCompNeu,MoiseffKonishi1981JCompPhys} as well as in the cerebral cortex \cite{Mainen1995Science} 
and may play a role in both fidelity of sensory
information processing and in precision of motor control \cite{SchneidmanFreedmanSegev1998NECO}.
\begin{figure}
  \centering
  \includegraphics[width=1\textwidth]{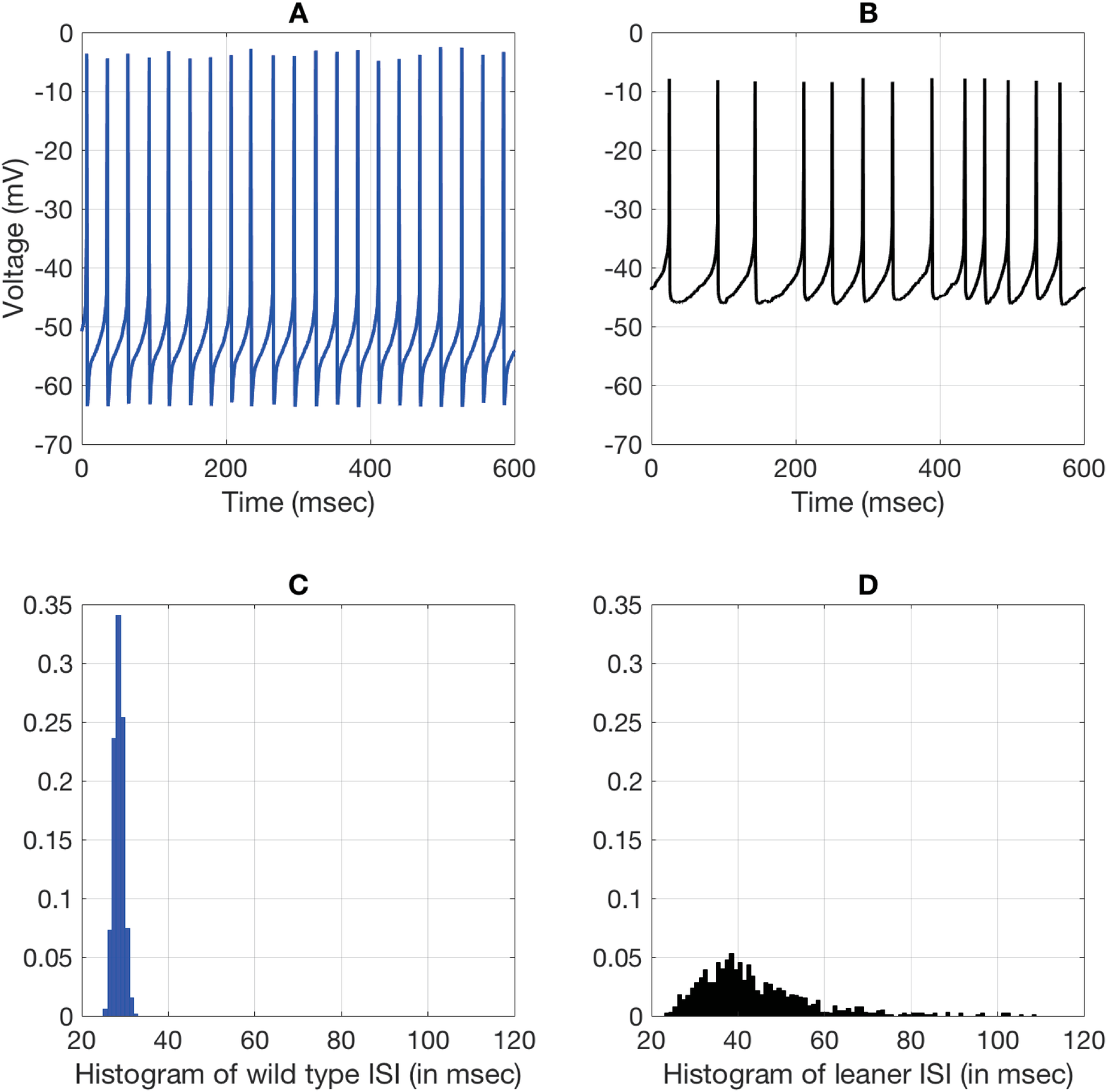} \\
  \caption{Somatic voltage recorded in vitro from intact Purkinje cells (cerebellar slice preparation) during spontaneous tonic firing, with synaptic input blocked. 
  Recordings courtesy of Dr.~D.~Friel. See \S\ref{sec:experimental-methods} for details.
 \textbf{A}: Sample voltage recordings from a wild type Purkinje cell showing precise spontaneous firing with interspike interval (ISI) coefficient of variation (CV=standard deviation / mean ISI) of approximately 3.9\%. 
 \textbf{B}: Sample recordings from Purkinje cells with leaner mutation in P/Q-type calcium channels showing twice the variability in ISI (CV c.~30\%). \textbf{C, D}: histogram of ISI for wild type and leaner mutation, respectively. Bin width = 1 msec for each. }\label{fig:WTLE_stats} 
\end{figure}
As a motivating example for this work, channel noise is thought to underlie jitter in spike timing observed in cerebellar Purkinje cells recorded in vitro from the ``leaner mouse", a P/Q-type calcium channel mutant with profound ataxia \cite{Walter2006NatNeuro}.
Purkinje cells fire \Na action potentials spontaneously \cite{Llinas1980JPH2,Llinas1980JPH1}, and may do so at a very regular rate \cite{Walter2006NatNeuro}, even in the absence of  synaptic input (cf.~Fig.~\ref{fig:WTLE_stats}~A and C).
Mutations in an homologous human calcium channel gene are associated with episodic ataxia type II, a debilitating form of dyskinesia \cite{Pietrobon2010Springer,Rajakulendran2012Nature}.
Previous work has shown that the {\it{leaner}} mutation increases the variability of spontaneous action potential firing in Purkinje cells \cite{OvsepianFriel2008EJN,Walter2006NatNeuro}  (Fig.~\ref{fig:WTLE_stats}~B and D).
It has been proposed that increased channel noise akin to that observed in the leaner mutant plays a mechanistic role in this human disease \cite{Walter2006NatNeuro}. 

Despite its practical importance, a quantitative understanding of distinct molecular sources of  macroscopic timing variability remains elusive.  
Significant theoretical attention has been paid to the variance of phase response curves and interspike interval (ISI) variability. 
Most analytical studies are based on the integrate-and-fire model \cite{Brunel2003NCom,Lindner2004PRE_interspike,VilelaLindner2009}, except \cite{ErmentroutBeverlinTroyerNetoff2011JCNS}, which perturbs the voltage of a conductance-based model with a white noise current rather than through biophysically-based channel noise.
Standard models of stochastic ion channel kinetics  comprise hybrid stochastic systems.  As illustrated in Fig.~\ref{plot:HHNaKgates}, the membrane
potential evolves deterministically, given the transmembrane currents; the
currents are determined by the ion channel state; the ion channel states
fluctuate stochastically with opening and closing rates that depend on the
voltage \cite{AndersonErmentroutThomas2015JCNS,Bressloff2014APS,Buckwar2011JMB,Pakdaman2010CUP}.
This closed-loop nonlinear dynamical stochastic system is difficult to
study analytically, because of recurrent statistical dependencies of the
variables one with another.  
An important and well studied special case is
fixed-voltage clamp, which reduces the ion channel subsystem to a time invariant Markov process \cite{SchmidtThomas2014JMN}.  
Under the more natural current clamp, the ion channel dynamics
taken alone are no longer Markovian, as they intertwine with current and voltage. 
A priori, it is challenging to draw a direct connection between the variability of spike timing and molecular-level stochastic events, such as the opening and closing of specific ion channels, as spike timing is a pathwise property reflecting the effects of fluctuations accumulated around each orbit or cycle.

\begin{figure}
    \centering
    \begin{tikzpicture}
 \foreach \x/\w/\y/\z/\num/\a/\p in {1/ /3/4/3/1/00,2/2/7/8/2/2/10,3/3/11/12/ /3/20}
 {
 \draw (2.5*\x,2.5) circle (13pt); 
 \draw [color=black,->] (2.5*\x+0.8,0.18+2.5) -- (2.5*\x+1.6,0.18+2.5) node[right, color=blue]{\tiny \y};
 \node at (2.5*\x+1.15,12*0.036+2.5) {\num$\alpha_m$};
 \draw [color=black,<-] (2.5*\x+0.8,-0.18+2.5) node[left, color=blue]{\tiny \z} -- (2.5*\x+1.6,-0.18+2.5);
 \node at (2.5*\x+1.15,-12*0.036+2.5) {\w$\beta_m$};
 \node at (2.5*\x,0.72+2.5)  {$M_{\p}$};
 }
 \draw  (2.5*4,2.5) circle (13pt);
 \node  at (2.5*4,0.72+2.5) {$M_{30}$}; 
  \foreach \x/\w/\y/\z/\num/\p/\q in {1/ /15/16/3/01/5,2/2/17/18/2/11/6,3/3/19/20/ /21/7}
 {
 \draw (2.5*\x,0) circle (13pt); 
 \draw [color=black,->] (2.5*\x+0.8,0.18) -- (2.5*\x+1.6,0.18) node[right, color=blue]{\tiny \y};
 \node at (2.5*\x+1.15,12*0.036) {\num$\alpha_m$};
 \draw [color=black,<-] (2.5*\x+0.8,-0.18) node[left, color=blue]{\tiny \z} -- (2.5*\x+1.6,-0.18);
 \node at (2.5*\x+1.15,-12*0.036) {\w$\beta_m$};
 \node at (2.5*\x,-0.72) {$M_{\p}$} node  at (2.5*2.5,2.5+1.25) {A: $\text{Na}^+\  \text{Channel}$};
 }
 \draw [fill=green!30] (2.5*4,0) circle (13pt);
 \node at (2.5*4,-0.72) {$M_{31}$} ;
 \foreach \x/\a/\b in {1/2/1,2/6/5,3/10/9,4/14/13}
 {
  \draw [->] (2.5*\x-0.18,0.73) -- (2.5*\x-0.18,0.7+1) node[above, color=blue]{\tiny \a};
  \node at (2.5*\x-12*0.036,1.2) {$\beta_h$};
  \draw [->] (2.5*\x+0.18,0.7+1) -- (2.5*\x+0.18,0.73) node[below, color=blue]{\tiny \b};
  \node at (2.5*\x+12*0.036,1.2) {$\alpha_h$};
 }
 \end{tikzpicture}
 
  \begin{tikzpicture}
 \foreach \x/\w/\y/\z/\num/\id in {1/ /1/2/4/0,2/2/3/4/3/1,3/3/5/6/2/2,4/4/7/8/ /3}
 {
 \draw (1.9*\x,0) circle (12pt);
 \draw [color=black,->] (1.9*\x+0.7,5pt) -- (1.9*\x+1.2,5pt) node[right, color=blue]{\tiny \y};
 \node at (1.9*\x+0.85,12pt) {\num$\alpha_n$};
 \draw [color=black,<-] (1.9*\x+0.7,-5pt) node[left, color=blue]{\tiny \z} -- (1.9*\x+1.2,-5pt);
 \node at (1.9*\x+0.85,-12pt) {\w$\beta_n$};
 \node at (1.9*\x,-20pt) {$N_\id$};
 }
 \draw [fill=green!30] (1.9*5,0) circle (12pt);
  \node  at (1.9*5,-20pt) {$N_4$} node  at (1.9*3,1) {B: $\text{K}^+\  \text{Channel}$};
 \end{tikzpicture}
 \begin{tikzpicture}
  \node at (1.9*3,1) {C: Interdependent Variables Under Current Clamp};
 \end{tikzpicture}
     \includegraphics[width=4.5in]{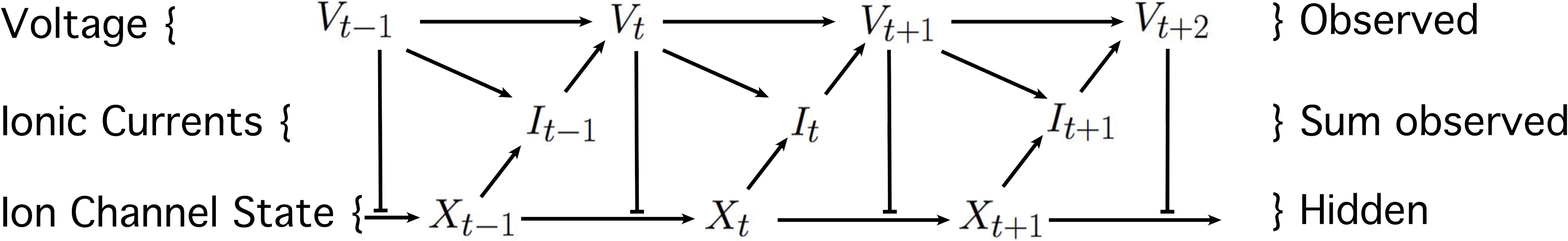}
\vspace{0.5cm}
\caption{Statistical dependencies among voltage ($V_t$), ionic currents ($I_t$), and ion channel state ($X_t$) form a hybrid, or piecewise deterministic, stochastic model. 
\textbf{A, B:} Molecular sodium (\Na) and potassium (\K)  channel states for the Hodgkin-Huxley model.  Filled circles mark conducting states $N_4$ and $M_{31}$. Per capita transition rates for each directed edge ($\alpha_n$, $\beta_n$, $\alpha_m$, $\beta_m$, $\alpha_h$ and $\beta_h$) are voltage dependent (cf.~App.~\ref{app:notations}). Directed edges are numbered 1-8 (\K~channel) and 1-20 (\Na-channel), marked in small red numerals.
\textbf{C:} 
Voltage $V_t$ at time $t$ influences current $I_t$ as well as the transition from channel state $X=[M_{00},M_{10},\ldots,M_{31},N_{0},N_{1},\ldots,N_{4}]$,
at time $t-1$ to time $t$.  
(For illustration we assume discrete sampling at integer multiples of a nominal time step $\Delta t$ in arbitrary units.) 
Channel state dictates the subsequent ionic current, which dictates the voltage increment.  
Arrowheads ($\to$) denote  deterministic dependencies.  
T-connectives ($\perp$) denote statistical dependencies. }
\label{plot:HHNaKgates}
\end{figure}

In \cite{SchmandtGalan2012PRL} Schmandt and Gal\'{a}n introduced \emph{stochastic shielding} as a fast, accurate approximation scheme for stochastic ion channel models.  
Rather than simplifying the Markov process by aggregating ion channel states, stochastic shielding reduces the complexity of the underlying sample space by eliminating independent noise sources (corresponding to individual directed edges in the channel state transition graph) that make minimal contributions to ion channel state fluctuations.
In addition to providing an efficient numerical procedure, stochastic shielding leads to an \emph{edge importance measure} \cite{SchmidtThomas2014JMN} that quantifies the contribution of the fluctuations arising along each directed edge to the variance of channel state occupancy (and hence the variance of the transmembrane current).
The stochastic shielding method then amounts to simulating a stochastic conductance-based model using only the noise terms from the most important transitions.
While the original, heuristic implementation of stochastic shielding considered both current and voltage clamp scenarios \cite{SchmandtGalan2012PRL}, subsequent mathematical analysis of stochastic shielding considered only the constant voltage-clamp case \cite{SchmidtGalanThomas2018PLoSCB,SchmidtThomas2014JMN}.

In our previous work \cite{PuThomas2020NECO}, we numerically estimated the contribution of each directed edge in the transition graph (Fig.~\ref{plot:HHNaKgates}A,B) to the variance of ISIs.
In this paper we provide, to our knowledge, the first analytical treatment of
the variability of spike timing under current clamp arising from the random
gating of ion channels with realistic (Hodgkin-Huxley) kinetics.  
Building on prior work  \cite{CaoLindnerThomas2020SIAP,PuThomas2020NECO,SchmandtGalan2012PRL,SchmidtGalanThomas2018PLoSCB,SchmidtThomas2014JMN}, we study the variance of the
transition times among a family of Poincar\'{e} sections, the mean--return-time (MRT)
isochrons investigated by \cite{SchwabedalPikovsky2010PRE,CaoLindnerThomas2020SIAP} that extend the notion of phase
reduction to stochastic limit cycle oscillators. 
We prove a theorem that gives the form of the variance, $\sigma^2_\phi$, of inter-phase-intervals (IPI)\footnote{Equivalently, ``iso-phase-intervals'': the time taken to complete one full oscillation, from a given isochron back to the same isochron.}  in the limit of small noise (equivalently, large channel number or system size),
as a sum of contributions $\sigma^2_{\phi,k}$
from each directed edge $k$  in the ion channel
state transition graph (Fig.~\ref{plot:HHNaKgates}A,B).
The IPI variability involves several quantities: the per capita transition rates $\alpha_k$ along each transition, the mean-field ion channel population $M_{i(k)}$ at the source node for each transition, the stoichiometry (state-change) vector $\zeta_k$ for the $k$th transition, and the phase response curve $\mbZ$ of the underlying limit cycle:
$$\sigma^2_\phi=\sum_{k\in\text{all edges}}\sigma^2_{\phi,k}=\epsilon \tbar_0 \sum_k\E\left(   \alpha_k(v(t))M_{i(k)}(t)\left(\zeta_k^\intercal \mbZ(t)  \right)^2 \,dt\right)+ O\left(\epsilon^2\right),
$$
in the limit as $\epsilon\to 0^+$.  
Here $\overline{T}_0$, $v(t)$ and $\mbM(t)$ are the period, voltage, and ion channel population vector of the deterministic limit cycle for $\epsilon=0$.  By $\E$ we denote expectation with respect to the stationary probability density
for the Langiven model (cf.~eqn.~\eqref{eq:langevin-rescaled}).
As detailed below, we scale $\epsilon\propto1/\sqrt{\Omega}$ where the system size $\Omega$ reflects the size of the underlying ion channel populations.

Thus we are able to pull apart the distinct contribution of each independent
source of noise (each directed edge in the ion channel state transition graphs) to the variability of timing. 
Figs.~\ref{fig:ISI_Na_K_all}-\ref{fig:all_SS}  illustrate the additivity of contributions from separate edges for small noise levels.
As a consequence of this linear decomposition, we can extend the stochastic shielding
approximation, introduced in \cite{SchmandtGalan2012PRL} and rigorously analyzed under
voltage clamp in \cite{SchmidtThomas2014JMN,SchmidtGalanThomas2018PLoSCB}, to the current clamp
case.  
Our theoretical result guarantees that, for small noise, we can replace a full stochastic simulation with a more efficient simulation driven by noise from only the most ``important" transitions with negligible loss of accuracy.
We find numerically that the range of validity of the stochastic shielding approximation under current clamp extends beyond the ``small noise limit" to include physiologically relevant population sizes, cf.~Fig.~\ref{fig:all_SS}.
 
The inter-phase-interval (IPI) is a mathematical construct closely related to, but distinct from, the inter-spike-interval (ISI).
The ISI, determined by the times at which the cell voltage moves upward (say) through a prescribed voltage threshold $v_\text{thresh}$, is directly observable from experimental recordings -- unlike the IPI.
However, we note that both in experimental data and in  stochastic numerical simulations, the variance of the ISI is not insensitive to the choice of voltage threshold, but increases monotonically as a function of $v_\text{thresh}$ (cf.~Fig.~\ref{fig:ISI_change}).  
In contrast, the variance of inter-\emph{phase}-interval times is the same, regardless of which MRT isochron is used to define the intervals.  
This invariance property gives additional motivation for investigating the variance of the IPI.

The structure of the paper is as follows: 
In \S\ref{sec:defs}, we review the $14\times 28$D Langevin Hodgkin-Huxley model proposed in \cite{PuThomas2020NECO}, and provide mathematical definitions of   first passage times, interspike intervals, asymptotic phase functions, and iso-phase intervals for the class of model we consider.
In \S\ref{sec:noise_decomp} we state the necessary assumptions and prove the small-noise decomposition theorem. 
In \S\ref{sec:Numerical} we compare the contributions of individual transitions to both interspike interval variability and interphase interval variability, predicted from the decomposition theorem, against the results of numerical simulations.
Section \S\ref{sec:discussion} discusses the theoretical and practical limitations of our results.

\section{Definitions, Notation and Terminology}\label{sec:defs}
In this section, we recall the notation and assumptions of the $14\times 28$D Langevin model for the stochastic Hodgkin-Huxley system introduced in \cite{PuThomas2020NECO}. In addition, we will present definitions, notations and terminology that are necessary for the main result.
We adopt the standard convention that uppercase symbols (e.g.~$V,\mbM,\mbN$) represent random variables, while corresponding lowercase symbols (e.g.~$v,\mbm,\mbn$) represent possible realizations of the random variables. Thus $\Pr(V\le v)$ is the probability that the random voltage $V$ does not exceed the value $v$.
We set vectors in \textbf{bold font} and scalars in standard font.

\subsection{The Langevin HH Model} \label{subsec:14Dmodel}
For the HH kinetic scheme given in Fig.~\ref{plot:HHNaKgates}A-B  (p.~\pageref{plot:HHNaKgates}), we define the eight-component state vector $\mbM$ for the \Na~gates, and the five-component state vector  $\mbN$ for the \K~gates, respectively, as
\begin{align}\label{eq:define_M}
\mbM&=[M_{00},M_{10},M_{20},M_{30},M_{01},M_{11},M_{21},M_{31}]^\intercal \in [0,1]^{8} \\
\label{eq:define_N}
\mbN&=[N_0,N_1,N_2,N_3,N_4]^\intercal\in [0,1]^5,
\end{align}
where $\sum_{i=0}^3\sum_{j=0}^1 M_{ij}=1$ and $\sum_{i=0}^4 N_i=1$.
The open probability for the \Na~channel is  $M_{31}$, and is $N_4$ for the \K~channel. 
Our previous paper \cite{PuThomas2020NECO} proposed a $14\times 28$D Langevin HH model. 
Here, we make the dependence of the channel noise on system size (number of channels) explicit, by introducing a small parameter $\epsilon\propto N_\text{ion}^{-1}$.  
\label{page:epsilon_propto_N}
We therefore consider a one-parameter family of Langevin equations
\begin{equation}\label{eq:langevin-rescaled}
  d\mbX=\mbF(\mbX)\,dt+{\sqrt{\epsilon}}\mathcal{G}(\mbX)\,d\mbW(t)  
\end{equation}
where we define the 14-component vector $\mbX=(V;\mbM;\mbN)$ and $\mbdW(t)$ represents a Wiener (Brownian motion) process.
In the governing Langevin equation \eqref{eq:langevin-rescaled}, the stochastic forcing components in $\mathcal{G}\,d\mbW$ are implicitly scaled by factors proportional to $\sqrt{\epsilon}$, with effective numbers of $\mtot=\mref/\epsilon$ sodium and $\ntot=\nref/\epsilon$ potassium channels.
For comparison, in their study of different Langevin models, Goldwyn and Shea-Brown considered a patch of excitable membrane containing $\mref=6000$ sodium channels and $\nref=1800$ potassium channels \cite{GoldwynSheaBrown2011PLoSComputBiol}.   

The deterministic part of the evolution equation $\mbF(\mbX)\,=\left[\frac{dV}{dt};\frac{d\mbM}{dt};\frac{d\mbN}{dt}\right]$ is the same as the mean-field dynamics, given by
\begin{eqnarray}
C\frac{dV}{dt}&=&-\bar{g}_{\text{Na}}M_8(V-V_{\text{Na}})-\bar{g}_{\text{K}}N_5(V-V_\text{K})-g_\text{L}(V-V_\text{L})+I_\text{app}, \label{eq:RTR_dM}\\
\frac{d\mbM}{dt}&=&A_\text{Na}(V)\mbM, \label{14dhh2}\\
\frac{d\mbN}{dt}&=&A_\text{K}(V)\mbN.\label{14dhh3}
\end{eqnarray}
Here, $C$ ($\mu F/cm^2$) is the capacitance, $I_\text{app}$ ($nA/cm^2$) is the applied current, the maximal conductance is $\bar{g}_\text{ion}$ ($mS/cm^2$),  $V_\text{ion}$ ($mV$) is the associated reversal potential, $\text{ for ion }\in\{\text{Na},\text{K}\}$, and the ohmic leak current is $g_\text{leak}(V-V_\text{leak})$. 
The voltage-dependent drift matrices $A_\text{Na}$ and $A_\text{K}$  given by
 {\footnotesize{\begin{equation}
 A_\text{Na}(V) =\begin{bmatrix}
A_\text{Na}(1) & \beta_m&0 &0 &\beta_h&0&0&0\\
3\alpha_m&A_\text{Na}(2)&2\beta_m&0&0&\beta_h&0&0\\
0&2\alpha_m&A_\text{Na}(3) &3\beta_m&0&0&\beta_h&0 \\
0&0&\alpha_m&A_\text{Na}(4)&0&0&0&\beta_h \\
\alpha_h&0&0&0&A_\text{Na}(5)&\beta_m&0&0\\
0&\alpha_h&0&0&3\alpha_m&A_\text{Na}(6)&2\beta_m&0\\
0&0&\alpha_h&0&0&2\alpha_m&A_\text{Na}(7)&3\beta_m\\
0&0&0&\alpha_h&0&0&\alpha_m&A_\text{Na}(8)\\
\end{bmatrix},
\label{matrix:ANa}
\end{equation}

\begin{equation}
   A_\text{K}(V) =\begin{bmatrix}
   A_\text{K}(1)& \beta_n(V)             & 0                & 0                  & 0\\
   4\alpha_n(V)& A_\text{K}(2)&   2\beta_n(V)              & 0&                   0\\
    0&        3\alpha_n(V)&        A_\text{K}(3)& 3\beta_n(V)&          0\\
    0&        0&               2\alpha_n(V)&          A_\text{K}(4)& 4\beta_n(V)\\
    0&        0&               0&                 \alpha_n(V)&          A_\text{K}(5)
\end{bmatrix},
\label{Matrix:AK}
\end{equation}}}
with diagonal elements $$A_\text{ion}(i)=-\sum_{j\::\:j\neq i}A_\text{ion}(j,i),\text{ for ion }\in\{\text{Na},\text{K}\}.$$
The state-dependent noise coefficient matrix $\mathcal{G}$ is $14\times28$ and can be written as 
\[
\mathcal{G} =\left(\begin{array}{@{}c|c@{}}
  \bigzero_{1\times 20} & 
  \bigzero_{1\times 8}  \\
\hline
  \begin{matrix}
  S_\text{Na}
  \end{matrix} &\bigzero_{8\times 8} \\
  \hline
  \bigzero_{5\times 20} & \begin{matrix}
  S_\text{K}
  \end{matrix}
\end{array}\right).
\]
When simulating \eqref{eq:langevin-rescaled} we use free boundary conditions for the gating variables $M_{ij}$ and $N_i$ \cite{OrioSoudry2012PLoS1,Pezo2014Frontiers,PuThomas2020NECO}.  
With free boundaries, some gating variables may make small, rare  excursions into negative values. 
To avoid inconsistencies we therefore use the absolute values $|M_{ij}|$ and $|N_i|$ when calculating the edge fluxes needed to construct the matrix $\mathcal{G}$.  
The resulting boundary effects are insignificant for all system sizes considered \cite{OrioSoudry2012PLoS1}.   

All parameters, transition rates, and the coefficient matrices $S_\text{K}$  and $S_\text{Na}$ are given in Appendix \ref{app:notations}.

\subsection{Stochastic Shielding}
The stochastic shielding (SS) approximation was first introduced by
Schmandt and Gal\'{a}n  as an efficient numerical procedure to simulate  Markov processes using only those transitions associated with observable states  \cite{SchmandtGalan2012PRL}.
Analysis of the SS approximation leads to an \emph{edge importance measure} \cite{SchmidtThomas2014JMN} that quantifies the contribution of the fluctuations arising along each directed edge to the variance of channel state occupancy (and hence the variance of the transmembrane current) under voltage clamp.
The stochastic shielding method then amounts to simulating a stochastic conductance-based model using only the noise terms from the most important transitions.
While the original, heuristic implementation of stochastic shielding considered both current and voltage clamp scenarios \cite{SchmandtGalan2012PRL}, subsequent mathematical analysis of stochastic shielding considered only the constant voltage-clamp case \cite{SchmidtGalanThomas2018PLoSCB,SchmidtThomas2014JMN}.

In our previous work \cite{PuThomas2020NECO}, we extended the SS approximation to the current clamp case, where we numerically calculated the edge importance for all transitions in Fig.~\ref{plot:HHNaKgates}. 
Given the matrix $\mathcal{G}$ and a list of the ``most important" noise sources (columns of $\mathcal{G}$) the stochastic shielding approximation amounts to setting the columns excluded from the list equal to zero \cite{PuThomas2020NECO,SchmidtThomas2014JMN}.
Within the framework of stochastic shielding, we may ask how each column of $S_\text{K}$ and $S_\text{Na}$ contribute to the variability of stochastic trajectories generated by eq.~\eqref{eq:langevin-rescaled}.

In this paper, our main theorem gives a semi-analytical foundation for the edge-importance measure under current clamp in terms of contributions to ISI variance. 
In \S\ref{sec:Numerical}, we will apply the SS method to numerically test our theorem of the contribution from each edge to the ISI variability under current clamp. 

The next section defines first passage times and interspike intervals for general conductance-based models, which are fundamental to our subsequent analysis.

\subsection{First passage times and interspike intervals}

Reversal potentials $V_\text{ion}$ for physiological ions are typically confined to the  range $\pm 150 $mV. 
For the 4-D and the 14-D HH models, the reversal potentials for \K~and \Na~are -77mv and +50mv respectively \cite{ErmentroutTerman2010book}.  In Lemma \ref{Lemma:vmin_vmax}, we establish that the voltage for conductance-based model in eqn.~\eqref{eq:langevin-rescaled} is bounded.  
Therefore we can find a voltage range $[v_\text{min},v_\text{max}]$ that is forward invariant with probability 1, meaning that the probability of any sample path leaving the range $v_\text{min}\le V(t)\le v_\text{max}$ is zero.  
At the same time, the channel state distribution for any channel with $k$ states  is restricted to a $(k-1)$-dimensional simplex $\Delta^{k-1}\subset \R^k$, given by $y_1+\ldots+y_{k}=1, y_i\ge 0.$  
Therefore, the phase space of any conductance-based model of the form \eqref{eq:langevin-rescaled} may be restricted to a compact domain in finite dimensions.  
\begin{definition}\label{dfn:HHdomain}
We define \emph{the HH domain} $\mathcal{D}$ to be
\begin{equation}
 \mathcal{D}\defn [v_\text{min},v_\text{max}]\times\Delta^7\times \Delta^4
 \end{equation}
where $\Delta^7$ is the simplex occupied by the \Na channel states, and $\Delta^4$ is occupied by the \K channel states.
\end{definition}
  We thus represent the ``14-D" HH model in a reduced phase space of dimension 1+7+4=12.  

\begin{lemma} \label{Lemma:vmin_vmax}
For a conductance-based model of the form \eqref{eq:langevin-rescaled}, and for any fixed applied current $I_\text{app}$, there exist  upper and lower bounds $v_\text{max}$ and $v_\text{min}$ such that trajectories with initial voltage condition $v\in[v_\text{min},v_\text{max}]$ remain within this interval for all times $t>0$, with probability 1, regardless of the initial channel state, provided the gating variables satisfy $0\le M_{ij}\le 1$ and $0\le N_i\le 1$.
\end{lemma}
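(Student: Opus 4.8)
The plan is to exploit the fact that in the Langevin system \eqref{eq:langevin-rescaled} the noise enters only the channel-state coordinates: the first (voltage) row of the coefficient matrix $\mathcal{G}$ is identically zero. Consequently the voltage coordinate satisfies a \emph{random ordinary differential equation} with no diffusion term, namely \eqref{eq:RTR_dM}, whose only stochasticity appears through the (a.s.\ continuous, adapted) conducting-state fractions $M_8(t)$ and $N_5(t)$. Fixing a sample path $\omega$, I would first rewrite \eqref{eq:RTR_dM} in affine form
\[
C\,\dot V = -g_\text{tot}(t)\,V + I_\text{tot}(t), \qquad g_\text{tot}(t) = \bar g_\text{Na}M_8(t)+\bar g_\text{K}N_5(t)+g_\text{L},
\]
with $I_\text{tot}(t) = \bar g_\text{Na}M_8(t)V_\text{Na}+\bar g_\text{K}N_5(t)V_\text{K}+g_\text{L}V_\text{L}+I_\text{app}$. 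The hypothesis $0\le M_8,N_5\le 1$, together with $\bar g_\text{Na},\bar g_\text{K},g_\text{L}>0$, gives the uniform bound $0<g_\text{L}\le g_\text{tot}(t)\le \bar g_\text{Na}+\bar g_\text{K}+g_\text{L}$, which is the key ingredient.

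Next I would introduce the instantaneous equilibrium voltage $V_\infty(t) = I_\text{tot}(t)/g_\text{tot}(t)$ and bound it uniformly in $t$ and $\omega$. Since $V_\infty$ is a nonnegatively weighted average of $V_\text{Na},V_\text{K},V_\text{L}$ plus the perturbation $I_\text{app}/g_\text{tot}$, the lower bound on $g_\text{tot}$ yields
\[
V_* := \min(V_\text{Na},V_\text{K},V_\text{L}) - \frac{|I_\text{app}|}{g_\text{L}} \;\le\; V_\infty(t) \;\le\; \max(V_\text{Na},V_\text{K},V_\text{L}) + \frac{|I_\text{app}|}{g_\text{L}} =: V^*,
\]
with $V_*,V^*$ independent of the realization. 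I would then integrate the affine equation with the integrating factor $\mu(t)=\exp\!\big(\tfrac1C\int_0^t g_\text{tot}(s)\,ds\big)\ge 1$, using $\mu'(t)=\mu(t)g_\text{tot}(t)/C$ and $I_\text{tot}/C = (g_\text{tot}/C)V_\infty$, to obtain
\[
V(t) = \frac{V(0)}{\mu(t)} + \frac{1}{\mu(t)}\int_0^t \mu'(s)\,V_\infty(s)\,ds.
\]
Because $\int_0^t \mu'(s)\,ds = \mu(t)-1$, this exhibits $V(t)$ as a genuine convex combination of $V(0)$ and the values $\{V_\infty(s)\}_{s\le t}$; bounding the latter between $V_*$ and $V^*$ gives $\min(V(0),V_*)\le V(t)\le \max(V(0),V^*)$ for all $t\ge 0$.

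Finally, choosing $v_\text{max}=V^*$ and $v_\text{min}=V_*$ closes the argument: any initial voltage $v\in[v_\text{min},v_\text{max}]$ forces $V(t)\in[v_\text{min},v_\text{max}]$ for all $t>0$. Since this estimate is purely pathwise and the bounds $V_*,V^*$ hold for \emph{every} admissible channel trajectory (the only hypothesis used is $M_{ij},N_i\in[0,1]$), the conclusion holds simultaneously for almost every $\omega$, i.e.\ with probability 1. The main conceptual step — and the only place where care is needed — is the opening observation that the voltage carries no Brownian forcing, so the problem collapses to a deterministic comparison/invariance estimate applied pathwise; everything afterward is the standard integrating-factor argument for a scalar affine ODE with a bounded, strictly positive relaxation rate. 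I would note in passing that the invariance is robust: any $v_\text{max}\ge V^*$ and $v_\text{min}\le V_*$ suffice, and the interval shrinks toward $[\min(V_\text{Na},V_\text{K},V_\text{L}),\max(V_\text{Na},V_\text{K},V_\text{L})]$ as $I_\text{app}\to 0$.
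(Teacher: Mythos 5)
Your proof is correct, and it rests on exactly the same key observation as the paper's: the voltage row of $\mathcal{G}$ vanishes, so pathwise the voltage obeys a deterministic scalar affine ODE whose coefficients are controlled by the hypothesis $0\le M_{ij},N_i\le 1$ together with the positivity of the conductances. Where you diverge is in how you finish. The paper argues locally at the endpoints of the candidate interval: for $V$ below $V_1=\min(V_\text{Na},V_\text{K},V_\text{leak})$ every ionic term contributes nonnegatively to $dV/dt$, so $C\,dV/dt\ge I_\text{app}-g_\text{leak}(V-V_1)$, which is strictly positive once $V$ drops below $\min\{V_1,\,V_1+I_\text{app}/g_\text{leak}\}$; the symmetric estimate at the top gives $v_\text{max}$, and forward invariance follows from the sign of the drift at the boundary. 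You instead solve the affine equation globally with an integrating factor and exhibit $V(t)$ as a convex combination of $V(0)$ and the past values of the instantaneous equilibrium $V_\infty(s)$, which you bound uniformly. The two routes yield slightly different (both valid) constants --- your $[V_*,V^*]$ is symmetric in $\pm|I_\text{app}|/g_\text{L}$ and hence a bit more conservative than the paper's one-sided endpoints --- and your representation buys a little extra, namely contraction of $V$ toward the range of $V_\infty$ even from initial conditions outside the interval, at the cost of writing down the explicit solution rather than merely checking the drift sign at the boundary. Either argument proves the lemma.
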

\begin{proof}
See App.~\ref{append_Lemma_vmin_vmax}. 
\end{proof}

\begin{remark}
\label{rem:alpha-is-bounded}
Lemma \ref{Lemma:vmin_vmax} implies that the per capita transition rates along a finite collection of edges, $\{\alpha_k(v)\}_{k=1}^K$ are bounded above by a constant $\alpha_\text{max}$, as $v$ ranges over $v_\text{min}\le v \le v_\text{max}$.  This fact will help establish Theorem \ref{theorem:noise}.
\end{remark}

\subsubsubsection{Interspike Intervals and First Passage Times}
\begin{figure}
    \centering
\includegraphics[width=0.8\textwidth]{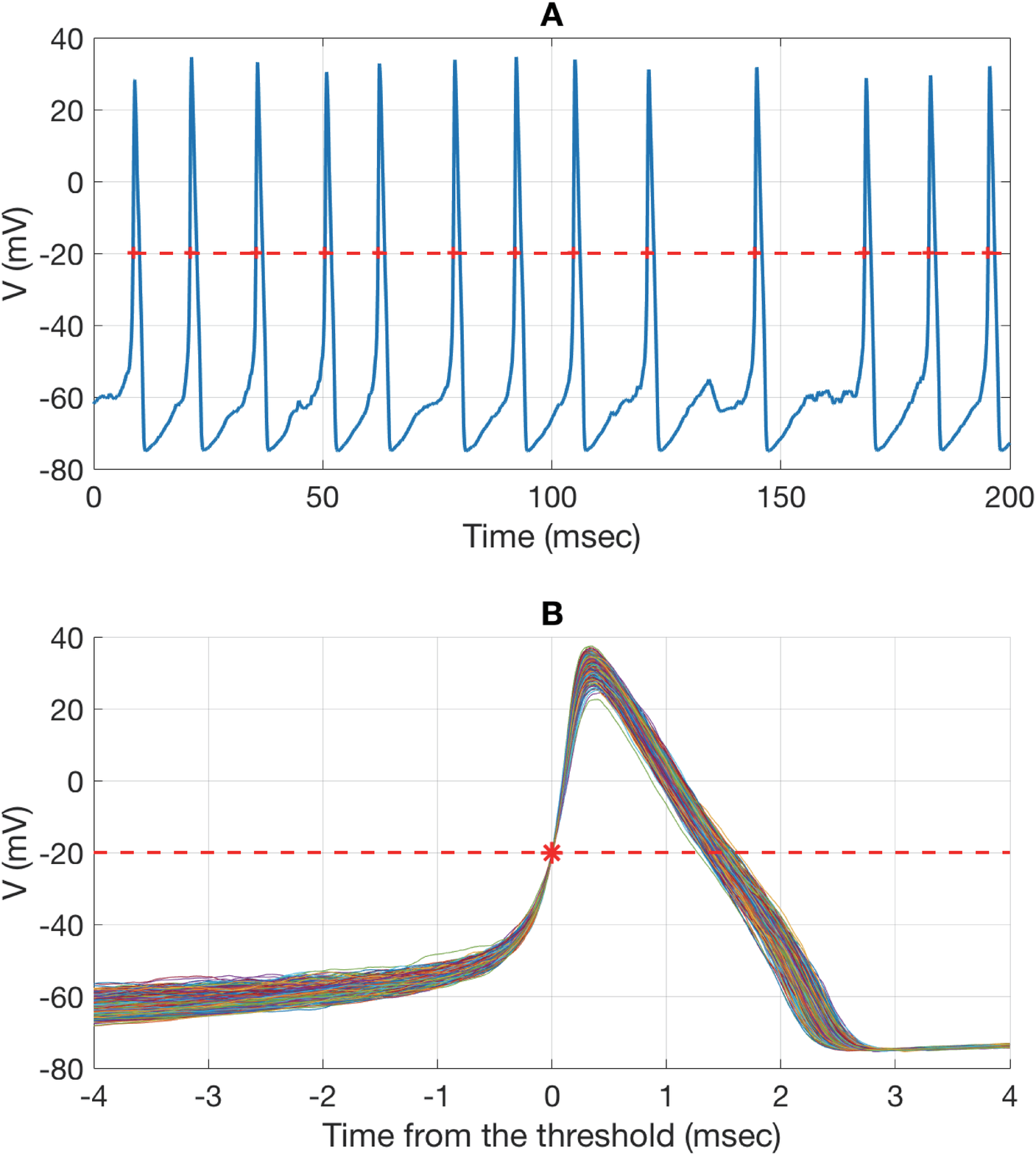}
    \caption{ {Example voltage trace for the 14-D stochastic HH model (eqn~\eqref{eq:langevin-rescaled}). \textbf{(A)} Voltage trace generated by the full 14-D stochastic HH model with $I_\text{app}$=-10 nA; for other parameters see \S\ref{app:notations}.  \textbf{(B)} Ensemble of voltage traces constructed by aligning traces with a voltage upcrossing at $V=-20$ mV (blue star) for 651 cycles. }}\label{fig:V_trace}
\end{figure}

Figure \ref{fig:V_trace} shows a voltage trajectory generated by the 14-D stochastic HH model, under current clamp, with injected current in the  range supporting steady firing.  
The regular periodicity of the deterministic model vanishes in this case.  Nevertheless, the voltage generates spikes, which allows us to introduce a well defined  series of spike times and inter-spike intervals (ISIs). For example, we may  select a reference voltage such as $v=-20$ mV, with the property that within a neighborhood of this voltage, trajectories have strictly positive or strictly negative derivatives ($dV/dt$) with high probability. 

In \cite{Rowat2007NECO}, they suggested that  the stochastic (Langevin) 4-D HH model has a unique invariant stationary joint  density for the voltage and gating variables, as well as producing a stationary point process of spike times.  The ensemble of trajectories may be visualized by aligning the voltage spikes  (Figure \ref{fig:V_trace}b), and illustrates that each trace is either rapidly increasing or else rapidly decreasing as it passes $v=-20$ mV.

In order to give a precise definition of the interspike interval, on which we can base a first-passage time analysis, we will  consider two types of Poincar\'{e} section of the fourteen-dimensional phase space: the ``nullcline'' surface associated with the voltage variable,
\begin{equation}
\mathcal{V}^0=\{(v,\mbm,\mbn)\in\mathcal{D}\given f(v,\mbm,\mbn)=0 \},
\end{equation}
where the rate of change of voltage is instantaneously zero, and an iso-voltage
 sections of the form
\begin{equation}
\mathcal{S}^u=\{(v,\mbm,\mbn)\in\mathcal{D}\given v=u \}.    
\end{equation}
(In \S\ref{sssec:MRTphase} we will define a third type of Poincar\'{e} section, namely isochrons of the mean--return-time function $T(v,\mbn)$ \cite{CaoLindnerThomas2020SIAP}.) 
Figure \ref{fig:dvdt} illustrates the projection of $\mathcal{V}^0$ (green horizontal line) and $\mathcal{S}^{u}$ for $u\in\{-40,10\}$  (red vertical lines) onto the $(V,dV/dt)$ plane.

For any voltage $u$ we can partition the voltage-slice  $\mathcal{S}^u$ into three disjoint components $\mathcal{S}^u=\mathcal{S}^u_0\bigsqcup\mathcal{S}^u_+\bigsqcup\mathcal{S}^u_-$, defined as follows:

\begin{definition}\label{def:Spm}
Given the stochastic differential equations \eqref{eq:langevin-rescaled} defined on the HH domain $\mathcal{D}$, and for a given voltage $u$, the ``null" surface, $\mathcal{S}^{u}_0$ is defined as
$$\mathcal{S}^{u}_0\defn \mathcal{S}^u\cap\mathcal{V}^0=\left\{(v,\mbm,\mbn)\in \mathcal{D} \given v=u\  \& \   f(v,\mbm, \mbn)=0\right\},$$
the ``inward current" surface, $\mathcal{S}^{u}_+$ is defined as
$$\mathcal{S}^{u}_+\defn \left\{(v,\mbm,\mbn)\in \mathcal{D} \given v=u\  \& \    f(v,\mbm,\mbn)>0\right\},$$
and the ``outward current" surface is defined as
$$\mathcal{S}^{u}_-\defn \left\{(v,\mbm,\mbn)\in \mathcal{D} \given v=u\  \& \      f(v,\mbm,\mbn)<0\right\}.$$
\end{definition}
Figure \ref{fig:dvdt} plots $dV/dt$ versus $V$ for roughly 600 cycles, and shows that for certain values of $v$, the density of  trajectories in a neighborhood of $\mathcal{V}^0$ is very small for a finite voltage range (here shown as $-40$ to $+10$ mV).  
Indeed for any $u$, the intersection of the null set $\mathcal{S}^u_0$ has measure zero relative to the uniform measure on $\mathcal{S}^u$, and the probability of finding a trajectory at precisely $V=u$ and $dV/dt=0$ is zero.
From this observation, and because $dV/dt$ is conditionally deterministic, given $\mbn$,
it follows that a trajectory starting from $\mbx\in \mathcal{S}^{u}_+$ will necessarily cross $\mathcal{S}^{u}_-$ before crossing $\mathcal{S}^{u}_+$ again (with probability one).

\paragraph{First-Passage Times} 
Based on this observation, we can give a formal definition of the first passage time as follows.
\begin{figure}
    \centering
          \includegraphics[width=.75\textwidth]{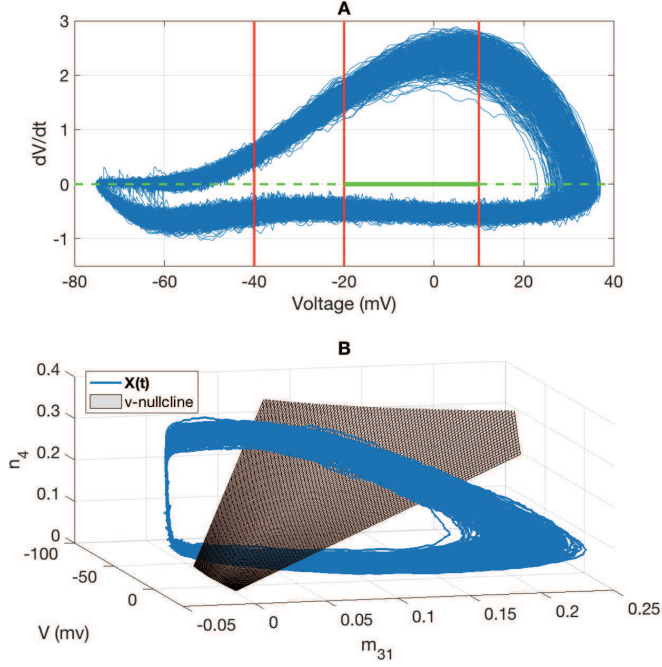}
    \vspace*{-2mm}
    \caption{{Sample trace of $\mbX(t)$ over 1000 cycles.
    \textbf{A:} View of $\mbX(t)$ (blue trace) in $(V,dV/dt)$ phase plane.  Red vertical line: projections of voltage slices $\mathcal{S}^{-40}$, $\mathcal{S}^{-20}$ and $\mathcal{S}^{+10}$.  Green horizontal line: projection of $V$-nullcline $\mathcal{V}^0$; solid portion corresponds to voltage range in second panel.
    \textbf{B:}
    Projection of $\mbX(t)$ on three coordinates $(V,m_{31},n_4)$. 
    Gray surface: subset of $v$-nullcline with $-40\text{ mV }\le v\le +10$ mV.  For this voltage range,  trajectories remain a finite distance away from $\mathcal{V}^0$ with high probability.
    }}\label{fig:dvdt}
\end{figure}

\begin{figure}
    \centering
\includegraphics[width=0.75\textwidth]{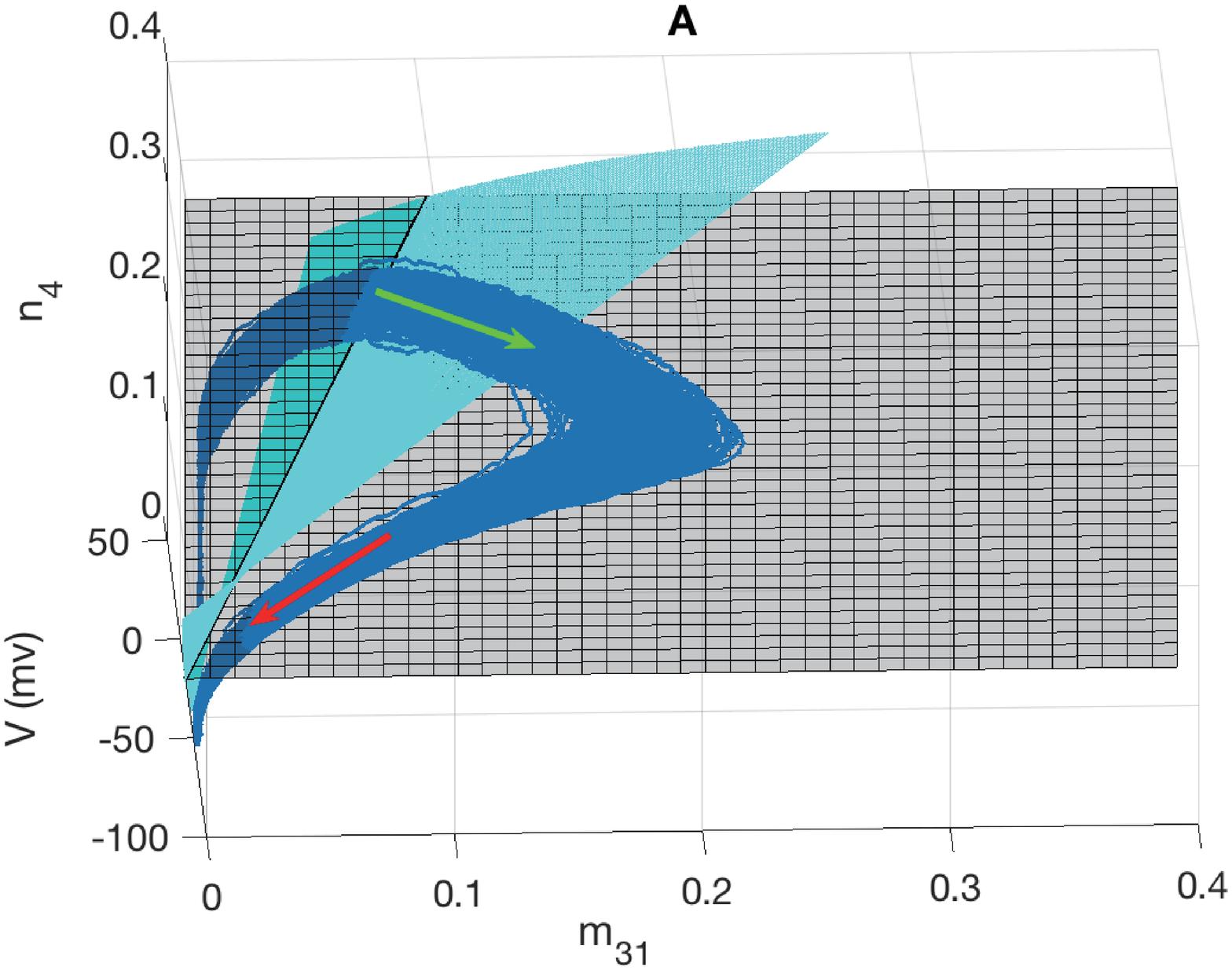}
\includegraphics[width=0.75\textwidth]{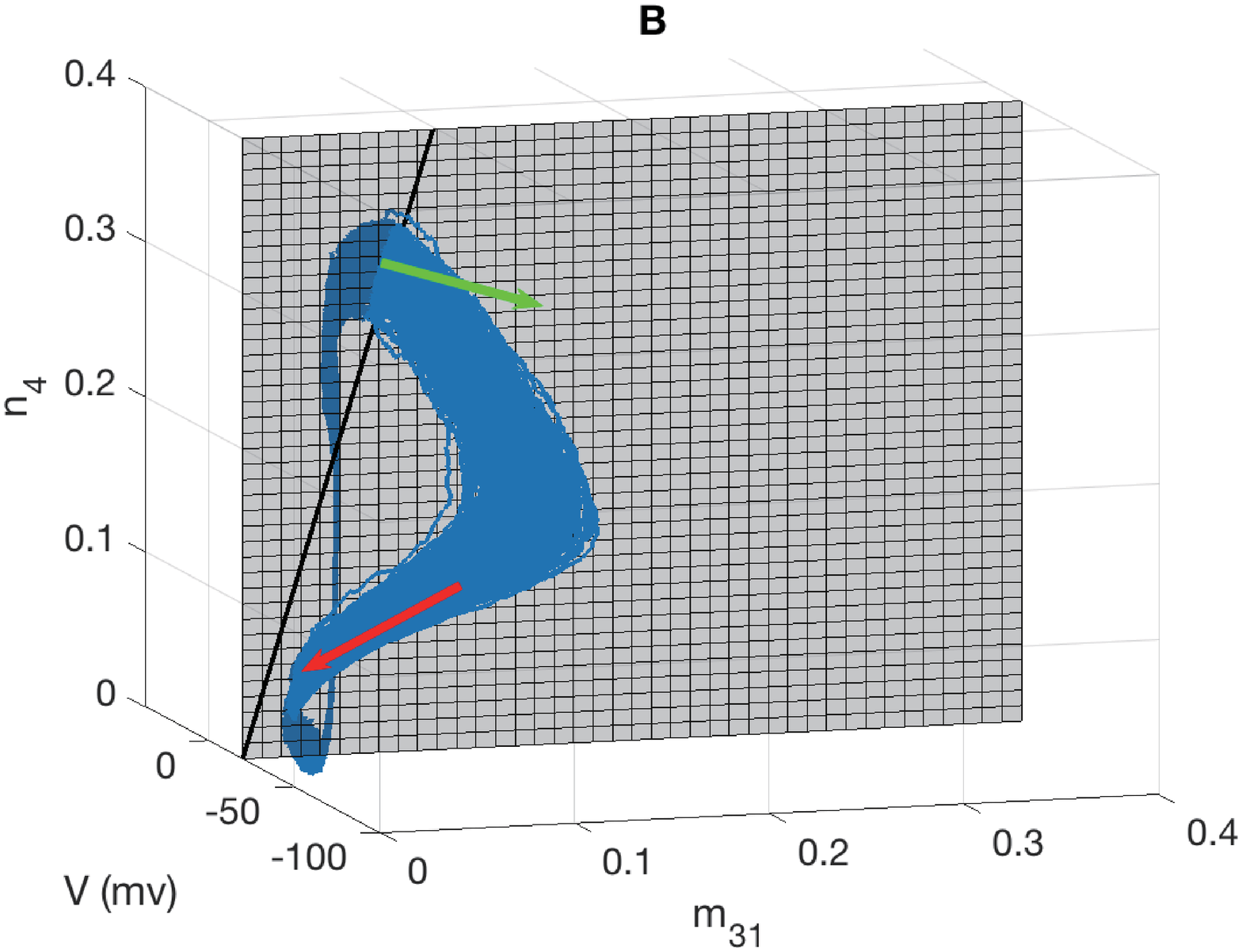}
    \caption{{Intersections of a trajectory (blue trace) with a voltage slice ($\mathcal{S}^{-20}$, grey surface) and $V$-nullcline ($\mathcal{V}^0$, cyan surface).
    \textbf{A} 
    Trajectory $\mbX(t)$ crosses $\mathcal{S}^{-20}_+$ with  increasing voltage component (green arrow). Subsequently, the trajectory crosses $\mathcal{S}^{-20}_-$ with  decreasing voltage component (red arrow).
    The trajectory $X(t)$ does not intersect with the null space for voltage in the range of $[-40,10]$mV with probability 1.
    \textbf{B}
    A special case for A with the null component $\mathcal{S}^{-20}_0$ (black diagonal line) indicated for $v=-20$mV.  The intersection of the stochastic trajectory and $v=-20$mV is partitioned into an inward component $\mathcal{S}^{-20}_+$ (green arrow shows trajectory crossing with $dV/dt>0$) and an outward component $\mathcal{S}^{-20}_-$ (red arrow shows trajectory crossing with $dV/dt<0$). Note that the null component $\mathcal{S}^{-20}_0$ does not intersect with the trajectory with probability one. }}\label{fig:phase}
\end{figure}

\begin{definition}\label{def:FPT}
 Given a section $\mathcal{S}\subset \mathcal{D}$, we define the first passage time (FPT) from a point $\mbx\in\mathcal{D}$ to $\mathcal{S}$, for a stochastic conductance-based model as
\begin{equation}
    \tau(\mbx,\mathcal{S})\defn\inf \{t>0\given \mbX(t)\in\mathcal{S}\,\&\,\mbX(0)=\mbx\}. 
\end{equation}
\end{definition}
Note that, more generally, we can think of $\tau$ as $\tau(\mbx,\mathcal{S},\omega)$, where $\omega$ is a sample from the underlying Langevin process sample space, $\omega\in\Omega$.\footnote{For the $14\times28$D Langevin Hodgkin-Huxley model, $\Omega$ may be thought of as the space of continuous vector functions on $[0,\infty)$ with 28 components -- one for each independent noise source.}
For economy of notation we usually suppress $\omega$, and may also suppress $\mathcal{S}$, or  $\mbx$ when these are clear from context.

In the theory of stochastic processes a \emph{stopping time}, $\tau$, is any random time such that the event $\{\tau\le t\}$ is part of the $\sigma$-algebra generated by the filtration $\mathcal{F}_t$ of the stochastic process from time $0$ through time $t$.  That is, one can determine whether the event defining $\tau$ has occurred or not by observing the process for times up to and including $t$ (see \cite{Oksendal2007}, \S 7.2, for further details). 
\begin{remark}
\label{rem:tau-is-a-stopping-time}
Given any section $\mathcal{S}\subset\mathcal{D}$ and any point $\mbx\in\mathcal{D}$, the first passage time $\tau(\mbx,\mathcal{S})$ is a stopping time.  
This fact will play a critical role in the proof of our main theorem.  
\end{remark}

As Figure \ref{fig:dvdt} suggests, for $-40\le v \le +10$ mV, the probability of finding trajectories in an open \emph{neighborhood} of $\mathcal{S}^v_0$ can be made arbitrarily small by making the neighborhood around $\mathcal{S}_0^v$ sufficiently small. 
This observation has two important consequences. 
First, because the probability of being near the nullcline $\mathcal{S}^v_0$ is vanishingly small, interspike intervals are well defined (cf.~Def.~\ref{def:ISI}, below), even for finite precision numerical simulation and trajectory analysis.  
In addition, this observation lets us surround the nullcline with a small cylindrical open set, through which trajectories are unlikely to pass.  This cylinder-shaped buffer will play a role in defining the mean--return-time phase in \S  \ref{sssec:MRTphase}.

Moreover, as illustrated in Figure \ref{fig:phase}, when $V=-40$, the stochastic trajectory $x$ intersects $\mathcal{S}^{-40}$ at two points within each full cycle, where one is in $\mathcal{S}^{-40}_+$ and one in $\mathcal{S}^{-40}_-$.  
In addition, the trajectory crosses $\mathcal{S}^{-40}_-$ before it crosses $\mathcal{S}^{-40}_+$ again.  
This is a particular feature for conductance-based models in which $dV/dt$ is conditionally deterministic, i.e.~the model includes no current noise.\footnote{In this paper we focus on a Langevin equation, rather than models with discrete channel noise.  
Therefore, our trajectories are diffusions, that have  continuous sample paths (with probability one).  
Therefore, the FPT $\tau(\mbx,\mathcal{S})$ is well defined. For discrete channel-noise models, a slightly modified definition would be required.}

\begin{definition}\label{def:MFPT}
Given any set $\mathcal{S}\subset\mathcal{D}$ (for instance, a voltage-section) and a point $\mbx\in\mathcal{D}$, 
the mean first passage time (MFPT) from $\mbx$ to $\mathcal{S}$, \begin{equation}
        T(\mbx,\mathcal{S})\defn \E[\tau(\mbx,\mathcal{S})],
    \end{equation} and the second moment of the first passage time is defined as
    \begin{equation}
    \label{eq:Second_moment_for_X_and_section}
        S(\mbx,\mathcal{S})\defn \E\left[\tau(\mbx,\mathcal{S})^2\right].
    \end{equation}
\end{definition}

\paragraph{Interspike Intervals} 
Starting from $\mbx\in\mathcal{S}^{v_0}_+$, at time $t=0$, we can identify the sequence of $(\tau,\mbx)$ pairs of crossing times and crossing locations as
 \begin{equation}\label{eq:define_crossing_times}
 \begin{split}
    &(\tau^u_0=0,\mbx^u_0=\mbx),(\tau^d_1,\mbx^d_1),(\tau^u_1,\mbx^u_1),\ldots,(\tau^d_k,\mbx^d_k),(\tau^u_k,\mbx^u_k),\ldots ,\\
    \text{with }& 0=\tau_0^u<\tau_1^d<\tau_1^u<\tau_2^d<\tau_2^u<\ldots<\tau_k^d<\tau_k^u,\ldots 
 \end{split}
\end{equation}
    where $\tau^d_k=\inf\{t>\tau^u_{k-1}\given \mbx\in\mathcal{S}^{v_0}_- \}$ is the $k$th down-crossing time,  $\mbx^d_k\in\mathcal{S}^{v_0}_-$ is the $k$th down-crossing location,  $\tau^u_{k}=\inf\{t>\tau^d_{k}\given \mbx\in\mathcal{S}^{v_0}_+ \}$ is the $k$th up-crossing time, and $\mbx^u_k\in \mathcal{S}^{v_0}_+$ is the $k$th up-crossing location, for all $k\in \mathbb{N}^+$.

Under constant applied current, the HH system has a unique stationary distribution with respect to which the sequence of crossing times and locations have well-defined probability distributions \cite{Rowat2007NECO}.
We define the moments of the interspike interval distribution with respect to this underlying stationary probability distribution.

\begin{definition}\label{def:ISI}
Given a sequence of up- and down-crossings, relative to a reference voltage $v_0$ as above,  
the $k$th interspike interval (ISI), $I_k$ (in milliseconds), of the stochastic conductance-based model is a random variable that is defined as
\begin{equation}
    I_k\defn\tau^u_{k+1}-\tau^u_k
\end{equation}
where $\tau^u_k$ is the $k$th up-crossing time.
The mean ISI is defined as
\begin{equation}
        I\defn \E[I_k]
    \end{equation} and the second moment of the ISI is defined as
    \begin{equation}
        H\defn \E\left[I_k^2\right]
    \end{equation}
    The variance of the ISI is defined as
     \begin{equation}
     \sigma^2_\text{ISI}\defn \E\left[(I-I_k)^2\right],
     \end{equation}
     where $k=1,2,\cdots$.
\end{definition}

It follows immediately that $\sigma_\text{ISI}^2=H-I^2.$

\subsection{Asymptotic phase and infinitesimal phase response curve}
\label{ssec:iPRC}
\begin{figure}[htbp]
\vspace*{-4mm}
    \centering
    \includegraphics[width=10cm]{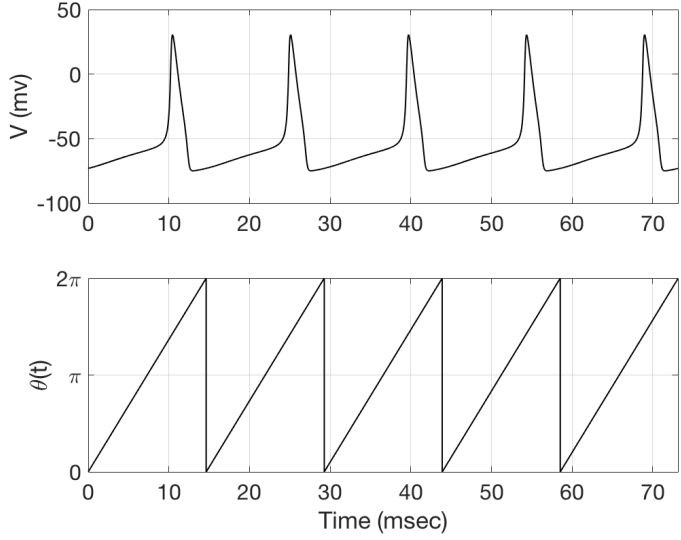}
    \caption{ Sample trace on the limit cycle and its corresponding phase function $\theta(t)$. \textbf{Top:} voltage trace for the deterministic system $d\mbx=F(\mbx)\,dt$ showing a period of $T_0\approx 14.63$ ms. 
    \textbf{Bottom:} The phase function of time saceled from $[0,2\pi)$.}\label{fig:phase_ODE}
\end{figure}
Given parameters in App.~\ref{app:notations} with an applied current $I_\text{app}=10$ nA, the deterministic HH model,
\begin{equation}\label{eq:HHODE}
    \frac{d\mbx}{dt}=F(\mbx)
\end{equation}
 fires periodically with a period $T_0\approx14.63$ msec, as shown in Fig.~\ref{fig:phase_ODE}. 
We assume that the deterministic model has an asymptotically stable limit cycle, $\gamma(t)=\gamma(t+T_0)$. 
The phase of the model at time $t$ can be defined as \cite{Schwemmer2012Springer}
\begin{equation}
    \theta(t)=\frac{(t+T_0\frac{\varphi}{2\pi})\, \text{mod}\, T_0}{T_0}\times 2\pi,
\end{equation}
where mod is the module operation, and $\theta(t)=0$ sets the spike threshold for the 
model. 
The constant $\varphi\in[0,2\pi]$ is the relative phase determined by the initial condition, and there is a one-to-one map between each point on the limit cycle and the phase.
In general, the phase  can be scaled to any constant interval;  popular choices include $[0,1)$, $[0,2\pi)$, and $[0,T)$. Here we take  $\theta\in[0,2\pi)$ (see Fig.~\ref{fig:phase_ODE}).

Winfree and Guckenheimer extended the definition of phase from the limit cycle to the basin of attraction, which laid the foundation for the asymptotic phase function $\phi(\mbx)$ \cite{Guckenheimer1975JMathBiol,Winfree2000,Winfree1974JMB}.
For the system in Eqn.~\eqref{eq:HHODE}, let $\mathbf{x}(0)$ and $\mathbf{y}(0)$ be two initial conditions, one on the limit cycle and one in the basin of attraction, respectively. 
Denote the phase associated to $\mathbf{x}(0)$ as $\theta_0(t)$.
If the solutions  $\mathbf{x}(t)$ and $\mathbf{y}(t)$ satisfy
$$\lim_{t\to \infty}|\mathbf{x}(t)-\mathbf{y}(t)|=0,\ \text{i.e.}\ \lim_{t\to \infty}|\phi(\mathbf{y}(t))-\theta_0(t)|=0,$$
then $\mathbf{y}(0)$ has  asymptotic phase $\theta_0$.
The set of all points sharing the same asymptotic phase comprises  an \emph{isochron}, a level set of $\phi(\mbx)$. 
We also refer to such a set of points as an \emph{iso-phase surface}  \cite{SchwabedalPikovsky2013PRL}. 
By construction, the asymptotic phase function $\phi(\mbx)$ coincides with the oscillator phase $\theta(t)$ on the limit cycle, i.e.~$\theta(t)=\phi(\gamma(t))$. We will assume that $\phi(\mbx)$ is twice-differentiable within the basin of attraction.

The phase response curve (PRC) is defined as the change in phase of an oscillating system in response to a given perturbations.
If the original phase is defined as $\theta_b$ and the phase after perturbation as $\theta_a$, then the PRC is the shift in phase 
$$\Delta(\theta)=\theta_a-\theta_b.$$
In the limit of small instantaneous perturbations, the PRC may be approximated by the  \emph{infinitesimal phase response curve} (iPRC) \cite{Schwemmer2012Springer,Winfree2000}. 
For a deterministic limit cycle, the iPRC $\mbZ(t)$ obeys the adjoint equation \cite{BrownMoehlisHolmes2004NeComp} 
\begin{align}
\label{eq:iPRC1}
&\frac{d\mbZ}{dt}=-\mathcal{J}(\gamma(t))^\intercal\mbZ,\\
&\mbZ(0)=\mbZ(T_0),\\
&\mbZ(0)^\intercal \mbF(\gamma(0))=1 
\label{eq:iPRC3}
\end{align}
where $T_0$ is the period of the deterministic limit cycle, $\gamma(t)$ is the periodic limit cycle trajectory (for the HH equations \eqref{eq:HHODE}, $\gamma(t) \in \R^{14}$) and $\mathcal{J}(t)=D\mbF(\gamma(t))$ is the Jacobian of $\mbF$ evaluated along the limit cycle.
The iPRC $\mbZ(t)$ is proportional to the gradient of the phase function $\phi(\mbx)$ evaluated on the limit cycle.  
For any point $\mbx$ in the limit cycle's basin of attraction, we can  define a timing sensitivity function $\tilde{\mbZ}(\mbx)\defn \frac{T_0}{2\pi} \nabla_\mbx\phi(\mbx)$. 
For the limit cycle trajectory $\gamma(t)$, we have  $\mbZ(t)=\tilde{\mbZ}(\gamma(t))$.  The first component of $\mbZ$, for example, has units of msec/mv, or change in time per change in voltage.  

\subsection{Small-noise expansions}\label{subsec:small_noise_expan}

Given the scaling of the noise coefficients with system size, $\epsilon\propto 1/\Omega$ (cf.~p.~\pageref{page:epsilon_propto_N}), the larger the system size, the smaller the effective noise level.  
For sufficiently small values of $\epsilon$, the solutions to eq.~\eqref{eq:langevin-rescaled} remain close to the determinstic limit cycle; the (stochastic) interspike intervals will remain close to the deterministic limit cycle period $\tbar_0$.
If $\mbX(t)$ is a trajectory of \eqref{eq:langevin-rescaled}, and $\phi(\mbx)$ is any twice-differentiable function, then Ito's lemma  \cite{Oksendal2007} gives an expression for the increment of $\phi$ during a time increment $dt$, beginning from state $\mbX$:
\begin{align}
\label{eq:ItoLemma}
d\phi(\mbX(t))&=(\nabla\phi(\mbX))\cdot d\mbX+\frac\epsilon 2\sum_{ij}\frac{\partial^2\phi(\mbX)}{\partial x_i\partial x_j}\,dt\\
&=\left( \mbF(\mbX)\cdot\nabla\phi(\mbX)+\frac\epsilon 2 \sum_{ij}\frac{\partial^2\phi(\mbX)}{\partial x_i\partial x_j} \right)\,dt+\sqrt{\epsilon}\left(\nabla\phi(\mbX)\right)^\intercal\mathcal{G}(\mbX)\,d\mbW
\nonumber\\
\label{eq:Ldagger}
&=\LL^\dagger[\phi(\mbX)]\,dt+\sqrt{\epsilon}\left(\nabla\phi(\mbX)\right)^\intercal\mathcal{G}(\mbX)\,d\mbW
\end{align}
up to terms of order $dt$.
The operator $\LL^\dagger$ defined by \eqref{eq:ItoLemma}-\eqref{eq:Ldagger}
is the formal adjoint of the Fokker-Planck or Kolmogorov operator \cite{Risken1996}, also known as the generator of the Markov process \cite{Oksendal2007}, or the Koopman operator \cite{LasotaMackey94}.  

Dynkin's formula, which we will use to prove our main result, is closely related to equation \eqref{eq:Ldagger}.  
Let $\mbx\in\mathcal{D}$ and let $\E^\mbx$ denote the probability law for the ensemble of stochastic trajectories beginning at $\mbx$.  
Dynkin's theorem (\cite{Oksendal2007}, \S 7.4)  states that if $\phi$ is a twice-differentiable function on $\mathcal{D}$, and if $\tau$ is any stopping time (cf.~Remark \ref{rem:tau-is-a-stopping-time}) such that $\E^\mbx[\tau]<\infty$, then
\begin{equation}
\label{eq:Dynkin-formula}
    \E^\mbx[\phi(\mbX(\tau)]=\phi(\mbx)+\E^\mbx\left[\int_{0}^\tau \LL^\dagger[\phi(\mbX(s)]\,ds\right].
\end{equation}

\subsection{Iso-phase Sections}\label{sssec:MRTphase} 
For the deterministic model, the isochrons form a system of Poincar\'e sections $\mathcal{S}_\varphi,\,\varphi\in[0,2\pi],$ each with a constant return time equal to the oscillator period $T_0$.
When the system is perturbed by noise, $\epsilon>0$ in \eqref{eq:langevin-rescaled},  we consider a set of ``iso-phase sections"  based on a mean--return-time (MRT)  construction, first proposed by
\cite{SchwabedalPikovsky2013PRL} and rigorously analyzed by \cite{CaoLindnerThomas2020SIAP}.  
As shown in  \cite{CaoLindnerThomas2020SIAP}, the MRT iso-phase surfaces $\mathcal{S}$ are the level sets of a function $T_\epsilon(\mbx)$ satisfying the MRT property.  Namely, if $\mathcal{S}$ is an iso-phase section, then the mean time taken to return to $\mathcal{S}$, starting from any $\mbx\in\mathcal{S}$, after one full rotation, is equal to the mean period, $\tbar_\epsilon$. 

The construction in \cite{CaoLindnerThomas2020SIAP} requires that the Langevin equation \eqref{eq:langevin-rescaled} be defined on a domain with the topology of an $n$-dimensional cylinder, because finding the MRT function $T_\epsilon(\mbx)$ involves specifying an arbitrary ``cut" from the outer to the inner boundary of the cylinder.  
Conductance-based models in the steady-firing regime, where the mean-field equations support a stable limit cycle, can be well approximated by cylindrical domains. In particular, their variables are restricted to a compact range, and there is typically a ``hole" through the domain in which trajectories are exceedlingly unlikely to pass, at least for small noise.

As an example, consider the domain  for the 14D HH equations (recall Defs.~\ref{dfn:HHdomain}), namely  $\mathcal{D}\defn [v_\text{min},v_\text{max}]\times\Delta^7\times \Delta^4$.  
The $p$-dimensional simplex $\Delta^p$ is a bounded set, and, as established by Lemma \ref{Lemma:vmin_vmax}, the trajectories of \eqref{eq:langevin-rescaled} remain within fixed voltage bounds with probability 1, so our HH system operates within a bounded subset of $\R^{14}$.  
To identify a ``hole'' through this domain, note that the set
$$\mathcal{S}^{u}_0\defn \mathcal{S}^u\cap\mathcal{V}^0=\left\{(v,\mbm,\mbn)\in \mathcal{D} \given v=u\  \& \   f(v,\mbm, \mbn)=0\right\},$$ 
which is the intersection of the voltage nullcline $\mathcal{V}^0$ with the constant-voltage section $\mathcal{S}^u$, is rarely visited by trajectories under small noise conditions (Fig.~\ref{fig:dvdt}B).

For $r>0$, we define the open ball of radius $r$ around $\mathcal{S}^u_0$ as
\begin{equation}
    \mathcal{B}_r(\mathcal{S}^u_0)\defn\left\{\mbx\in\mathcal{D}\:\Big|\: \min_{ \mby\in\mathcal{S}^u_0}\left( ||\mbx-\mby||\right)<r   \right\}.
\end{equation}
For the remainder of the paper, we  take the stochastic differential equation \eqref{eq:langevin-rescaled} to be defined on 
\begin{equation}\label{eq:D0}
    \mathcal{D}_0=\mathcal{D}\backslash \mathcal{B}_r(\mathcal{S}^v_0).
\end{equation}
For sufficiently small $r>0$, $\mathcal{D}_0$
is a space homeomorphic to a cylinder in $\R^{14}$.  To see this, consider the annulus  $\mathcal{A}=I_1 \times B^{13}$, where $I_1=[0,2\pi]$, and $B^{13}$ is a simply connected subset of $\R^{13}$. 
That space is homotopy equivalent to a circle $S^1$ by contracting the closed interval parts to a point, and contracting the annulus part to its inner circle.

To complete the setup so that we can apply the theory from \cite{CaoLindnerThomas2020SIAP}, 
we set boundary conditions  $\sum_{ij}n_i(\mathcal{G}\mathcal{G}^\intercal)_{ij}\partial_j T_\epsilon=0$ at reflecting boundaries with outward normal $\mbn$ on both the innner and outer boundaries of the cylinder.  In addition, we choose an (arbitrary) section transverse to the cylinder, and impose  a jump condition $T_\epsilon\to T_\epsilon+\tbar_\epsilon$ across this section, where $\tbar_\epsilon$ is mean oscillator period under noise level $\epsilon$. 

As showed in \cite{CaoLindnerThomas2020SIAP}, this construction allows us to establish a well defined MRT function for a given noise level $\epsilon$, $\tbar_\epsilon(\mbx)$.  We then obtain the iso-phase sections as level sets of $\tbar_\epsilon(\mbx)$. We give a formal definition as follows.

\begin{definition}\label{def:IPI}
Given a fixed noise level $\epsilon\ge 0$, and an iso-phase surface $\mathcal{S}$ for eqn.~\eqref{eq:langevin-rescaled}, we define the $k$th iso-phase interval (IPI) as the random variable
\begin{equation}\label{eq:IPI}
    \Delta_k\defn\mu_{k}-\mu_{k-1},
\end{equation}
where $\{\mu_k\}_{k\in\mathrm{N}^+}$ is a sequence of times at which the trajectory crosses $\mathcal{S}$.
The mean IPI is defined as
\begin{equation}
\label{eq:Tbar_epsilon}
        \tbar_\epsilon\defn \E[\Delta_k]
    \end{equation} and the second moment of the IPI is defined as
    \begin{equation}
    \label{eq:S_epsilon}
        S_\epsilon\defn \E\left[\Delta_k^2\right].
    \end{equation}
    The variance of the IPI is defined as
     \begin{equation} 
     \label{eq:sigma_phi}
     \sigma^2_\phi\defn \E\left[(\tbar_\epsilon-\Delta_k)^2\right].
     \end{equation}
     The moments \eqref{eq:Tbar_epsilon}-\eqref{eq:sigma_phi} are
      evaluated under the stationary probability distribution.
\end{definition}
It follows immediately that for a given noise level $\epsilon,$ we have  $\sigma_\phi^2=S_\epsilon-\Delta_\epsilon^2.$

\begin{remark}
Each iso-phase crossing time, $\{\mu_k\}_{k\in\mathrm{N}^+}$, in Definition~\ref{def:IPI}, is a stopping time.
\end{remark}


\begin{remark}
Because \eqref{eq:langevin-rescaled} is a diffusion with continuous sample paths, it is possible that when $\epsilon>0$ a stochastic trajectory $\mbX(t)$  may make multiple crossings of an iso-phase section $\mathcal{S}$ in quick succession.  Should this be the case, we condition the  crossing times $\mu_k$ on completion of successive circuits around the hole in our cylindrical domain.  That is, given $\mu_k$, we take $\mu_{k+1}$ to be the \emph{first return time} to $\mathcal{S}$ after having completed at least one half a rotation around the domain.  
\end{remark}


\section{Noise Decomposition of the 14-D Stochastic HH Model} \label{sec:noise_decomp}

 Ermentrout et al.~\cite{ErmentroutBeverlinTroyerNetoff2011JCNS} studied the variance of the infinitesimal phase response curve for a neuronal oscillator driven by a white noise current, using a four-dimensional version of the Hodgkin-Huxley model as an example.   
 As a corollary result, they obtained an expression for the variance of the interspike interval, by setting the size of the perturbing voltage pulse to zero.  

 Stochastic shielding \cite{SchmandtGalan2012PRL} allows one to resolve the molecular contributions (per directed edge in the ion channel state transition graph $\mathcal{E}$, cf.~Fig.~\ref{plot:HHNaKgates}) to the variance of ion channel currents  \cite{SchmidtGalanThomas2018PLoSCB,SchmidtThomas2014JMN}, and provides a numerical method for accurate, efficient simulation of Langevin models using a small subset of the independent noise forcing (only for the ``most important edges") \cite{PuThomas2020NECO}. 

 Here we combine the stochastic shielding method with Cao et al.'s mean--return-time phase analysis \cite{CaoLindnerThomas2020SIAP}
 to obtain an analytical decomposition of the molecular sources of timing variability under current clamp.

Prior analysis of stochastic shielding ( \cite{SchmidtGalanThomas2018PLoSCB,SchmidtThomas2014JMN}) assumed voltage clamp conditions, under which the ion channel state process is a stationary Markov process.
Under current clamp, however, fluctuations of channel state determine fluctuations in current, which in turn dictate voltage changes, which then influence channel state transition probabilities, forming a closed loop of statistical interdependence. 
Therefore, the variance of ISI under current clamp becomes more difficult to analyze. 
Nevertheless, in this section, we seek a decomposition of the interspike-interval variance into a sum of contributions from each edge $k\in\mathcal{E}$, e.g.
\begin{align}
     \sigma^2_\text{ISI}(\epsilon)&=\epsilon\sum_{k\in\mathcal{E}}\sigma^2_{\text{ISI},k} + O(\epsilon^2)\\
      \sigma^2_\phi(\epsilon)&=\epsilon\sum_{k\in\mathcal{E}}\sigma^2_{\phi,k} + O(\epsilon^2)
\end{align}
to leading order as $\epsilon\to 0$. 

Theorem \ref{theorem:noise} below gives the detailed form of the decomposition.
 As preliminary evidence for its plausibility, Fig.~\ref{fig:ISI_decomp} shows the coefficient of variation (standard deviation divided by mean) of the ISI under the  stochastic shielding approximation for Langevin model in different scenarios: including noise along a single directed edge at a time (blue bars), or on edges numbered 1 to $k$ inclusive (numbering follows that in Fig.~\ref{plot:HHNaKgates}).
 For large noise (Fig.~\ref{fig:ISI_decomp}a,c), the  effects of noise from different edges combine subadditively.  
 For small noise (Fig.~\ref{fig:ISI_decomp}b,d) contributions of distinct edges to firing variability combine additively.  
 Edges with small contribution to steady-state occupancy under voltage clamp (edges 1-6 for \K, edges 1-18 for \Na, cf.~Fig.~\ref{plot:HHNaKgates}) contribute additively even in the large-noise regime. Thus even in the large-noise regime, stochastic shielding allows accurate simulation of ISI variability using significantly fewer edges for both the sodium and  potassium channels.   \newcommand{\ww}{1.5125in} 
\begin{figure}[htbp] 
   \centering
   \hspace*{-1cm}  
   \includegraphics[width=4.5in]{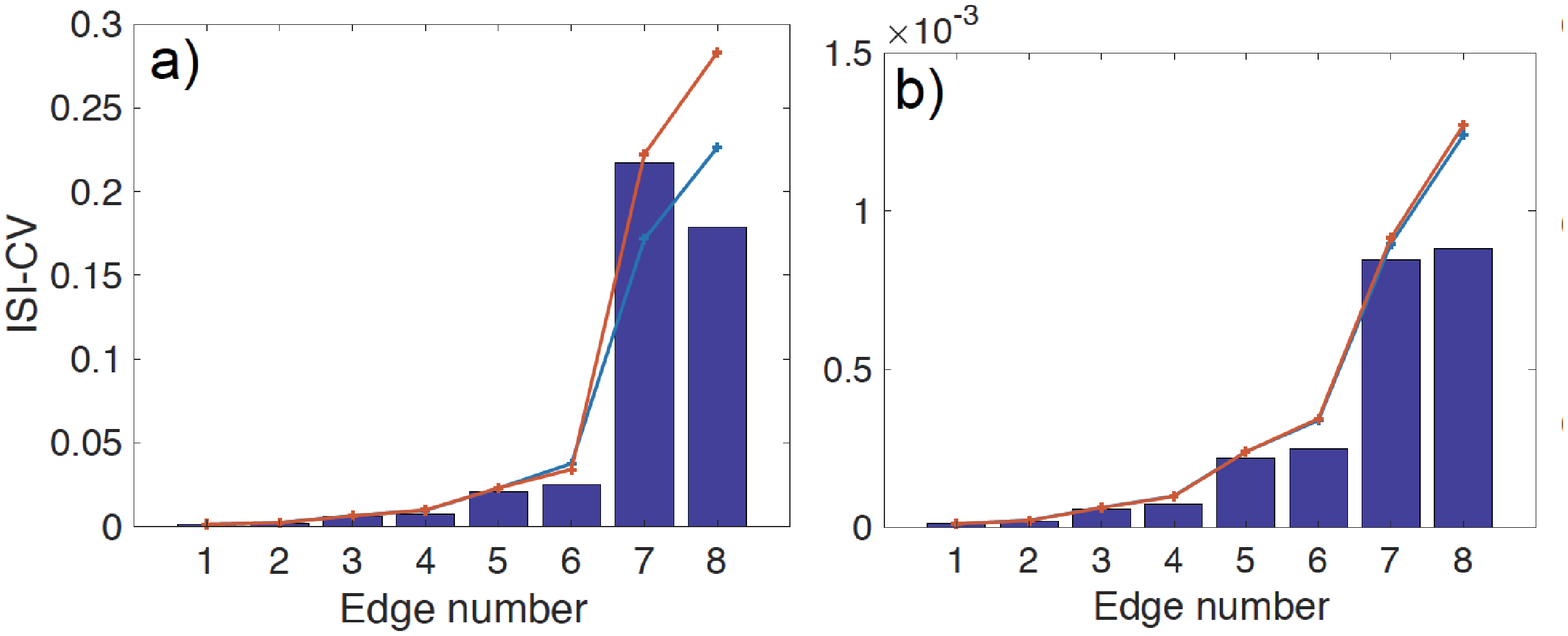}
    \includegraphics[width=4.5in]{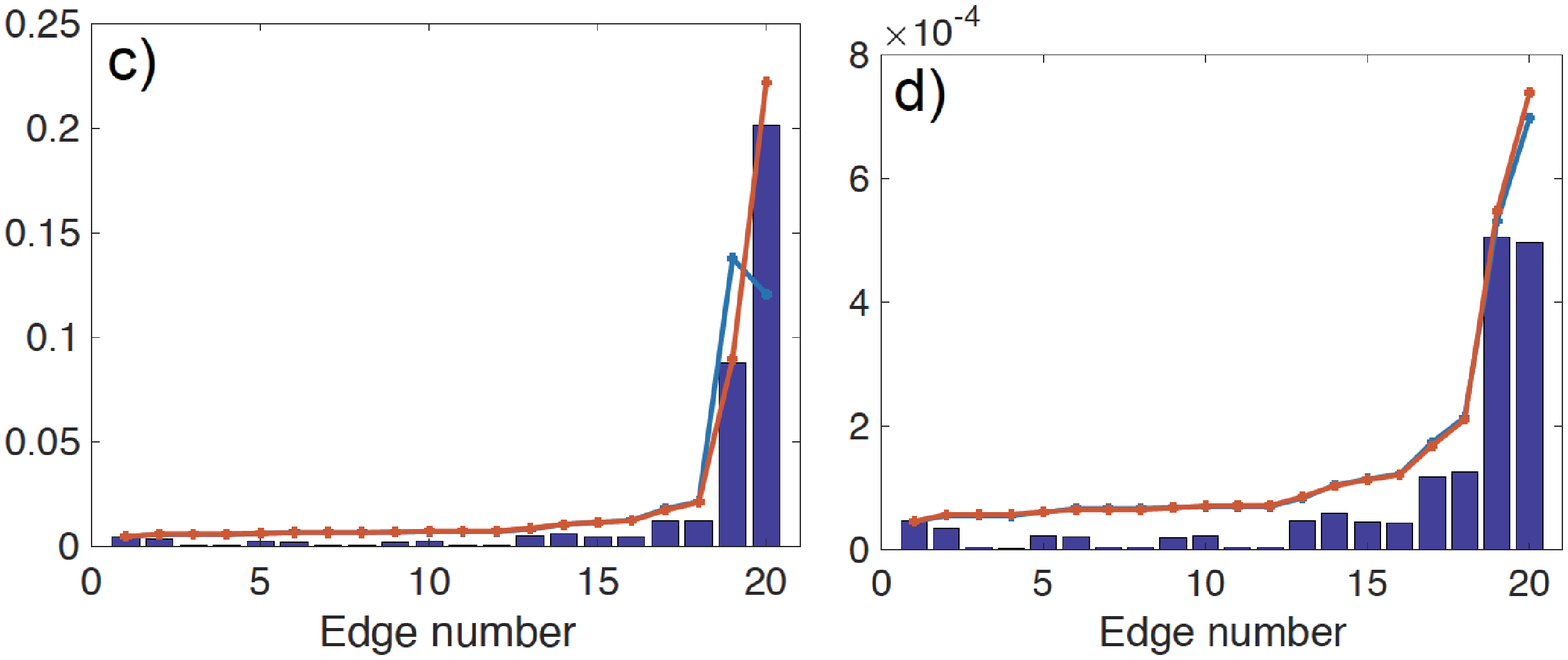}
   \caption{\small Approximate decomposition of interspike interval (ISI) variance into a sum of contributions from each edge for Hodgkin-Huxley model with stochastic K$^+$ and deterministic Na$^+$ gates (a,b) or stochastic Na$^+$ and deterministic K$^+$ gates (c,d).  Bar $n$ shows ISI coefficient of variation (CV) when noise on edge $n$ is included (a,c: $\epsilon=1$, large noise; b,d: $\epsilon=0.01$, small noise).    Blue line shows the CV of ISI when noise on all edges numbered $\le n$ are included.  Red line shows CV predicted by a linear superposition of the form \eqref{eq:ISI_channel_noise_decomp}. 
 } \label{fig:ISI_decomp}
\end{figure}

\subsection{Assumptions for Decomposition of the Full Noise Model}
Consider a Langevin model for a single-compartment conductance-based neuron \eqref{eq:langevin-rescaled}.  
We organize the state vector into the voltage component followed by fractional gating variables as follows:
\begin{equation}\label{eq:state-variables}
    \mbx=(v,q_1,q_2,\ldots,q_N)^\intercal.
\end{equation}
Here, $N$ is the number of nodes in the union of the ion channel state graphs.
For example, for the HH system, $N=13$, and we would write $q_1=m_{00},\ldots,q_8=m_{31}$ for the sodium gating variables, and $q_9=n_0,\ldots,q_{13}=n_4$ for the potassium gating variables.   
Similarly, we enumerate the $K$ edges occurring in the union of the ion channel state graphs, and write the stoichiometry vector $\zeta_k\in\R^{N+1}$ for transition $k$, taking source $i(k)$ to destination $j(k)$, in terms of  $(N+1)$-dimensional unit vectors $\mbe^{N+1}_i\in\R^{N+1}$ as $\zeta_k=-\mbe^{N+1}_{i(k)}+\mbe^{N+1}_{j(k)}.$
In order to study the contributions of individual molecular transitions to spike-time variability, we develop asymptotic expansions of the first and second moments of the distribution of iso-phase surface crossing times (iso-phase interval distribution see Def.~\ref{def:IPI} above) in the small $\epsilon$ limit.

Before formally stating the theorem, we make the following assumptions concerning the system \eqref{eq:langevin-rescaled}:
\begin{itemize}
    \item[A1] We assume the deterministic dynamical system $\frac{d\mbX}{dt}=\mbF(\mbx)$ 
    has an asymptotically, linearly stable limit cycle $\mbx=\gamma(t)$ with finite period $T_0$, and asymptotic phase function $\phi(\mbx)$ defined throughout the limit cycle's basin of attraction such that $\frac{d\phi(\mbx(t))}{dt}=\frac{2\pi}{T_0}$ along deterministic trajectories, and a well defined infinitesimal phase response curve (iPRC), $\mbZ(t)=\nabla\phi(\gamma(t))$.  
    \item[A2] We assume that the $(N+1)\times K$ matrix $\mathcal{G}$ has the form
    \begin{equation}
        \mathcal{G}(\mbx)=\sum_{k=1}^K\left( \zeta_k\mbr_k\right)\sqrt{\alpha_k(v)q_{i(k)}}
    \end{equation}
    where $\mbr_k=\left(\mbe_k^K\right)^\intercal$ is an $K$-dimensional unit row vector with all zero components except in the $k$th entry, $\alpha_k(v)$ is the voltage-dependent per capita transition rate along the $k$th directed edge, and the $q_{i(t)}$ denote channel state occupancy probabilities as described above (cf.~\eqref{eq:state-variables}). 
    \end{itemize}
    \begin{remark}
The product $\zeta_k\mbr_k$ is a $(N+1)\times K$ sparse matrix, containing zeros everywhere except in the $k$th column. 
Each column conveys the impact of an independent noise source on the state vector \cite{PuThomas2020NECO}.
\end{remark}
        \begin{itemize}
    \item[A3] We assume that for sufficiently small noise, $0<|\epsilon|\ll 1$,  
     we have a well defined joint stationary probability distribution in the voltage $V$ and the gating variables  $Q_1,\ldots,Q_N$ with a well defined mean period $\tbar_\epsilon$ and mean--return-time phase function $T_\epsilon(\mbx)$. Moreover, we assume that
    the mean period,
    the MRT function, and the second moment  function all have well defined series expansions:
    \begin{align}
    \label{eq:Taylor-for-Te}
        \tbar_\epsilon&=\tbar_0+\epsilon\tbar_1+O(\epsilon^2) \\
        \label{eq:Taylor-for-T}
        T_\epsilon(\mbx)&=T_0(\mbx)+\epsilon T_1(\mbx)+O(\epsilon^2) \\
        \label{eq:Taylor-for-S}
        S_\epsilon(\mbx)&=S_0(\mbx)+\epsilon S_1(\mbx)+O(\epsilon^2),
    \end{align}
    as $\epsilon\to 0$. 
\end{itemize}
\begin{remark}
Note that the expansions \eqref{eq:Taylor-for-Te}-\eqref{eq:Taylor-for-S} may break down in the small-$\epsilon$ limit for noise-dependent oscillators, such as the heteroclinic oscillator \cite{ThomasLindner2014PRL} or ecological quasi-cycles \cite{MckaneNewman2005PRL}, but should remain valid for finite-period limit cycles such as the Hodgkin-Huxley system in the periodic spiking regime. 
\end{remark}

\subsection{Noise Decomposition Theorem}
\begin{theorem}[Noise Decomposition Theorem]
\label{theorem:noise}
Let $\mbx=(V_0,\mbQ_0)\in\mathcal{S}_0$ be the point on the deterministic limit cycle such that $\phi(\mbx)=0$ (i.e.~assigned to ``phase zero"), and let $\E^\mbx$ denote expectation with respect to the law of trajectories with initial condition $\mbx$, for fixed $\epsilon\ge 0$.
Under assumptions A1-A3, the variance $\sigma^2_\phi$ of the $\mathcal{S}_0$-isochron crossing times (iso-phase intervals, or IPI) for conductance-based Langevin models (eqn.~\eqref{eq:langevin-rescaled}) decomposes into additive contributions from each channel-state transition, in the sense that
\begin{align}\label{eq:ISI_channel_noise_decomp}
\sigma^2_\phi=&\sum_{k\in\text{all edges}}\sigma^2_{\phi,k}\\
=&\epsilon \sum_k\int_{0}^{\tbar_0}\E^\mbx
 \left(  \alpha_k(V(t))Q_{i(k)}(t)
 \left(\zeta_k^\intercal \tilde{\mbZ}(\mbX(t))  \right)^2 \right)\,dt
 + O\left(\epsilon^2\right),\label{eq:theorem1} 
\end{align}
as $\epsilon\to 0^+$.  The function $\mbX(t)=(V(t),Q_1(t),\ldots,Q_N(t))^\intercal$ denotes a stochastic trajectory of \eqref{eq:langevin-rescaled} with  initial condition $\mbx$.
\end{theorem}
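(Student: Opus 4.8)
The plan is to reduce the IPI variance to the value, at the phase--zero point $\mbx\in\mathcal{S}_0$, of a single auxiliary function, and then to extract its leading-order behaviour through Dynkin's formula \eqref{eq:Dynkin-formula}. Write $\tau$ for the first return time of $\mbX(t)$ to the MRT isochron $\mathcal{S}_0$ (a stopping time with $\E^\mbx[\tau]=\tbar_\epsilon$), and regard the pointwise first--passage moments $T_\epsilon(\mby)=\E^\mby[\tau]$ and $S_\epsilon(\mby)=\E^\mby[\tau^2]$ as functions on the cut cylinder $\mathcal{D}_0$. Standard first--passage theory (itself a consequence of \eqref{eq:Dynkin-formula}) gives the moment equations $\LL^\dagger T_\epsilon=-1$ and $\LL^\dagger S_\epsilon=-2T_\epsilon$ in the interior, where $\LL^\dagger=\mbF\cdot\nabla+\tfrac{\epsilon}{2}\sum_{ij}(\mathcal{G}\mathcal{G}^\intercal)_{ij}\partial_i\partial_j$. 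Cutting the cylinder along $\mathcal{S}_0$ renders these single-valued: on the ``exit'' edge $T_\epsilon=S_\epsilon=0$, while on the ``entry'' edge containing $\mbx$ one has $T_\epsilon(\mbx)=\tbar_\epsilon$ and $S_\epsilon(\mbx)=S_\epsilon$. I then introduce the variance function $\Psi_\epsilon:=S_\epsilon-T_\epsilon^2$, so that $\Psi_\epsilon(\mbx)=\sigma^2_\phi$ while $\Psi_\epsilon\equiv 0$ on the exit edge.

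The key computation is an exact cancellation. Applying the product rule to $T_\epsilon^2$ and using $\LL^\dagger T_\epsilon=-1$,
\begin{equation}
\LL^\dagger(T_\epsilon^2)=2T_\epsilon\,\LL^\dagger T_\epsilon+\epsilon\,(\nabla T_\epsilon)^\intercal\mathcal{G}\mathcal{G}^\intercal\nabla T_\epsilon=-2T_\epsilon+\epsilon\,(\nabla T_\epsilon)^\intercal\mathcal{G}\mathcal{G}^\intercal\nabla T_\epsilon,
\end{equation}
so the first-order terms cancel and $\LL^\dagger\Psi_\epsilon=-\epsilon\,(\nabla T_\epsilon)^\intercal\mathcal{G}\mathcal{G}^\intercal\nabla T_\epsilon$ \emph{exactly}. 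Feeding $\Psi_\epsilon$ into \eqref{eq:Dynkin-formula} with the stopping time $\tau$, and using $\Psi_\epsilon(\mbX(\tau))=0$ on the exit edge together with $\Psi_\epsilon(\mbx)=\sigma^2_\phi$, yields the exact representation
\begin{equation}
\sigma^2_\phi=\epsilon\,\E^\mbx\!\left[\int_0^{\tau}(\nabla T_\epsilon(\mbX(s)))^\intercal\,\mathcal{G}\mathcal{G}^\intercal(\mbX(s))\,\nabla T_\epsilon(\mbX(s))\,ds\right].
\end{equation}

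It remains to evaluate this integral to leading order, and here the linchpin is the identity $\nabla T_0=-\tilde{\mbZ}$: since the deterministic time-to-return satisfies $T_0(\mby)=\tbar_0-\tfrac{\tbar_0}{2\pi}\phi(\mby)$, its gradient is minus the timing--sensitivity function $\tilde{\mbZ}=\tfrac{\tbar_0}{2\pi}\nabla\phi$ (which on $\gamma$ equals the iPRC $\mbZ$). By the expansion \eqref{eq:Taylor-for-T}, $\nabla T_\epsilon=-\tilde{\mbZ}+O(\epsilon)$; moreover $\mbX(s)=\gamma(s)+O(\sqrt\epsilon)$ and $\E^\mbx[\tau]=\tbar_\epsilon=\tbar_0+O(\epsilon)$ with fluctuations of order $\sqrt\epsilon$. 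Because the integrand is bounded (Remark~\ref{rem:alpha-is-bounded} and compactness of $\mathcal{D}$) and already carries a prefactor $\epsilon$, replacing $\nabla T_\epsilon$, $\mbX$ and $\tau$ by their deterministic limits costs only $O(\epsilon^2)$, giving $\sigma^2_\phi=\epsilon\int_0^{\tbar_0}\tilde{\mbZ}^\intercal\mathcal{G}\mathcal{G}^\intercal\tilde{\mbZ}\big|_{\gamma(s)}\,ds+O(\epsilon^2)$. Finally I substitute the noise structure of assumption A2, namely $\mathcal{G}\mathcal{G}^\intercal=\sum_k\alpha_k(v)\,q_{i(k)}\,\zeta_k\zeta_k^\intercal$ (the cross terms vanish since $\mbr_k\mbr_l^\intercal=\delta_{kl}$), which converts $\tilde{\mbZ}^\intercal\mathcal{G}\mathcal{G}^\intercal\tilde{\mbZ}$ into $\sum_k\alpha_k q_{i(k)}(\zeta_k^\intercal\tilde{\mbZ})^2$, the per-edge decomposition. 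Re-expressing the deterministic period integral as $\E^\mbx$ of the same integrand (equal to leading order) reproduces \eqref{eq:theorem1}, with $\sigma^2_{\phi,k}$ the $k$th summand.

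The main obstacle is this last step: making rigorous the replacement of the random stopping time $\tau$ and the stochastic path $\mbX(s)$ by their noiseless limits inside $\E^\mbx[\int_0^\tau\cdots]$. This requires small-noise estimates confining $\mbX$ to a tube around $\gamma$ on the $O(1)$ timescale, uniform convergence $\nabla T_\epsilon\to\nabla T_0$, and control of the $O(\sqrt\epsilon)$ fluctuations of $\tau$ (and of possible short re-crossings of $\mathcal{S}_0$, excluded by the half-rotation conditioning convention of \S\ref{sssec:MRTphase}), all of which lean on the boundedness furnished by Remark~\ref{rem:alpha-is-bounded}. A secondary technical point is justifying the $C^2$ regularity of $T_\epsilon,S_\epsilon$ and the validity of the moment equations and edge jump conditions for the \emph{degenerate} (noise-free voltage) diffusion; I would import these from the hypoelliptic MRT framework of \cite{CaoLindnerThomas2020SIAP}.
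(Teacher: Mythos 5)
Your proposal is correct and follows essentially the same route as the paper's proof: the first- and second-moment equations for the return time, the variance function $S_\epsilon-T_\epsilon^2$ vanishing on $\mathcal{S}_0$, Dynkin's formula applied over one rotation, the identity $\nabla T_0=-\tilde{\mbZ}$, the rank-one decomposition $\mathcal{G}\mathcal{G}^\intercal=\sum_k\alpha_k q_{i(k)}\zeta_k\zeta_k^\intercal$, and the final bounded-integrand argument for replacing $\tau$ by $\tbar_0$. The one streamlining is that you obtain $\LL^\dagger\bigl[S_\epsilon-T_\epsilon^2\bigr]=-\epsilon\,(\nabla T_\epsilon)^\intercal\mathcal{G}\mathcal{G}^\intercal\nabla T_\epsilon$ exactly via the product rule for the generator, whereas the paper reaches the same expression only to leading order by substituting the $\epsilon$-expansions of $T_\epsilon$ and $S_\epsilon$ term by term.
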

\begin{remark}
The theorem holds independently of the choice of the initial point $\mbx$ on the deterministic limit cycle, in the sense that choosing a different base point would just shift the endpoints of the interval of integration; since the deterministic limit cycle is periodic with period $\overline{T}_0$, the resulting expression for $\sigma_\phi^2$ is the same.
See Corollary \ref{corollary}.
\end{remark}
\begin{remark}
The proof relies on Dynkin's formula, first--passage-time moment calculations, and a small noise expansion.  
The right hand side of \eqref{eq:ISI_channel_noise_decomp} leads to an approximation method based on sampling stochastic limit cycle trajectories, which we show below gives an accurate estimate for $\sigma^2_\phi$.
\end{remark}

\begin{remark}
Although the interspike intervals (ISI) determined by voltage crossings are not strictly identical to the iso-phase intervals (IPI) defined by level crossings of the function $T_\epsilon(\mbx)$, we nevertheless expect that the variance of the IPI, and their decomposition, will provide an accurate approximation to the variance of the ISI.  In \S\ref{sec:Numerical} we show numerically that the decomposition given by \eqref{eq:ISI_channel_noise_decomp} predicts the contribution of different directed edges to the \emph{voltage}-based ISIs with a high degree of accuracy.
\end{remark}

Before proving the theorem, we  state and prove two ancillary lemmas.

\begin{lemma}\label{Lemma1} Fix a cylindrical domain $\mathcal{D}_0$ (as in equation \eqref{eq:D0}) and an iso-phase section $\mathcal{S}_0$ transverse to the vector field $\mbF$.  
If the mean period $\tbar_\epsilon$ and MRT function $T_\epsilon(\mbx)$ have  Taylor expansions \eqref{eq:Taylor-for-Te} and \eqref{eq:Taylor-for-T}, then
the unperturbed isochron function $T_0$ and the  sensitivity of the isochron function to small noise $T_1$ satisfy
\begin{align}
 &\mbF(\mbx)\cdot\nabla T_0(\mbx)=-1,\\
&\mbF(\mbx)\cdot\nabla T_1(\mbx)=-\frac12\sum_{ij}(\mathcal{G}\mathcal{G}^\intercal)_{ij}\partial^2_{ij}T_0(\mbx),\\
&\left.\sum_{ij}n_i(\mathcal{G}\mathcal{G}^\intercal)_{ij}\partial_j T_1\right|_{\partial\mathcal{D}}=0, 
\end{align}
where $T_1\to T_1+\tbar_1\ \text{and}\ T_0\to T_0+\tbar_0 \text{ across }\mathcal{S}_0$, and $\mbn$ is the outward normal to the boundary $\partial\mathcal{D}$.
\end{lemma}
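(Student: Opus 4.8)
The plan is to substitute the Taylor expansions \eqref{eq:Taylor-for-Te} and \eqref{eq:Taylor-for-T} into the defining equations for the MRT function $T_\epsilon(\mbx)$ and match powers of $\epsilon$. The starting point is the characterization from \cite{CaoLindnerThomas2020SIAP}: the MRT function satisfies the backward equation $\LL^\dagger[T_\epsilon] = -1$ on $\mathcal{D}_0$, subject to the reflecting boundary condition $\sum_{ij} n_i (\mathcal{G}\mathcal{G}^\intercal)_{ij}\partial_j T_\epsilon = 0$ on $\partial\mathcal{D}$ and the jump condition $T_\epsilon \to T_\epsilon + \tbar_\epsilon$ across the cut section $\mathcal{S}_0$. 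Using the explicit form of the generator from \eqref{eq:Ldagger}, namely $\LL^\dagger[\phi] = \mbF\cdot\nabla\phi + \tfrac{\epsilon}{2}\sum_{ij}(\mathcal{G}\mathcal{G}^\intercal)_{ij}\partial^2_{ij}\phi$, the defining PDE becomes
\begin{equation}
\mbF(\mbx)\cdot\nabla T_\epsilon(\mbx) + \frac{\epsilon}{2}\sum_{ij}(\mathcal{G}\mathcal{G}^\intercal)_{ij}\partial^2_{ij} T_\epsilon(\mbx) = -1. \nonumber
\end{equation}

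First I would insert $T_\epsilon = T_0 + \epsilon T_1 + O(\epsilon^2)$ into this equation. The $O(1)$ terms give $\mbF\cdot\nabla T_0 = -1$, which is the first claimed identity; this simply expresses that along the deterministic flow the MRT decreases at unit rate, consistent with $T_0$ being the (negative) time-to-section for the unperturbed system. Collecting the $O(\epsilon)$ terms gives
\begin{equation}
\mbF(\mbx)\cdot\nabla T_1(\mbx) + \frac{1}{2}\sum_{ij}(\mathcal{G}\mathcal{G}^\intercal)_{ij}\partial^2_{ij} T_0(\mbx) = 0, \nonumber
\end{equation}
which rearranges to the second claimed identity. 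The third identity, the boundary condition on $T_1$, I would obtain by expanding the reflecting boundary condition $\sum_{ij} n_i (\mathcal{G}\mathcal{G}^\intercal)_{ij}\partial_j T_\epsilon = 0$ order by order; since the $\mathcal{G}\mathcal{G}^\intercal$ prefactor carries no explicit $\epsilon$ (the $\epsilon$ in \eqref{eq:langevin-rescaled} has been factored out front), each order must vanish separately, yielding $\sum_{ij} n_i (\mathcal{G}\mathcal{G}^\intercal)_{ij}\partial_j T_1 = 0$ on $\partial\mathcal{D}$. The jump conditions follow analogously: matching $T_\epsilon \to T_\epsilon + \tbar_\epsilon$ at each order, together with \eqref{eq:Taylor-for-Te}, gives $T_0 \to T_0 + \tbar_0$ and $T_1 \to T_1 + \tbar_1$ across $\mathcal{S}_0$.

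The main obstacle is justifying that the formal order-by-order matching is legitimate rather than merely formal. This rests on assumption A3, which postulates the existence of the expansion \eqref{eq:Taylor-for-T} with a genuine $O(\epsilon^2)$ remainder; granting A3, the matching is rigorous because the generator is polynomial in $\epsilon$ of degree one, so no resummation of infinitely many terms is required. I would also need to verify that $T_0$ is smooth enough that $\partial^2_{ij} T_0$ appearing in the second identity is well defined — this is exactly where the twice-differentiability assumption on the phase function (stated in \S\ref{ssec:iPRC}) enters, since $T_0$ is essentially $-\tfrac{T_0^{\mathrm{period}}}{2\pi}\phi$ up to the cut convention. A minor technical point worth flagging is the consistency of the cut/jump bookkeeping across the section $\mathcal{S}_0$ with the periodicity of the deterministic isochron foliation, but this is inherited directly from the construction in \cite{CaoLindnerThomas2020SIAP} and need not be re-derived here.
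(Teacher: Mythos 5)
Your proposal is correct and follows essentially the same route as the paper's proof: substitute the expansions \eqref{eq:Taylor-for-Te}--\eqref{eq:Taylor-for-T} into the backward equation $\mbF\cdot\nabla T_\epsilon + \tfrac{\epsilon}{2}\sum_{ij}(\mathcal{G}\mathcal{G}^\intercal)_{ij}\partial^2_{ij}T_\epsilon=-1$ together with its reflecting boundary and jump conditions, and match orders of $\epsilon$. The paper sources the defining PDE from Gardiner rather than directly from \cite{CaoLindnerThomas2020SIAP}, but the argument is otherwise identical, and your added remarks on the legitimacy of the order-by-order matching under A3 and the required smoothness of $T_0$ are consistent with the paper's standing assumptions.
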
 
Note that $\tbar_1$ may be determined from the stationary solution of the forward equation for $0<\epsilon$, or through Monte Carlo simulations (in some cases $\tbar_1\equiv 0$).

\begin{proof}
For all noise levels $\epsilon\ge 0$, from \cite{Gardiner2004} (Chapter 5, equation 5.5.19), the MRT function $T_\epsilon(x)$ satisfies 
 \begin{equation}
\label{eq:T1}
\Ld\left[T_\epsilon\right]=\mbF\cdot\nabla T_\epsilon + \frac\epsilon 2\sum_{ij}(\mathcal{G}\mathcal{G}^\intercal)_{ij}\partial^2_{ij}T_\epsilon=-1,
\end{equation}
together with adjoint reflecting boundary conditions at the edges of the domain $\mathcal{D}$ with outward normal vector $\mathbf{n}$
\begin{align}
\left.\sum_{ij}n_i(\mathcal{G}\mathcal{G}^\intercal)_{ij}\partial_j T_\epsilon\right|_{\partial\mathcal{D}}=0
\end{align}
and the jump condition is specified as follows. When $x$ increases across the reference section $\mathcal{S}$ in the ``forward direction", i.e., in a direction consistent with the mean flow in forwards time, the function $T_\epsilon\to T_\epsilon + \tbar_\epsilon.$ Note that since $T_0\to T_0+\tbar_0$, we also have $T_1\to T_1+\tbar_1$ across the same Poincar\'{e} section, for consistency.

Substituting the expansion \eqref{eq:Taylor-for-T} into \eqref{eq:T1} gives
\begin{align}
\label{eq:T2}
-1=&\mbF\cdot\nabla \left(T_0(\mbx)+\epsilon T_1(\mbx)+O(\epsilon^2)\right) + \frac\epsilon 2\sum_{ij}(\mathcal{G}\mathcal{G}^\intercal)_{ij}\partial^2_{ij}\left(T_0(\mbx)+\epsilon T_1(\mbx)+O(\epsilon^2)\right), \\
=&\mbF\cdot\nabla T_0(\mbx)+\mbF\cdot\nabla\left(\epsilon T_1(\mbx)+O(\epsilon^2)\right) +\frac\epsilon 2\sum_{ij}(\mathcal{G}\mathcal{G}^\intercal)_{ij}\partial^2_{ij}T_0(\mbx) \\
&+\frac\epsilon 2\sum_{ij}(\mathcal{G}\mathcal{G}^\intercal)_{ij}\partial^2_{ij}\left(\epsilon T_1(\mbx)+O(\epsilon^2)\right) \nonumber
\end{align}
Note that, when $\epsilon=0$
 \begin{equation} \label{eq:FT0}
 \mbF\cdot\nabla T_0(\mbx)=-1,
 \end{equation}
consistent with $T_0$ being equal to minus the asymptotic phase of the limit cycle (up to an additive constant). 
On the other hand, for $\epsilon\neq0$, by comparing the first order term, the sensitivity of the isochron function to small noise $T_1$ satisfies
\begin{align} \label{eq:FT1}
\mbF\cdot\nabla T_1(\mbx)=-\frac12\sum_{ij}(\mathcal{G}\mathcal{G}^\intercal)_{ij}\partial^2_{ij}T_0,\quad 
\left.\sum_{ij}n_i(\mathcal{G}\mathcal{G}^\intercal)_{ij}\partial_j T_1\right|_{\partial\mathcal{D}}=0, 
\end{align}
where $T_1\to T_1+\tbar_1\text{ across }\mathcal{S}$, and $\mbn$ is the outward normal to the boundary $\partial\mathcal{D}$,
thus we proved Lemma \ref{Lemma1}.
\end{proof} 

Our next lemma concerns the second moment of the first passage time from a point $\mbx\in\mathcal{D}$ to a given iso-phase section $\mathcal{S}_0$, that is, $S_\epsilon(\mbx)\defn E\left[\tau(\mbx,\mathcal{S}_0)^2\right]$, cf.~\eqref{eq:Second_moment_for_X_and_section}.

\begin{lemma}\label{lem:S1}
Suppose the assumptions of Lemma \ref{Lemma1} hold, and assume in addition that $S_\epsilon$ has a Taylor expansion \eqref{eq:Taylor-for-S} for  small $\epsilon$.
The second moment, $S_0(\mbx)$, of the first passage time $\tau(\mbX,\mathcal{S})$ from a point $\mbx$ to a given isochron section $\mathcal{S}_0=\{T_\epsilon(\mbx)=\text{const}\}$, and its first order perturbation, $S_1(\mbx)$, satisfy  \begin{align}
&\mbF\cdot\nabla S_0=-2T_0\\
&\mbF\cdot\nabla  S_1 + \frac{1}{2}\sum_{ij}(\mathcal{G}\mathcal{G}^\intercal)_{ij}\partial^2_{ij}S_0=-2 T_1.
\end{align}
\end{lemma}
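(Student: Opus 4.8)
The plan is to mirror the structure of the proof of Lemma~\ref{Lemma1}: first pin down the exact (in $\epsilon$) backward equation satisfied by the second moment $S_\epsilon(\mbx)=\E^\mbx[\tau(\mbx,\mathcal{S}_0)^2]$, then substitute the Taylor expansions \eqref{eq:Taylor-for-S} and \eqref{eq:Taylor-for-T} and match powers of $\epsilon$. The only genuinely new ingredient beyond Lemma~\ref{Lemma1} is the governing equation for $S_\epsilon$, namely
\begin{equation}
\Ld[S_\epsilon]=\mbF\cdot\nabla S_\epsilon+\frac{\epsilon}{2}\sum_{ij}(\mathcal{G}\mathcal{G}^\intercal)_{ij}\partial^2_{ij}S_\epsilon=-2T_\epsilon,
\label{eq:plan-S-eq}
\end{equation}
which is the $n=2$ member of the standard recursion for first-passage-time moments.

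First I would establish \eqref{eq:plan-S-eq}. The cleanest self-contained route uses the survival probability $G(\mbx,t)\defn\Pr^\mbx[\tau(\mbx,\mathcal{S}_0)>t]$, which solves the backward Kolmogorov equation $\partial_t G=\Ld[G]$ with $G(\cdot,0)\equiv 1$ in the interior together with the reflecting conditions on $\partial\mathcal{D}$ and the absorbing condition on $\mathcal{S}_0$. Writing the moments as $M_n(\mbx)=n\int_0^\infty t^{n-1}G(\mbx,t)\,dt$ (integration by parts in $t$, using the compactness of $\mathcal{D}_0$ and Lemma~\ref{Lemma:vmin_vmax} with Remark~\ref{rem:alpha-is-bounded} to guarantee the decay and finiteness of moments), applying $\Ld$ under the integral, and integrating by parts once more in $t$ yields the recursion $\Ld[M_n]=-nM_{n-1}$ with $M_0\equiv 1$. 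For $n=1$ this reproduces $\Ld[T_\epsilon]=-1$, the equation used in Lemma~\ref{Lemma1}; for $n=2$ it gives exactly \eqref{eq:plan-S-eq} with $M_1=T_\epsilon$ and $M_2=S_\epsilon$. Equivalently, one may simply cite \cite{Gardiner2004} (Chapter~5), as was done for $T_\epsilon$ in Lemma~\ref{Lemma1}, and inherit the same reflecting boundary conditions on $\partial\mathcal{D}$ together with the appropriate jump condition across $\mathcal{S}_0$.

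With \eqref{eq:plan-S-eq} in hand, the remainder is purely algebraic and directly parallels the substitution step in Lemma~\ref{Lemma1}. Inserting $S_\epsilon=S_0+\epsilon S_1+O(\epsilon^2)$ and $T_\epsilon=T_0+\epsilon T_1+O(\epsilon^2)$ into \eqref{eq:plan-S-eq} and grouping by order in $\epsilon$ gives, at $O(\epsilon^0)$, $\mbF\cdot\nabla S_0=-2T_0$, and at $O(\epsilon^1)$, $\mbF\cdot\nabla S_1+\frac{1}{2}\sum_{ij}(\mathcal{G}\mathcal{G}^\intercal)_{ij}\partial^2_{ij}S_0=-2T_1$, which are precisely the two claimed identities. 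The legitimacy of the term-by-term matching is licensed by assumption A3, which posits the validity of the expansions \eqref{eq:Taylor-for-S}--\eqref{eq:Taylor-for-T}, together with the twice-differentiability of $S_0$ inherited from the regularity of the MRT construction of \cite{CaoLindnerThomas2020SIAP}.

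The main obstacle is the rigorous justification of \eqref{eq:plan-S-eq} rather than the order-matching: one must verify that $S_\epsilon$ is finite and twice continuously differentiable so that it genuinely solves the backward equation with source $-2T_\epsilon$, and that the boundary and jump data for the second moment are consistent with those inherited for $T_\epsilon$. Finiteness and the decay needed for the integration-by-parts argument follow from the boundedness of $\mathcal{D}_0$ and the uniform bound $\alpha_\text{max}$ on the transition rates (Remark~\ref{rem:alpha-is-bounded}), which force exponential tails for $\tau$ and hence finite moments of all orders; the requisite regularity is standard elliptic theory on the cylindrical domain $\mathcal{D}_0$. Since the statement of the lemma records only the bulk equations, these analytic points—while necessary for full rigor—do not enter the final matched identities, so once \eqref{eq:plan-S-eq} is secured the proof concludes immediately.
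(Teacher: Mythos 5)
Your proposal is correct and follows essentially the same route as the paper: the paper likewise takes the backward equation $\mathcal{L}^\dagger[S_\epsilon]=-2T_\epsilon$ from \cite{Gardiner2004} (Chapter 5, eq.~5.5.19), substitutes the expansions \eqref{eq:Taylor-for-T} and \eqref{eq:Taylor-for-S}, and matches the $O(1)$ and $O(\epsilon)$ terms to obtain the two identities. Your additional sketch of deriving the moment recursion from the survival probability is a self-contained justification of the cited equation, but it does not change the substance of the argument.
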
 
\begin{proof}
Following \cite{Gardiner2004} (Chapter 5, equation 5.5.19), the second moment $S_\epsilon(\mbx)$ of the first passage time from a point $\mbx$ to a given isochron $T_\epsilon(\mbx)=\text{const}$, satisfies  
$$\mathcal{L}^\dagger[S_\epsilon]:=\mbF\cdot\nabla S_\epsilon + \frac\epsilon 2\sum_{ij}(\mathcal{G}\mathcal{G}^\intercal)_{ij}\partial^2_{ij}S_\epsilon=-2 T_\epsilon.$$
Substituting in the Taylor expansions \eqref{eq:Taylor-for-Te}-\eqref{eq:Taylor-for-S}, we have to order $O(\epsilon)$
\begin{eqnarray}
\mbF\cdot\nabla (S_0+\epsilon S_1) + \frac\epsilon 2\sum_{ij}(\mathcal{G}\mathcal{G}^\intercal)_{ij}\partial^2_{ij}(S_0+\epsilon S_1)=-2 (T_0+\epsilon T_1)+O(\epsilon^2).
\end{eqnarray}
Setting $\epsilon=0$, we see that
 \begin{equation}
 \mbF\cdot\nabla S_0=-2T_0.
  \end{equation}
For $\epsilon>0$, the first order terms yield
\begin{eqnarray} \label{eq:S1}
\mbF\cdot\nabla  S_1 + \frac{1}{2}\sum_{ij}(\mathcal{G}\mathcal{G}^\intercal)_{ij}\partial^2_{ij}S_0=-2 T_1.
\end{eqnarray}
Therefore, we complete the proof of Lemma \ref{lem:S1}.
\end{proof}

\subsection{Proof of Theorem \ref{theorem:noise}}\label{sec:proof}
\begin{proof} 
We divide the proof of the Theorem into three steps.
\begin{enumerate}
\item First, we will calculate the infinitesimal generator for the variance of the iso-phase interval (IPI).

For fixed noise level $\epsilon> 0$, the variance of IPI, $\sigma^2_\phi$ is equal to the expected value of $V_\epsilon=S_\epsilon-T_\epsilon^2$, evaluated at the isochron $T=\text{const}+\tbar$.
Note that when $\epsilon=0$, the system is deterministic and the iso-phase interval has a zero variance, i.e., $V_0\equiv 0$. Expanding $S_\epsilon=S_0+\epsilon S_1+O(\epsilon^2)$ and $V_\epsilon=\epsilon V_1+O(\epsilon^2)$ to first order in $\epsilon\ll 1$, then
\begin{eqnarray}
V_\epsilon&=&V_0+\epsilon V_1+O(\epsilon^2) \\
&=&S_\epsilon-T^2_\epsilon \\
&=& S_0+\epsilon S_1+O(\epsilon^2)-\left(T_0(\mbx)+\epsilon T_1(\mbx)+O(\epsilon^2)\right)^2 \\
&=&S_0-T^2_0+\epsilon(S_1-2T_0T_1)+O(\epsilon^2),
\end{eqnarray}
thus,
\begin{eqnarray}
S_0&=&T^2_0\\
S_1&=&V_1+2T_0T_1.
\end{eqnarray}
Plug the above results into equation \eqref{eq:S1} (Lemma \ref{lem:S1}), we can obtain that
\begin{eqnarray} 
\mbF\cdot\nabla  (V_1+2T_0T_1) + \frac{1}{2}\sum_{ij}(\mathcal{G}\mathcal{G}^\intercal)_{ij}\partial^2_{ij}T^2_0=-2 T_1.
\end{eqnarray}
By the product rule and use equations \eqref{eq:FT0}, and \eqref{eq:FT1}, 
\begin{eqnarray} 
\mbF\cdot\nabla  (2T_0T_1) &=&2T_1\mbF\cdot\nabla  (T_0)+2T_0\mbF\cdot\nabla  (T_1) \\
&=&-2T_1-T_0\sum_{ij}(\mathcal{G}\mathcal{G}^\intercal)_{ij}\partial^2_{ij}T_0.
\end{eqnarray}

Therefore,
\begin{eqnarray} 
\mbF\cdot\nabla V_1-2T_1-T_0\sum_{ij}(\mathcal{G}\mathcal{G}^\intercal)_{ij}\partial^2_{ij}T_0 + \frac{1}{2}\sum_{ij}(\mathcal{G}\mathcal{G}^\intercal)_{ij}\partial^2_{ij}T^2_0=-2 T_1.
\end{eqnarray}
Since $\partial^2_{ij}T^2_0=\partial_{i}(2T_0\partial_{j}T_0)=2\partial_{i}T_0\partial_{j}T_0+2T_0\partial^2_{ij}T_0,$
it follows that
\begin{eqnarray}
&\mbF\cdot\nabla V_1 -T_0\sum_{ij}(\mathcal{G}\mathcal{G}^\intercal)_{ij}\partial^2_{ij}T_0
+\sum_{ij}(\mathcal{G}\mathcal{G}^\intercal)_{ij}\partial_{i}T_0\partial_{j}T_0
+\sum_{ij}(\mathcal{G}\mathcal{G}^\intercal)_{ij}T_0\partial^2_{ij}T_0=0 \nonumber\\
&\mbF\cdot\nabla V_1=-\sum_{ij}(\mathcal{G}\mathcal{G}^\intercal)_{ij}\partial_{i}T_0\partial_{j}T_0 \label{eq:FV1}
\end{eqnarray}
Finally, 
\begin{eqnarray} 
\mathcal{L}^\dagger[V_\epsilon]&=&\mathcal{L}^\dagger[V_0+\epsilon V_1+O(\epsilon^2)]\\
&=&\epsilon\mathcal{L}^\dagger[ V_1]+O(\epsilon^2)\\
&=&\epsilon\left(\mbF\cdot\nabla V_1 + \frac\epsilon 2\sum_{ij}(\mathcal{G}\mathcal{G}^\intercal)_{ij}\partial^2_{ij}V_1\right)+O(\epsilon^2) \\
&=&\epsilon\left(\mbF\cdot\nabla V_1\right)+O(\epsilon^2)\\
&=&-\epsilon\sum_{ij}(\mathcal{G}\mathcal{G}^\intercal)_{ij}\partial_{i}T_0\partial_{j}T_0+O(\epsilon^2), \label{LV1}
\end{eqnarray}
where we used $V_0\equiv 0$ and applied equation \eqref{eq:FV1}.


\item Secondly, we will show that for first-order transition networks underlying the molecular ion channel process, the decomposition $\mathcal{G}\mathcal{G}^\intercal=\sum_{k\in\mathcal{E}}\mathcal{G}_k\mathcal{G}_k^\intercal$ is exact.

To see this, note that $\mathcal{G}$ can be written as a sum of 29 sparse matrix with one zero matrix and 28 rank one matrix. The $k$th rank one matrix consists of the transition due to the $k$th edge and there are 28 edges in the 14-D HH model. The  $k$th column of the $k$th rank one matrix equals to a stoichiometry vector times the square root of the corresponding state occupancy and zeros otherwise. For example, the $k$th column of $\mathcal{G}$ is given by
$$\mathcal{G}_k=\zeta_k\sqrt{\alpha_k(v)\mbX_{i(k)}},$$
where $\zeta_k$ is the stoichiometry vector, $\alpha_k$ is the voltage-dependent \textit{per capita} transition rate, and  $\mbX_{i(k)}$ is the population vector component at the source node $i(k)$ for transition number $k$.
\begin{align}
\mathcal{G}\mathcal{G}^\intercal&=(\mathcal{G}_1+\mathcal{G}_2+\cdots+\mathcal{G}_{28})(\mathcal{G}_1+\mathcal{G}_2+\cdots+\mathcal{G}_{28})^\intercal\\
&=\sum_{k=1}^{28} \mathcal{G}_k\mathcal{G}_k^\intercal \label{eq:GG}\\
&=\sum_{k=1}^{28} \alpha_k(v)\mbX_{i(k)} \zeta_k \zeta_k^\intercal \label{eq:GG2}
\end{align}
where \eqref{eq:GG} holds because $\mathcal{G}_i\mathcal{G}_j^\intercal=0$ when $i\neq j$. 

 Note that $\partial_{i}T_0\partial_{j}T_0
 =\omega^{-2}\partial_{i}\phi(\mbx)\partial_{j}\phi(\mbx)
 =\tilde\mbZ_i(\mbx)\tilde\mbZ_j(\mbx)$, with $\omega\equiv 2\pi/\tbar_0$, because $\phi$ is normalized to range from 0 to $2\pi$, and $T_0$ ranges from 0 to $\tbar_0$.
\begin{align}
&\sum_{ij}(\mathcal{G}\mathcal{G}^\intercal)_{ij}\partial_{i}T_0\partial_{j}T_0=\sum_{ij}(\mathcal{G}\mathcal{G}^\intercal)_{ij}\tilde{\mbZ}_i\tilde{\mbZ}_j\\
&=\sum_{k=2}^{29}\sum_{ij} ( \mathcal{G}_k\mathcal{G}_k^\intercal)_{ij}\tilde{\mbZ}_i\tilde{\mbZ}_j  \\
&=\sum_{k=1}^{28}\Big(\alpha_k(v)\mbX_{i(k)}\sum_{ij} ( \zeta_k \zeta_k^\intercal)_{ij}\tilde{\mbZ}_i\tilde{\mbZ}_j\Big) \\
&=\sum_{k=1}^{28}\alpha_k(v)\mbX_{i(k)} \big[\tilde{\mbZ}_{i(k)}^2+\tilde{\mbZ}_{j(k)}^2-2\tilde{\mbZ}_{i(k)}\tilde{\mbZ}_{j(k)}\big] \label{eqZ}\\
&=\sum_{k=1}^{28}\alpha_k(v)\mbX_{i(k)} \big[\tilde{\mbZ}_{i(k)}-\tilde{\mbZ}_{j(k)}\big]^2\\
&=\sum_{k=1}^{28}\alpha_k(v)\mbX_{i(k)}  \left(\zeta_k^\intercal\tilde{\mbZ}\right)^2, \label{etaZ}
\end{align}
where $i(k)$ and $j(k)$ are the source and sink nodes for transition number $k$. Equation \eqref{eqZ} holds because the $k$th edge only involves two nodes.

\item Finally, we will apply Dynkin's formula to complete the rest of the proof. 

For a stopping time $\tau(\mbx)$ with $\E^\mbx\left( \tau\right)<\infty$, by Dynkin's formula \eqref{eq:Dynkin-formula}, the expected IPI variance starting from $\mbx$ is
\begin{eqnarray}
\E^\mbx\left( V_\epsilon(\mbX(\tau))\right)=V_\epsilon(\mbx)+\E^\mbx
\left(\int_0^\tau \mathcal{L}^\dagger[V_\epsilon (\mbX(s))] ds\right) \label{eq:Dynkin}
\end{eqnarray}
The first return time $\tau$ is the time at which the trajectory $\mbX(t)$ first returns to the isochron $\mathcal{S}_0$, therefore $\mbX(\tau)\in\mathcal{S}_0$ and the time left to reach $\mathcal{S}_0$ from the random location $\mbX(\tau)$ is guaranteed to be zero.  
That is, $V_\epsilon(\mbX(\tau))=0$ with probability 1.  
Hence, $\E^\mbx\left( V_\epsilon(\mbX(\tau))\right)\equiv 0$ for all $\mbx\in\mathcal{S}_0$. 

Fix a mean--return-time isochron $\mathcal{S}_0$, the mean return time from any initial location $\mbx\in\mathcal{S}_0$ back to $\mathcal{S}_0$, after completing one rotation is exactly $\overline{T}_\epsilon$, by construction.  
However, in principle, the \emph{variance} of the return time might depend on the initial location within the isochron.  
We next show that, to leading order in $\epsilon$, this is not the case, that is, the MRT isochrons have uniform first \emph{and} second moment properties.

Using equations \eqref{LV1}, \eqref{etaZ} and \eqref{eq:Dynkin}, we obtain 
\begin{align}
    V_\epsilon(\mbx)&=-\E^\mbx\left(\int_0^\tau\LL^\dagger\left[V_\epsilon(\mbX(s))\right]\,ds\right)\\
    &=\E^\mbx
\left(\int_0^{\tau} \epsilon\sum_{ij}(\mathcal{G}\mathcal{G}^\intercal)_{ij}\partial_{i}T_0\partial_{j}T_0 ds\right)+\oee \\
&=\epsilon \sum_{k=1}^{28}\E^\mbx
\left(\int_0^{\tau}\alpha_k(v)\mbX_{i(k)}  \left(\zeta_k^\intercal\tilde{\mbZ}\right)^2ds\right)+\oee, \label{eq:veps1}
\end{align}
where the integrals are evaluted along a stochastic trajectory $\mbX(t)$ with $\mbX(0)=\mbx$ and $\mbX(\tau)\in\mathcal{S}_0$, one rotation later.
%
Holding the deterministic zero-phase isochron $\mathcal{S}_0$ fixed,  and choosing an  arbitrary point $\mby\in\mathcal{D}$, we have, by definition, \begin{equation}
\E^\mby[\tau(\mby)]=T_\epsilon(\mby)=T_0(\mby)+\epsilon T_1(\mby)+\oee.
\end{equation}

Therefore, starting from  an initial condition $\mbx\in\mathcal{S}_0$ \emph{one period earlier}, we have
\begin{align}
    \E^\mbx
\left(\int_0^{\tau}\alpha_k(v)\mbX_{i(k)}  \left(\zeta_k^\intercal\tilde{\mbZ}\right)^2ds\right) = \E^\mbx
\left(\int_0^{\overline{T}_0}\alpha_k(v)\mbX_{i(k)}  \left(\zeta_k^\intercal\tilde{\mbZ}\right)^2ds\right)+\oe. \label{eq:veps2}
\end{align}
This relation follows immediately from our assumptions, because, for  $\mbx\in\mathcal{S}_0$,
\begin{align}
\nonumber
    &\left|\E^\mbx
\left(\int_0^{\tau}\alpha_k(V(s))\mbX_{i(k)}(s) \left(\zeta_k^\intercal\tilde{\mbZ}(s)\right)^2 ds\right) - \E^\mbx
\left(\int_0^{\overline{T}_0}\alpha_k(V(s))\mbX_{i(k)}(s)  \left(\zeta_k^\intercal\tilde{\mbZ}(s)\right)^2 ds\right)\right|\\
&=
\left|
\E^\mbx
\left(\int_{\overline{T}_0}^{\tau}\alpha_k(V(s))\mbX_{i(k)}(s)  \left(\zeta_k^\intercal\tilde{\mbZ}(s)\right)^2 ds\right)\right|
\\
&\le C_1\left|\E^\mbx(\tau(\mbx)-\overline{T}_0)\right|
=C_1\left|\E^\mbx(\tau(\mbx))-\E^\mbx(\overline{T}_0)\right|
=C_1\left|\overline{T}_\epsilon-\overline{T}_0\right|\\
&=\epsilon C_1 \overline{T}_1+\oee.
\label{eq:bound-on-difference-of-integrals-last-line}
\end{align}

Here $C_1$ is a positive constant bounding the integrand $\alpha_k(v(t))\mbX_{i(k)}(t)\left(\eta_k^\intercal\tilde{\mbZ}(t)\right)^2$.
From Remark \ref{rem:alpha-is-bounded}, $\alpha_k \le \alpha_\text{max}$.  
By definition, $0\le \mbX_i\le 1$ for each $i$.  
For each edge $k$, $|\zeta_k|=\sqrt{2}$.  
Since $\tilde{\mbZ}$ is continuous and periodic, $|\tilde{\mbZ}|$ is bounded by some constant $z_\text{max}$.  
Therefore setting $C_1=\sqrt{2}\alpha_\text{max}z_\text{max}$ satisfies \eqref{eq:bound-on-difference-of-integrals-last-line}.  

Because the initial point $\mbx\in\mathcal{S}_0$ was located at an arbitrary radius along the specified mean--return-time isochron, the calculation above shows that $\sigma_\phi^2=\E[V_\epsilon(\mbx)\given \mbx\in\mathcal{S}_0]$ is uniform across the isochron $\mathcal{S}_0$, to first order in $\epsilon$.  
Thus, for small noise levels, the MRT isochrons enjoy not only a uniform mean return time, but also a uniform variance in the return time, at least in the limit of small noise.  

Finally, we note that $\sigma_\phi^2$ (equivalently, and $V_\epsilon(\mbx)$) combine a sum of contributions over a finite number of edges.   From equations \eqref{eq:veps1} and \eqref{eq:veps2}, the variance of the inter-phase interval is given by 
\begin{eqnarray}
\sigma^2_\phi
&=&\epsilon \sum_{k=1}^{28}\E^\mbx
\left(\int_0^{\overline{T}_0}\alpha_k(V(s))\mbX_{i(k)} (s) \left(\zeta_k^\intercal\tilde{\mbZ}(s)\right)^2ds\right)+\oee \label{eq:IPI_final}.
\end{eqnarray}

 To complete the proof, note that \eqref{eq:theorem1} follows from \eqref{eq:theorem1a} by exchange of expectation $\E^\mbx[\cdot]$ with (deterministic) integration $\int_0^{\overline{T}_0}[\cdot]\,dt$.
This completes the proof of Theorem \eqref{theorem:noise}.

\end{enumerate}
\end{proof}

The choice of the initial reference point $\mbx$ or isochron $\mathcal{S}_0$ in \eqref{eq:theorem1a} was arbitrary and the variance of IPI is uniform to the first order.  
Therefore, the inter-phase-interval variance may be uniform (to first order) almost everywhere in $\mathcal{D}$.
We can then replace the integral around the limit cycle in \eqref{eq:theorem1a} with an integral over $\mathcal{D}$ with respect to the stationary probability distribution.
Thus we have the following
\begin{corollary}
\label{corollary}
Under the assumptions of Theorem \ref{theorem:noise}, the inter-phase-interval variance satisfies
\begin{eqnarray}
\sigma^2_\phi
&=&\epsilon \tbar_0 \sum_{k=1}^{28}\E
\left(\alpha_k(V)\mbX_{i(k)}  \left(\zeta_k^\intercal\tilde{\mbZ}(\mbX)\right)^2\right)+\oee,  \label{eq:theorem1a}
\end{eqnarray}
as $\epsilon\to 0$,
where $\E$ denotes expectation with respect to the stationary probability density for \eqref{eq:langevin-rescaled}.
\end{corollary}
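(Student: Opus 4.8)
The plan is to derive the corollary from Theorem~\ref{theorem:noise} by converting the pathwise time integral in \eqref{eq:theorem1} into a spatial average against the stationary density, exploiting the fact that the entire expression carries a prefactor $\epsilon$, so that any $\oe$ discrepancy between the two ways of averaging is demoted to $\oee$ and absorbed into the stated error. I would write the per-edge integrand as $g_k(\mbx)=\alpha_k(v)\,[\mbx]_{i(k)}\,\bigl(\zeta_k^\intercal\tilde{\mbZ}(\mbx)\bigr)^2$. By assumptions A1--A2, Remark~\ref{rem:alpha-is-bounded}, and continuity of $\tilde{\mbZ}$, each $g_k$ is continuous and bounded on the compact domain $\mathcal{D}_0$, and Theorem~\ref{theorem:noise} reads $\sigma^2_\phi=\epsilon\sum_k\int_0^{\tbar_0}\E^\mbx[g_k(\mbX(t))]\,dt+\oee$.

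First I would replace the stochastic trajectory by the deterministic orbit inside the integrand. For $\mbx=\gamma(0)$ on the limit cycle and small $\epsilon$, the solution obeys $\mbX(t)=\gamma(t)+O(\sqrt{\epsilon})$, with the $O(\sqrt{\epsilon})$ part a zero-mean Gaussian fluctuation solving the linearised equation about $\gamma$. Since $g_k\in C^2$, a Taylor expansion together with the vanishing of the leading fluctuation moment gives $\E^\mbx[g_k(\mbX(t))]=g_k(\gamma(t))+\oe$ uniformly on $[0,\tbar_0]$ (both the $O(\epsilon)$ drift correction to the mean and the $O(\epsilon)$ variance feed in at the same order). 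Integrating and multiplying by the prefactor $\epsilon$ turns this into $\epsilon\sum_k\int_0^{\tbar_0}g_k(\gamma(t))\,dt+\oee$.

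Next I would recognise the deterministic time integral as $\tbar_0$ times a time-average along $\gamma$ and identify this with the stationary spatial average. Explicitly, $\int_0^{\tbar_0}g_k(\gamma(t))\,dt=\tbar_0\,\langle g_k\rangle_\gamma$ with $\langle g_k\rangle_\gamma\defn\tbar_0^{-1}\int_0^{\tbar_0}g_k(\gamma(t))\,dt$. As $\epsilon\to0$ the stationary density $\rho_\epsilon$ of \eqref{eq:langevin-rescaled} concentrates on $\gamma$, with transverse Gaussian width $O(\sqrt{\epsilon})$ and along-orbit weight proportional to the inverse speed, i.e.~it converges to the limit-cycle occupation measure; hence $\E[g_k]=\langle g_k\rangle_\gamma+\oe$ and $\tbar_0\E[g_k]=\int_0^{\tbar_0}g_k(\gamma(t))\,dt+\oe$. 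Substituting, multiplying by $\epsilon$, and summing over the $28$ edges yields \eqref{eq:theorem1a}. The licence to treat the base point as arbitrary, and hence to average over all of $\mathcal{D}$ rather than a single orbit, comes from the uniformity of $V_\epsilon$ across the isochron to first order already established in the proof of Theorem~\ref{theorem:noise}.

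I expect the third step to be the main obstacle: controlling the rate at which $\E[g_k]$ approaches the deterministic time-average $\langle g_k\rangle_\gamma$. Obtaining the $\oe$ bound (or even $o(1)$, which would already suffice after the $\epsilon$ prefactor) requires a quantitative small-noise description of the invariant measure --- for instance a WKB/large-deviation expansion of the stationary Fokker--Planck solution transverse to $\gamma$, or an averaging theorem identifying the $\epsilon\to0$ limit of $\rho_\epsilon$ as the inverse-speed-weighted occupation measure of $\gamma$. Because the transverse fluctuations are only $O(\sqrt{\epsilon})$, one must verify that the odd transverse moments vanish so that the leading correction is genuinely $O(\epsilon)$ rather than $O(\sqrt{\epsilon})$, and that the excised neighbourhood $\mathcal{B}_r(\mathcal{S}^v_0)$ together with the reflecting boundaries of $\mathcal{D}_0$ contribute negligibly. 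These measure-concentration estimates are the delicate ingredient; the remaining steps are routine given the machinery of Theorem~\ref{theorem:noise}.
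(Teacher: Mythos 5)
Your route is genuinely different from the paper's, and the difference matters. The paper's derivation of the corollary is short: step 3 of the proof of Theorem~\ref{theorem:noise} already establishes that the first-order IPI variance $V_\epsilon(\mbx)$ is uniform across any MRT isochron, and the choice of isochron is arbitrary; this licenses replacing the fixed base point $\mbx\in\mathcal{S}_0$ in \eqref{eq:IPI_final} by a starting point drawn from the stationary density. Once the initial condition is stationary, the identity $\int_0^{\tbar_0}\E\bigl[g_k(\mbX(s))\bigr]\,ds=\tbar_0\,\E\bigl[g_k\bigr]$ is \emph{exact} — stationarity makes the one-time marginal of $\mbX(s)$ independent of $s$ — so no statement about where the invariant measure concentrates is ever needed. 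You instead detour through the deterministic orbit: stochastic average $\to$ $g_k(\gamma(t))$ $\to$ time average over $\gamma$ $\to$ stationary spatial average. The first leg is fine (though note the paper only assumes $\phi\in C^2$, so $\tilde{\mbZ}$ and hence $g_k$ is only $C^1$, and the zero-mean-fluctuation argument then yields a correction that is $o(\sqrt{\epsilon})$ rather than $O(\epsilon)$ without further smoothness).

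The genuine gap is the last leg, which you correctly flag yourself: you need $\E[g_k]=\langle g_k\rangle_\gamma+\oe$, i.e.~a quantitative rate for the concentration of the stationary density onto the inverse-speed-weighted occupation measure of $\gamma$, including vanishing of the odd transverse moments and control of the excised neighbourhood and reflecting boundaries. None of this is supplied by assumptions A1--A3, it is not established anywhere in the paper, and you do not carry it out — you only name the tools (WKB, averaging) that would be required. As written, your argument therefore proves the corollary only with an $o(\epsilon)$ remainder at best, not the stated $\oee$, unless the concentration estimate is actually proved. The lesson worth internalizing is that the uniformity-across-isochrons property from the theorem's proof, combined with exact stationarity, buys you the corollary for free and makes the entire measure-concentration apparatus unnecessary.
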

\begin{remark}
Because the variance of the IPI, $\sigma_\phi^2$, is uniform regardless the choice of the reference iso-phase section, we will henceforth refer it as  $\sigma^2_\text{IPI}$ throughout the rest of this paper.
\end{remark}

 Now we have generalized the edge important measure introduced in \cite{SchmidtThomas2014JMN} for the voltage-clamp case to the current clamp case with weak noise. 
 In the next section we leverage Theorem \ref{theorem:noise} to estimate the inter-phase interval variance in two different ways: by averaging over one period of the deterministic limit cycle (compare \eqref{eq:IPI_final})
or by averaging over a long stochastic simulation (compare \eqref{eq:theorem1a}).  
As we will see below, both methods give excellent agreement with direct measurement of the inter-phase interval variance. 

\section{Numerical Results}
\label{sec:Numerical}
Theorem \ref{theorem:noise} and Corollary \ref{corollary} assert that for sufficiently weak levels of channel noise, the contributions to inter-phase interval variance made by  each individual edge in the channel state transition graph (cf.~Fig.~\ref{plot:HHNaKgates}) combine additively.
Moreover, the relative sizes of these contributions provide a basis for selecting a subset of noise terms to include for running efficient yet accurate Langevin simulations, using the stochastic shielding approximation \cite{PuThomas2020NECO}.
In this chapter, we test and illustrate several aspects of these results numerically.  

First, we confront the fact that the inter-\emph{phase}-intervals and the inter-\emph{spike}-intervals are not equivalent, since iso-voltage surfaces do not generally coincide with isochronal surfaces \cite{WilsonErmentrout2018SIADS_operational}.
Indeed, upon close examination of the ISI variance in both real and simulated nerve cells, we find that the voltage-based $\sigma^2_\text{ISI}$ is not constant, as a function of voltage, while the phase-based $\sigma^2_\text{IPI}$ remains the same regardless of the choice of reference isochron.
Nevertheless, we show that the voltage-based ISI variance is well approximated -- to within a few percent -- by the phase-based IPI variance, and therefore, the linear decomposition of Theorem \ref{theorem:noise} approximately extends to the ISI variance as well.

Second, after showing that the linear decomposition of the ISI variance holds at sufficiently small noise levels, we explore the range of noise levels over which the linear superposition of edge-specific contributions to ISI variance holds.  
Consistent with the basic stochastic shielding phenomenon, we find that the variability resulting from noise along edges located further from the observable transitions scales linearly with noise intensity, $\epsilon$ even for moderate noise levels, while the linear scaling of eqn.~\eqref{eq:IPI_final} breaks down sooner with increasing noise for edges closer to observable transitions.  

Finally, we explore the accuracy of a reduced representation using only the two most important  edges from the \K channel and the four most important edges from the \Na channel, over a wide range of noise intensities.  
Here, we find that removing the noise from all but these six edges still gives an accurate representation of the ISI variance far beyond the weak noise regime, despite the apparent breakdown of linearity.

In this section, the variance of ISIs and IPIs are calculated to compare with the predictions using Theorem~\ref{theorem:noise}. 
First, we numerically show that there is a small-noise region within which Theorem~\ref{theorem:noise} holds, for each individual edge, as well as for the whole Langevin model (cf.~\eqref{eq:langevin-rescaled}).
We have two numerical approaches to evaluating the theoretical contributions.  The first method involves integrating once around the deterministic limit cycle while evaluating the local contribution to timing variance at each point along the orbit. 
This approach derives from the theorem, cf.~\eqref{eq:theorem1} or \eqref{eq:IPI_final}, which we refer as the ``limit cycle prediction". 
The second approach derives from the corollary, \eqref{eq:theorem1a}: we average the expected local contribution to timing variation over a long stochastic trajectory.  
More specifically, equation \eqref{eq:theorem1a} gives a theoretical value of the average leading-order contribution mass function, $\mathcal{P}_k$, for the $k^{th}$ edge, as
\begin{equation}\label{eq:Pk}
    \mathcal{P}_k:=\mathbb{E}\left[\alpha_k(V)\mbX_{i(k)}\left(\zeta_k^\intercal \tilde{\mbZ}(\mbX)  \right)^2\right],
\end{equation}
where $\mathbb{E}(\cdot)$ is the mean with respect to the stationary probability distribution of the stochastic limit cycle. 
Given a sample trajectory $\mbX(t)$, we approximate the iPRC near the limit cycle, $\tilde{\mbZ}(\mbX(t))$, by  using the phase response curve of the deterministic limit cycle  
\begin{equation} \label{eq:pmp}
\tilde{\mbZ}(\mbX(t))\approx    \hat{\mbZ}(\mbX(t))\defn     \mbZ\left(
    \underset{s}{\operatorname{argmin}} \left|\bigg(\gamma(s)-\mbX(t)\bigg)^\intercal \mbZ(s)\right|\right),
\end{equation}
where $\gamma$ is a point on the deterministic limit cycle and $\mbZ$ is the infinitesimal phase response curve on the limit cycle (cf.~\S\ref{ssec:iPRC}). 
The predicted contribution of the $k^{th}$ edge to the IPI variance with average period $T_0$, is therefore 

From Corollary \ref{corollary} we have
\begin{equation}
\label{eq:point-mass-prediction}
    \sigma^2_\text{IPI}\approx \epsilon \tbar_0 \sum_k \mathcal{P}_k.
\end{equation}
We call $\mathcal{P}_k$ the \emph{point mass prediction} for the contribution of the $k$th edge to the inter-phase interval variance.

For small noise, both approaches give good agreement with the directly measured IPI variance, as we will see in Fig.~\ref{fig:ISI_K_vth}.

To numerically calculate the contribution for each directed transition in Fig.~\ref{plot:HHNaKgates}, we apply the stochastic shielding (SS) technique proposed by \cite{SchmandtGalan2012PRL}, simulating the Langevin process with noise from all but a single edge suppressed.
Generally speaking, the SS method approximates the Markov process using fluctuations from only a subset of the transitions, often the observable transitions associated to the opening states.
Details about how stochastic shielding can be applied to the $14\times 28$D Langevin model is discussed in our previous paper \cite{PuThomas2020NECO}. 

All numerical simulations for the Langevin models  use the same set of parameters, which are specified in Tab.~\ref{tab:parameters} with given noise level $\epsilon$ in eqn.~\eqref{eq:langevin-rescaled}.
We calculate the following quantities: the point mass prediction $\mathcal{P}_k$, using exact stochastic trajectories \eqref{eq:Pk}; the predicted contributions by substituting the stochastic terms in \eqref{eq:IPI_final} with the deterministic limit cycle; the variance and standard deviation of the interspike intervals ($\sigma^2_\text{ISI}$); and the variance and standard deviation of the isophase intervals ($\sigma^2_\text{IPI}$). 

In addition to numerical simulations, we will also present several observations of experimental recordings.
Data in Fig.~\ref{fig:ISI_num_change} and Fig.~\ref{fig:ISI_change} were recorded {\it{in vitro}} in Dr.~Friel's laboratory from intact wild type Purkinje cells with synaptic input blocked (see \S\ref{sec:experimental-methods} for details). 
We analyzed fourteen different voltage traces from cerebellar Purkinje cells from wild type mice, and seventeen from mice with the \emph{leaner} mutation. 
The average number of full spike oscillations is roughly 1200 for wild type PCs (fourteen cells) and 900 for leaner mutation (seventeen cells).

\subsection{Observations on $\sigma^2_\text{ISI}$ and $\sigma^2_\text{IPI}$} \label{sec:obers_ISI}

When analyzing voltage recordings from {\it{in vitro}} Purkinje cells (PCs) and from simulation of the stochastic HH model, we have the following observations.
First, given a particular (stochastic) voltage trace, the number of interspike intervals (cf.~Def.~\ref{def:ISI}) varies along with the change in voltage threshold used for identifying spikes.  Second, within a range of voltage thresholds for which the number of ISIs is constant, 
the variance of the interspike interval distribution, $\sigma^2_\text{ISI}$ (cf.~Def.~\ref{def:ISI}), which is obtained directly from the voltage recordings, nevertheless varies as a function of the threshold used to define the spike times. 
Thus the ISI variance, a widely studied quantity in the field of computational neuroscience \cite{GutkinErmentrout1998NC,Lindner2004PRE_interspike,Netoff2012PRCN,Shadlen1998JNe,Stein1965BioJ}, is not invariant with respect to the choice of voltage threshold.
To our knowledge this observation has not been previously reported in the neuroscience literature.\footnote{{Throughout this section, we use the term ``threshold" in the data analysis sense of a Schmitt trigger \cite{Schmitt1938IOP}, rather than the physiological sense of a spike generation mechanism.}}

Fig.~\ref{fig:ISI_num_change} plots the histogram of voltage from a wild type PC and number of spikes corresponding to voltage threshold ($V_\text{th}$) in the range of $[-60,-10]$ mV. 
Setting the threshold excessively low or high obviously will lead to too few (or no) spikes.  
As the threshold increases from excessively low values, the counts of threshold-crossing increases.
For example, when $V_\text{th}$ is in the after hyper-polarization (AHP) range (roughly $-58\lesssim V_\text{th}\lesssim-48$ mV in Fig.~\ref{fig:ISI_num_change}) the voltage trajectory may cross the threshold multiple times before it finally spikes.
As illustrated in Fig.~\ref{fig:ISI_num_change}, the number of spikes is not a constant as the threshold varies, therefore, the mean and variance of ISI are not well-defined in the regions where extra spikes are counted.
To make the number of spikes accurately reflect the number of full oscillation cycles, in what follows we will only use   thresholds in a voltage interval that induces the correct number of spikes.
Note that, for a given voltage trace and duration ($T_\text{tot}$), if two voltage threshold generate the same number of spikes ($N_\text{spike}$), the mean ISI would be almost identical, approximately $T_\text{tot}/N_\text{spike}$. This observation holds for both experimental recordings and numerical simulations.
\begin{figure}[htbp]
   \centering
\includegraphics[width=1\textwidth]{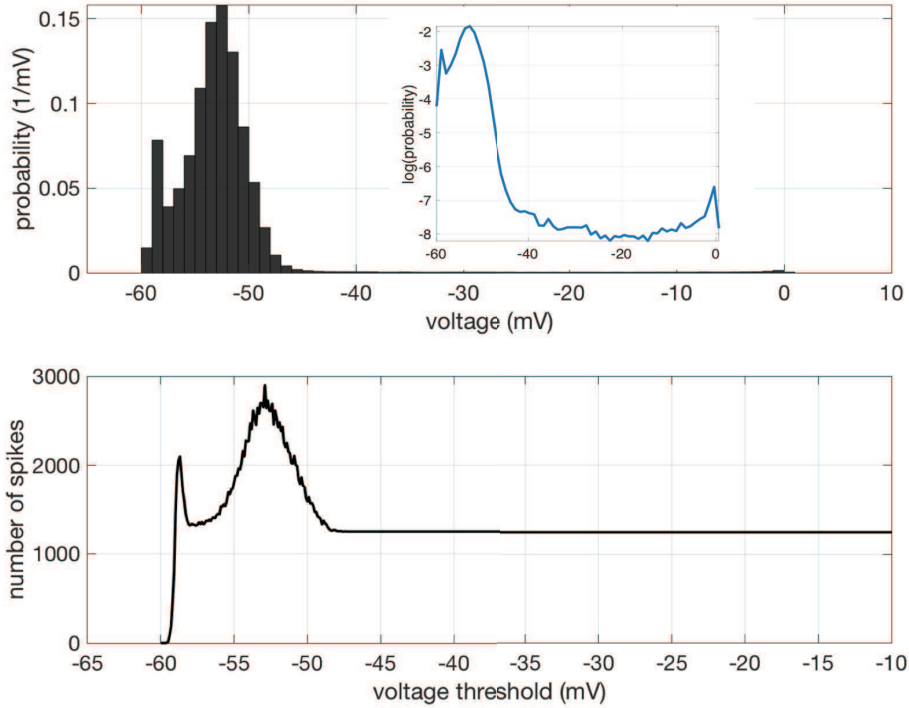}
\caption[Histogram of voltage recordings and number of spikes as a function of voltage threshold]{Histogram of voltage and number of spikes as a function of voltage threshold ($V_\text{th}$) for one wild type Purkinje cell (same data as in Fig.~\ref{fig:WTLE_stats}A \& C). 
The number of ISIs is found by counting up-crossing times as defined in Def.~\ref{def:ISI}. 
For this particular trajectory, the AHP phase locates roughly in the interval $[-58,-48]$ mV.  
The trajectory has 1248 full oscillation cycles. 
When $V_\text{th}$ is near $-60$ mV, it captures fewer spikes than the true value, and when $V_\text{th}\in[-57,-48]$, it tends to overestimate the number of spikes. For $V_\text{th}\in[-48,-10]$, the number of spikes is a constant (1248) that matches the number of full oscillations. 
}\label{fig:ISI_num_change}
\end{figure}

Next we address the sensitivity of the interspike interval to the voltage threshold, within the range over which the number of ISIs is invariant.  
(By ``threshold" we refer throughout to the voltage level used to detect the occurrence and measure the timing of an action potential, rather than a physiological threshold associated with a spike-generation mechanism.)  

From the earliest days of quantitative neurophysiology, the extraction of spike
timing information from voltage traces recorded via microelectrode has relied on
setting a fixed voltage threshold (originally called a Schmitt trigger, after the
circuit designed by O.H. Schmitt \cite{Schmitt1938IOP}).  To our knowledge, it has invariably been
assumed that the choice of the threshold or trigger level was immaterial,
provided it was high enough to avoid background noise and low enough to capture
every action potential \cite{Gerstein1960Science,Mukhametov1970WOL}.  This assumption
is generally left implicit.  
Here, we show that, in fact, the choice of the trigger level (the voltage threshold used for identifying spike timing) can cause a change in the variance of the interspike interval for a given spike train by as much as 5\%.  

\begin{figure}[htbp]
   \centering
\includegraphics[width=1\textwidth]{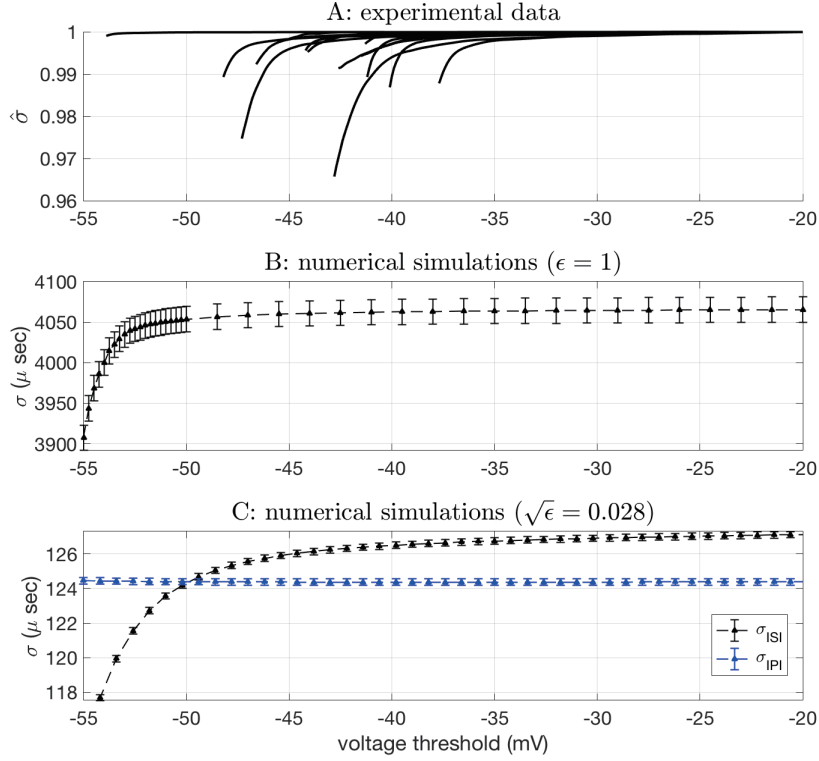}
\caption[Standard deviation of the interspike intervals and iso-phase intervals]{Standard deviation of the interspike intervals ($\sqrt{\sigma^2_\text{ISI}}$, cf.~Def.~\ref{def:ISI}) and standard deviation of the iso-phase intervals ($\sqrt{\sigma^2_\text{IPI}}$, cf.~Def.~\ref{def:IPI}) as a function of voltage threshold.  
\textbf{A:} Rescaled ISI standard deviation ($\hat{\sigma}$) obtained from experimental data recordings from 14 wild type Purkinje cells.
For experimental methods see \S\ref{sec:experimental-methods}.
For each cell, $\sqrt{\sigma^2_\text{ISI}}$ is calculated using voltage threshold ranging from -55 mV to  -20 mV, and scaled by dividing the stand deviation at voltage=-20 mV. 
\textbf{B, C:} standard deviation of ISI when $\epsilon=1$ and $\sqrt{\epsilon}=0.028$ in equation \eqref{eq:langevin-rescaled}, respectively. 
For each voltage threshold, 500 different traces are generated with each trace containing roughly 1000 interspike intervals.
Error bars indicate the 95\% confidence interval of $\sqrt{\sigma^2_\text{ISI}}$ at each threshold. Note the vertical axis is in $\mu$sec.
In \textbf{C}, each value of $\sigma^2_\text{IPI}$ is calculated for the mean--return-time isochron intersecting the deterministic limit cycle at the voltage specified.
 }\label{fig:ISI_change}
\end{figure}

Fig.~\ref{fig:ISI_change} provides evidence both from experimental traces recorded \textit{in vitro}, and from numerical simulations, that $\sigma^2_\text{ISI}$ is sensitive to the voltage threshold defining spike times.
In Fig.~\ref{fig:ISI_change}~A, we superimpose ISI standard deviations from fourteen wild type Purkinje cells, plotted as functions of the the trigger voltage $V_\text{th}$.  
We rescale each plot by the  standard deviation of the ISI for each cell at $V_\text{th}=-20$ mV, which we define as $\bar{\sigma}$. 
As shown in Fig.~\ref{fig:ISI_change}~A, the cells recorded in vitro have a clear variability in the standard deviation as the voltage threshold changes.
Specifically, the standard deviation of ISI gradually increases as voltage threshold increases and then remains constant as the threshold approaches the peak of the spikes.
Two of the cells have larger variations in the standard deviation, with roughly a $3\%-4\%$ change; nine of them have a $1\%-3\%$ change; and three of them show $0.1\%-1\%$ change.

We applied a similar analysis  to seventeen PCs with the leaner mutation \cite{Walter2006NatNeuro}.  
In this case,  one cell had a variation of roughly $1\%$ in the standard deviation, five cells with variations around $0.2\%$, and the remaining without an obvious change (data not shown). 
This difference between cells derived from wild type and leaner mutant mice is an interesting topic for future study.

We observe a similar variability of $\sigma^2_\text{ISI}$ in numerical simulations using our stochastic Langevin HH model (cf.~eqn.~\eqref{eq:langevin-rescaled}). 
Fig.~\ref{fig:ISI_change}~B and C plots two examples showing the change in $\sigma^2_\text{ISI}$ as voltage threshold varies.
For a given noise level ($\epsilon$) and a voltage threshold ($V_\text{th}$), a single run simulates a total time of 9000 milliseconds (ms), with a time step of 0.008 ms, consisting of at least 600 ISIs, which was collected as one realization for the corresponding $\sigma_\text{ISI}(\epsilon,V_\text{th})$.
The mean and standard deviation of the $\sigma_\text{ISI}(\epsilon,V_\text{th})$ is calculated for 1,000 realizations of the aforementioned step for each pair of $\epsilon$ and $V_\text{th}$.
The error bars in Fig.~\ref{fig:ISI_change}~B and C indicate 95$\%$ confidence intervals of the standard deviation.
As illustrated in Fig.~\ref{fig:ISI_change}~B and C, the standard deviation gradually increases as the trigger threshold increases during the AHP, and this trend is observed for both small and large noises.
When $\epsilon=1$, the noisy system in eqn.~\eqref{eq:langevin-rescaled} is not close to the deterministic limit cycle, and there is not a good approximation for the phase response curve. 
When $\sqrt{\epsilon}=0.028$, the system eqn.~\eqref{eq:langevin-rescaled} can be considered in the small-noise region and thus we can find a corresponding phase on the limit cycle as the asymptotic phase.
As shown in Fig.~\ref{fig:ISI_change}~C, unlike the variance of ISI, the variance of IPI is invariant with the choice of the phase threshold ($\phi$). 

\subsection{Numerical Performance of the Decomposition Theorem}\label{sec:num_perf_theorem}

\begin{figure}[htbp] 
   \centering
\includegraphics[width=1\textwidth]{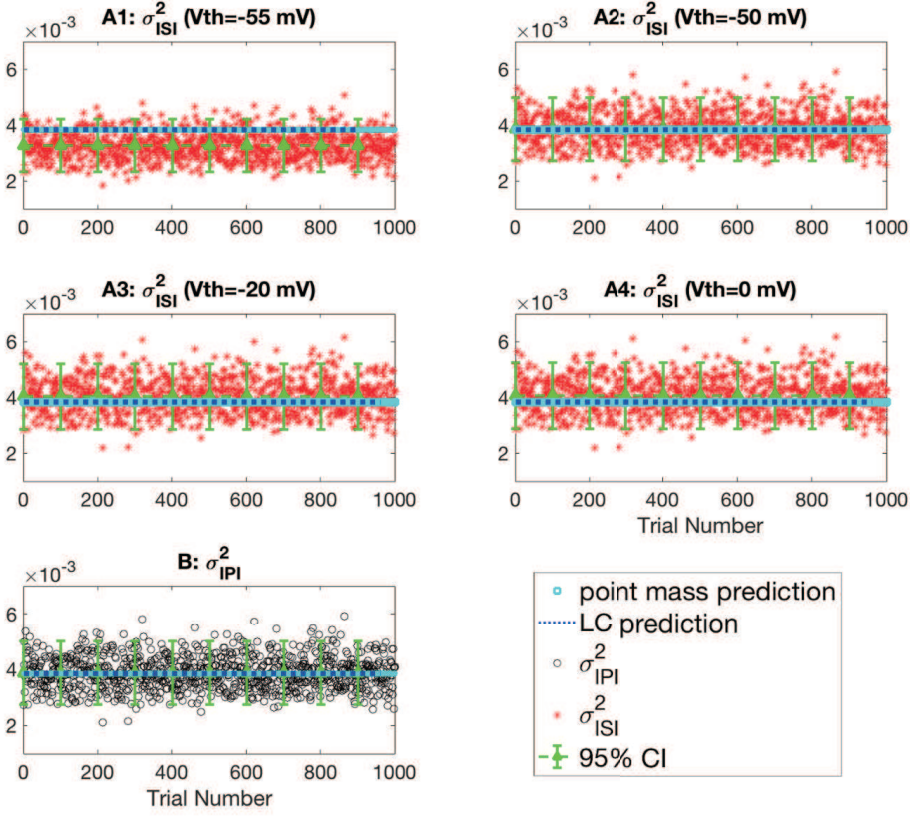}
\caption[Comparison of variance of ISIs ($\sigma^2_\text{ISI}$) and IPIs ($\sigma^2_\text{IPI}$)]{Variance of ISIs ($\sigma^2_\text{ISI}$) and IPIs ($\sigma^2_\text{IPI}$) with only \K~edges included using the stochastic shielding method. 
Cyan dots: point mass prediction (cf.~eqn.~\eqref{eq:pmp}). Solid blue line: limit cycle prediction.
1000 repeated simulations are plotted and for each of the simulation, more than 1000 ISIs (or IPIs) are recorded. 
Each sample point in the plot represents the variance of ISIs and IPIs for one realization.
\textbf{A1-4}: Voltage threshold $V_\text{th}=-55, -50, -20, 0$ mV, with noise level $\sqrt{\epsilon}=0.028$ (effective number of \K~channels $N_\text{tot}\approx2.30\times 10^{6}$).
\textbf{B}: Iso-phase section is the mean-return-time isochron  intersecting the deterministic limit cycle at $V_\text{th}=-50$ mV. }\label{fig:ISI_K_vth}
\end{figure}

In this section, we will apply estimation methods based on Theorem~\ref{theorem:noise} and Corollary~\ref{corollary} to the decomposition of variance of interspike intervals (ISIs, $\sigma^2_\text{ISI}$) and variance of inter-phase intervals (IPIs, $\sigma^2_\text{IPI}$), and numerically test their performance.

Fig.~\ref{fig:ISI_K_vth} presents a detailed comparison of the predicted and measured values of $\sigma^2_\text{ISI}$ and $\sigma^2_\text{IPI}$, when the simulations only include noise from the~\K~channels. 
The channel noise generated by the~\Na~edges is suppressed by applying the stochastic shielding (SS) method to eqn.~\eqref{eq:langevin-rescaled}.
For each plot in Fig.~\ref{fig:ISI_K_vth}, 1000 repeated trials are collected and each trial simulates a total time of 15,000 milliseconds which generates more than 1000 ISIs or IPIs.
Given our previous observation that $\sigma^2_\text{ISI}$ depends on the choice of voltage threshold, we selected four different voltage thresholds for  comparison.

In Fig.~\ref{fig:ISI_K_vth}, red dots in panels A1-A4 mark the ISI variance measured directly from simulated voltage traces, using the indicated $V_\text{th}$ as the trigger voltage. 
Green stem-and-line marks show the mean and 95\% confidence intervals of the direct ISI variance measure, calculated from all 1000 samples.  
The blue dotted line shows the ISI variance predicted from the limit cycle based estimate of the IPI variance (eq.~\eqref{eq:IPI_final}),
and cyan squares show individual estimates using the point-mass prediction (eq.~\eqref{eq:point-mass-prediction}).  
Note each point mass is an independent random variable; these estimates cluster tightly around the limit cycle based estimate.  
Panel B shows the variance of the inter-phase intervals calculated directly from the same 1000 trajectories (as described below), marked in black circles.  
Green stem-and-line marks show the mean and 95\% C.I.~for the IPI variance.  
The blue dotted line and cyan squares represent the same LC-based and point mass based IPI variance estimates as in A1-A4.

As shown in Fig.~\ref{fig:ISI_K_vth} (A1, A2, A3, A4 and B)  the point mass prediction and the LC prediction of the IPI variance give almost the same result.
Specifically, the LC prediction  $\approx3.84\times 10^{-3}$ and the mean of the point mass predictions $\approx3.83\times 10^{-3}$ with a variance of $\approx6.3\times 10^{-11}$. 
Therefore, the LC prediction based on Corollary \ref{corollary} gives a good approximation to the point mass prediction based directly on Theorem \ref{theorem:noise}.
For a given edge (or a group of edges) the LC prediction depends linearly on the scaling factor, $\epsilon$, and can be easily calculated for various noise levels. 
Throughout the rest of this section, we will use the LC prediction as our predicted contribution from the decomposition theorem.

The asymptotic phase is calculated using equation \eqref{eq:pmp} for each point on the stochastic trajectory.
For a given voltage threshold, $V_\text{th}$,
the corresponding iso-phase section is the mean-return-time isochron intersecting the deterministic limit cycle at $V_\text{th}$.
As previously observed, the variance of the IPIs is invariant with respect to the choice of the reference iso-phase section.
As shown in Fig.~\ref{fig:ISI_K_vth} B, 
the prediction of variance of IPIs ($\approx3.83\times 10^{-3}$ {$\text{ms}^2$}) has a good match with the  mean value of numerical simulations ($\approx3.85\times 10^{-3}$ {$\text{ms}^2$}).
The $95\%$ confidence interval of the IPIs are also plotted in  Fig.~\ref{fig:ISI_K_vth} B, which further indicates the reliability of the prediction.

As shown in Fig.~\ref{fig:ISI_K_vth} (A1, A2, A3 and A4), with $V_\text{th}\in[-55,0]$ mV, the numerical realizations of $\sigma^2_\text{ISI}$ are close to the predictions from the main theorem.
However, the accuracy depends on the choice of the voltage threshold.
As noted above, when the trigger voltage $V_\text{th}$ is set below $-50$mV (for example, $-55$mV in Fig.~\ref{fig:ISI_K_vth},A1), the measured variance of ISIs falls below the value predicted from the IPI variance.
When $V_\text{th}\approx-50$mV, the empirically observed value of $\sigma^2_\text{ISI}$ gives the best match to the IPI variance (cf.~Fig.~\ref{fig:ISI_change},C).
When the trigger voltage $V_\text{th}$ exceeds $-50$mV (for example, $-20$mV in Fig.~\ref{fig:ISI_K_vth},A3, and $0$mV in Fig.~\ref{fig:ISI_K_vth},A4), the empirically observed variance of the ISIs is consistently higher than the IPI variance.
Nevertheless, although the empirically observed numerical values of $\sigma^2_\text{ISI}$ ($\approx 4.00\times10^{-3} \text{ms}^2$) overestimate the IPI-derived value when $V_\text{th}>-50$mV, they remain close to the IPI value.
Fig.~\ref{fig:ISI_K_vth} panels A1-4 show that even though the IPI-based prediction of the ISI variance works best when the trigger voltage is set to  $V_\text{th}\approx -50$mV, the IPI-based variance falls within the  $95\%$ confidence interval of $\sigma^2_\text{ISI}$ regardless of the value of $V_\text{th}$ chosen. 
Therefore, we can conclude that Theorem~\ref{theorem:noise} and Corollary~\ref{corollary} give a good approximation to the value of $\sigma^2_\text{ISI}$, at least at noise level $\sqrt{\epsilon}=0.028$.

Practically, the voltage-based interspike interval variance, $\sigma^2_\text{ISI}$, is a more widely used quantity \cite{GutkinErmentrout1998NC,Lindner2004PRE_interspike,Netoff2012PRCN,Shadlen1998JNe,Stein1965BioJ} because it can be calculated directly from  electrophysiological recordings. 
The inter-phase interval variance, $\sigma^2_\text{IPI}$, however, can not be directly measured or calculated.
Even given the stochastic model with its realizations, calculating the asymptotic phase and finding the IPIs are numerically expensive. 
Despite its lack of consistency, as shown in Fig.~\ref{fig:ISI_K_vth} (A3 and A4), the $\sigma^2_\text{ISI}$ can approximately be decomposed using Theorem~\ref{theorem:noise} and Corollary~\ref{corollary}, which offer predicted values that fall in the 95\% confidence interval of $\sigma^2_\text{ISI}$.

\begin{figure}[htbp]
   \centering
\includegraphics[width=1\textwidth]{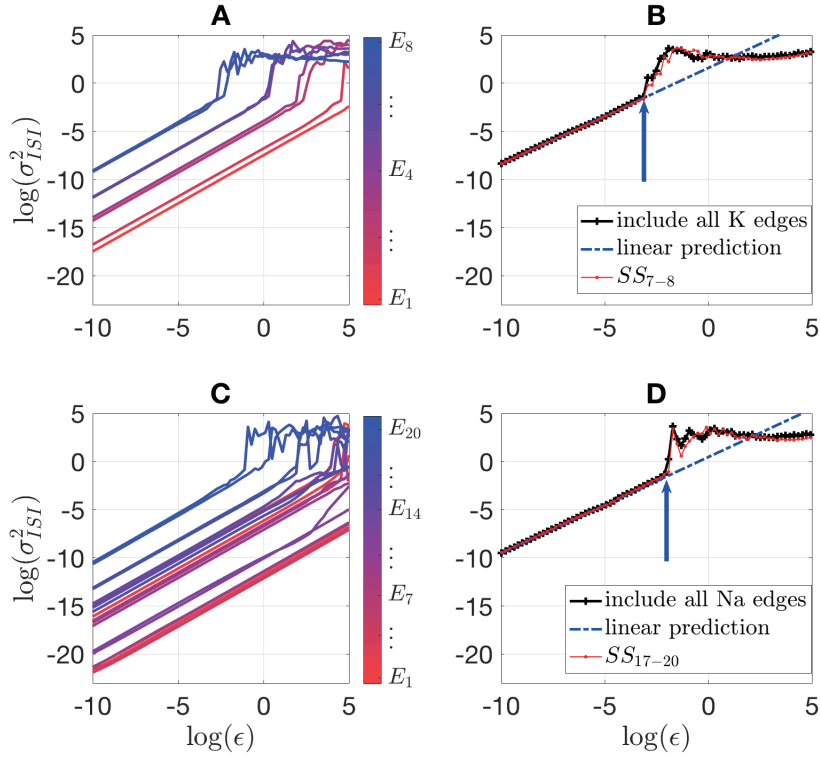}
\caption[Numerical performance of the decomposition for the ISI variance for \Na~and \K~kinetics]{Numerical performance of the decomposition for the ISI variance for \Na~and \K~kinetics.  
\textbf{A\&C}: $\sigma^2_{\text{ISI}}$ for individual \K~(\textbf{A}) and \Na~(\textbf{C}) edges.  
$E_k$ marks the $k$th edge, $k\in\{1,\ldots,8\}$ for \K~ and $k\in\{1,\ldots,20\}$ for \Na.
\textbf{B\&D:} Linearity of superposition for \K~(\textbf{B}) and \Na~(\textbf{D}) channels. 
See text for details.
$\sigma^2_\text{ISI}$ with noise from all \K~(\Na) edges included (black line), with only the most significant edges using stochastic shielding (red line) included, and the linear prediction from Theorem \ref{theorem:noise} (blue dashed line).
 }\label{fig:ISI_Na_K_all}
\end{figure}

Fig.~\ref{fig:ISI_Na_K_all} summarizes the overall fit of the decomposition of variance of ISIs to the prediction from Theorem~\ref{theorem:noise} and Corollary~\ref{corollary}.
We applied the stochastic shielding method by including each directed edge separately in the transition graph (cf.~Fig.~\ref{plot:HHNaKgates}).
In Fig.~\ref{fig:ISI_Na_K_all} (B and D), the variance of the ISIs is compared with the  value obtained with the limit cycle based prediction from eqn.~\eqref{eq:IPI_final}. 

Fig.~\ref{fig:ISI_Na_K_all} (A and C) shows the log-log plot for the ISI variance ($\sigma^2_\text{ISI}$) of each individual edge as a function of the noise level, $\epsilon$, in the range of $[e^{-10},e^{5}]$, measured via direct numerical simulation using $V_\text{th}=-20$ mV. 
The color for each edge ranges from red to blue according an ascending order of edge numbers (1-8 for \K~and 1-20 for \Na). 
The total effective number of \Na~channels is $\mtot=M_\text{ref}/\epsilon$ and of \K~channels is   $\ntot=N_\text{ref}/\epsilon$, where the reference channel numbers are $M_\text{ref}=6000$ and $N_\text{ref}=1800$ (described in~\S\ref{subsec:small_noise_expan}).
That is, we consider ranges of channel numbers $40\lesssim \mtot\lesssim 1.3\times 10^{8}$ for \Na~and $12\lesssim\ntot\lesssim 4.0\times 10^7$ for \K.
Thus, we cover the entire range of empirically observed single-cell channel populations (cf.~Tab.~\ref{tab:Na_K_channel_num}).

As shown in Fig.~\ref{fig:ISI_Na_K_all} (A and C), the linear relation between $\sigma^2_\text{ISI}$ and $\epsilon$  predicted from Theorem \ref{theorem:noise} is numerically observed for all 28 directed edges in the \Na~and~\K~transition graphs (cf.~\ref{plot:HHNaKgates}) for small noise.
The same rank order of edge importance discussed in \cite{PuThomas2020NECO} is also observed here in the small noise region.
Moreover, the smaller the edge importance measure for an individual edge, the larger the value of $\epsilon$ before observing a breakdown of linearity. 

Fig.~\ref{fig:ISI_Na_K_all} (B and D) presents the log-log plot for the ISI variance ($\sigma^2_\text{ISI}$, black solid line) when including noise only from the \K~edges and \Na~edges, respectively. 
As in panels A and C, the noise level, $\epsilon$ is in the range of $[e^{-10},e^{5}]$.
The LC prediction for eqn.~\eqref{eq:IPI_final} from Theorem~\ref{theorem:noise} when including noise from only the \K~(or \Na)~channels is plotted in dashed blue. 
For example, the linear noise prediction for the potassium channels alone is
\begin{equation}
  \sigma^2_\text{ISI}\approx \sum_{k=1}^\mathcal{E_\text{K}}\sigma^2_{\text{ISI},k}  
\end{equation}
where $\mathcal{E}_\text{K}=8$ (similarly, $\mathcal{E}_\text{Na}=20$), and $\sigma^2_{\text{ISI},k}$ is the LC prediction for the $k^{th}$ edge. 
As shown in Fig.~\ref{fig:ISI_Na_K_all} panel B, the linear prediction matches well with the numerically calculated  $\sigma^2_\text{ISI}$ up to $\ln(\epsilon)\approx-3.0$ (indicated by the blue arrow) which corresponds to  approximately 36,000 \K~channels.
For \Na, the theorem gives a good prediction of the numerical $\sigma^2_\text{ISI}$ up to $\ln(\epsilon)\approx-1.9$ (indicated by the blue arrow) which corresponds to approximately 40,000 \Na~channels.
These channel population sizes are consistent with typical single-cell ion channel populations, such as the population of \Na~channels in the node of Ranvier, or the \Na~and \K~channels in models of the soma of a cerebellar Purkinje cell (cf.~Tab.~\ref{tab:Na_K_channel_num}).  

Finally, we apply stochastic shielding (SS) to both the \K and \Na channels by only including noise from
the edges making the largest contributions in Fig.~\ref{fig:ISI_Na_K_all} panels A and C.
For the \K~channel, we include edges 7 and 8, and for \Na, we include edges 17, 18, 19 and 20.
As shown in Fig.~\ref{fig:ISI_Na_K_all} panels B and D, the SS method (solid red line) gives a good match to the overall $\sigma^2_\text{ISI}$ for all noise intensities  $\epsilon\in[e^{-10},e^{5}]$, with numbers of \K~channels $\ge12$ and \Na~channels $\ge40$.

\begin{figure}[htbp] 
   \centering
\includegraphics[width=1\textwidth]{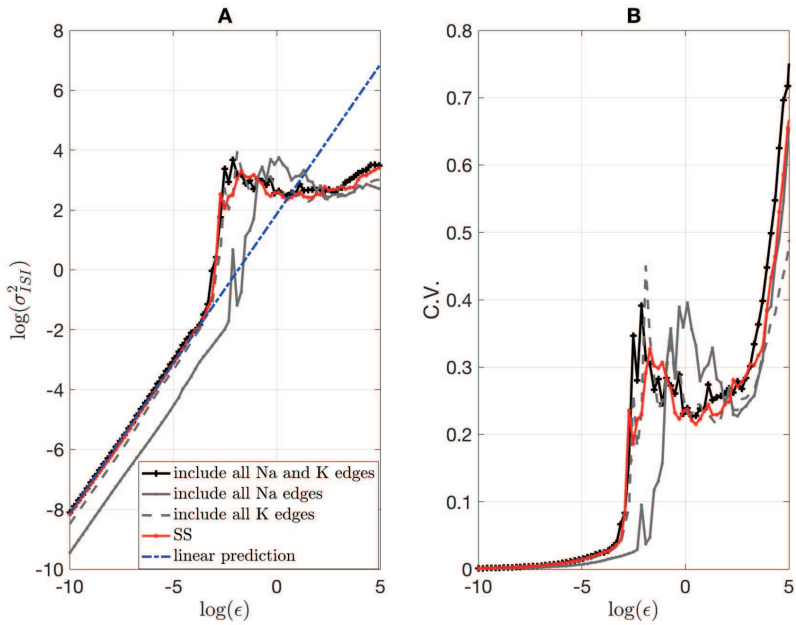}
\caption[Overall performance of the decomposition theorem on ISI variance]{Numerical performance of the decomposition for the ISI variance of the full system. 
\textbf{A:} Log-log plot of $\sigma^2_\text{ISI}$ for $\epsilon\in[e^{-10},e^{5}]$.
ISI variance contribution $\sigma^2_{\text{ISI}}$ with noise from all 28 edges included (black line), only from 8 \K~edges (dashed grey), only from 20 \Na~edges (solid grey), and SS using noise from six  edges (red line, see text for details). 
The linear prediction from Theorem \ref{theorem:noise} for the whole system is plotted for comparison (dashed blue line.) \textbf{B:} Coefficient of variation (C.V.), or mean ISI divided by $\sqrt{\sigma^2_\text{ISI}}$, vs.~$\log(\epsilon)$.
Same color scheme as \textbf{A}.
Compare Fig.~\ref{fig:WTLE_stats}, which shows data from two cerebellar Purkinje cells, a wild-type cell with C.V.~$\approx0.039$ and a cell from a \emph{leaner} mouse with C.V.~$\approx0.30$.
}\label{fig:all_SS}
\end{figure}

Fig.~\ref{fig:all_SS} shows the overall performance of the prediction of 
$\sigma^2_\text{ISI}$
based on Theorem \ref{theorem:noise},  when noise from all 28 directed edges are included (black line).
The theorem is stated as an asymptotic result in the limit of weak noise. 
The predicted ISI variance using the theorem (dashed blue curve) matches the ISI variance obtained from the full numerical simulation for modest noise levels, up to $\ln(\epsilon)\le-3.9$,  corresponding to $\gtrsim$ 90,000 \K~channels and $\gtrsim$ 300,000 \Na~channels.
These population sizes are at the high end of the range of typical numbers of channels neurons (cf.~Tab.~\ref{tab:Na_K_channel_num}).  

For smaller ion channel populations (larger noise levels), the linear approximation breaks down, but the stochastic shielding approximation remains in good agreement with full numerical simulations.  Fig.~\ref{fig:all_SS} shows $\sigma^2_\text{ISI}$ from simulations using the SS method including only noise from the six most important edges (edges 7-8 in \K~and 17-20 in \Na),  plotted in solid red. 
For $\ln(\epsilon)\gtrsim-3.5,$ both the full simulation and the SS simulation show a rapid increase in $\sigma^2_\text{ISI}$ with increasing noise level.  
This dramatic increase in timing variability results when increasing noise causes the neuron to ``miss" spikes, that is, to generate a mixture of regular spiking and small subthreshold oscillations \cite{RowatGreenwood2011NECO}.
Including noise from all 20 \Na~channel edges (gray line) or all eight \K~channel edges (gray dashed line) shows a similar jump, albeit delayed to higher values of $\epsilon$ for the \Na~channel.  
Note also the \Na~channel alone has a quantitatively smaller contribution to ISI variability for the stochastic HH model than the \K~channel for all noise levels in the linear region.  

For larger noise levels ($\ln(\epsilon)\gtrsim-2$), all simulations become sufficiently noisy that $\ln(\sigma^2_\text{ISI})$ collapse to a similar level, approximately 3.  
As the interspike interval is a nonnegative random quantity with a constrained mean (bounded by the reciprocal of the firing rate), once the spike train has maximal variability, further increasing the strength of the channel noise does not drive the ISI variance appreciably higher.
However, although the ISI variance appears approximately to saturate with increasing noise, the coefficient of variation (C.V., $\sqrt{\sigma^2_\text{ISI}}/I$) continues to increase (Fig.~\ref{fig:all_SS}B), because the mean ISI ($I$) decreases with increasing noise (the firing rate increases with increasing noise, data not shown).

\section{Discussion}
\label{sec:discussion}
We prove in \S\ref{sec:noise_decomp} that the numerically calculated edge importance can be quantified from the molecular-level fluctuations of the stochastic Hodgkin-Huxley (HH) kinetics. Specifically, we combine the stochastic shielding approximation with the re-scaled Langevin models (eqn.~\eqref{eq:langevin-rescaled}) of the HH model to derive analytic results for decomposing the variance of the cycle time (the iso-phase intervals) for mean--return-time isochrons of the stochastic HH models.
We prove in theory, and confirm via numerical simulations, that in the limit of small noise, the variance of the iso-phase intervals decomposes linearly into a sum of contributions from each edge.
We show numerically that the same decomposition affords an efficient and accurate estimation procedure for the interspike intervals, which are experimentally observable.
Importantly, our results apply to current clamp rather than to voltage clamp conditions. 
Under current clamp, a stochastic conductance-based model is an example of a piecewise-deterministic Markov process (PDMP).
We show in \S\ref{sec:num_perf_theorem} that our theory is exact in the limit of small channel noise (equivalently, large ion channel population size); through numerical simulations we demonstrate its applicability even in a range of small to medium noise levels, consistent with experimentally inferred single-cell ion channel population sizes. 
In addition, we present the numerical performance of the SS method under different scenarios and argue that the stochastic-shielding approximation together with the $14\times 28$D Langevin representation give an excellent choice of simulation method for ion channel populations spanning the entire physiologically observed range.

Our $14\times 28$D Langevin model (eqn.~\eqref{eq:langevin-rescaled}) can be shown to be \emph{pathwise equivalent} to a class of Langevin models on a 14D state space \cite{PuThomas2020NECO}.  
The first such model was introduced by Fox and Lu \cite{FoxLu1994PRE} and subsequently investigated by \cite{GoldwynSheaBrown2011PLoSComputBiol}.
Pathwise equivalence of two models implies that they have the same distribution over sample paths, hence identical moments including moments related to first-passage and return times.  
One could undertake the same investigation into the variability of spike timing as in this paper using Fox and Lu's formulation, however the $14\times 28$D representation lends itself to an elegant application of the stochastic shielding approximation \cite{SchmandtGalan2012PRL,SchmidtThomas2014JMN,SchmidtGalanThomas2018PLoSCB} that would be cumbersome to apply to other formulations.  
Moreover, as shown in \cite{PuThomas2020NECO}, the $14\times 28$D formulation is at least as fast or faster than its pathwise equivalent alternatives, while having (necessarily) the same accuracy (cf.~Fig~\ref{fig:all_SS}).  
Thus we concur with the assessment of \cite{OrioSoudry2012PLoS1} that the best combination of speed and accuracy for Langevin-type simulation of stochastic conductance based models is given by the diffusion approximation simulations \cite{OrioSoudry2012PLoS1}, while we treat each edge as an independent noise source, combined with stochastic shielding.  

Initially stochastic shielding was introduced for both voltage clamp and current clamp scenarios \cite{SchmandtGalan2012PRL}, but rigorous investigation of the method \cite{SchmidtThomas2014JMN,SchmidtGalanThomas2018PLoSCB} were restricted to voltage clamp.  
Stochastic conductance-based models under current clamp comprise hybrid or piecewise deterministic systems (cf.~Fig.\ref{plot:HHNaKgates}~C), while systems under fixed-voltage-clamp correspond to time-invariant discrete-state Markov chains, for which the theory is well established \cite{SchmidtThomas2014JMN}.
\subsection{Number of Channels in Different Cell Types}
\begin{table}[htbp]
    \centering
\begin{tabular}{ |l|l|l|l| }
\hline
\multicolumn{4}{ |c| }{Estimated numbers of \Na~and \K~channels in different cell types}\\
\hline
Ion & Type of cell& Number of channels & Reference \\ \hline
\multirow{7}{*}{\Na} & chromaffin cells & 1,800-12,500 & \cite{FenwickJPhy1982,TNP2007}$^a$\\
 & cardiac Purkinje cells & $\gtrsim$325,000 & \cite{MakielskiBJ1987}$^b$\\
 & node of Ranvier & 21,000-42,000 & \cite{Sigworth1977Nature}$^c$\\
& squid axon $(1 mm)^d$& $\gtrsim$18,800 & \cite{Faisal2005CB}$^d$  \\ 
& pyramidal cell& $\gtrsim$17,000   & \cite{Faisal2005CB}$^d$ \\
&Purkinje cell$^g$ &  47,000-158,000& \cite{Forrest2015BMC,Shingai1986BR}$^{d,f,g}$\\
&pre-B\"otC neurons$^{h}$ & 56-5,600&\cite{Butera1999JNeuro,Faisal2005CB}$^{d,f,h}$\\\hline
\multirow{4}{*}{\K} & squid axon $(1 mm)^d$& $\gtrsim$5,600 & \cite{Faisal2005CB}$^d$  \\ 
& pyramidal cell& $\gtrsim$2,000   & \cite{Faisal2005CB}$^d$ \\
&Purkinje cell$^g$ & 3,000-55,000& \cite{Forrest2015BMC,Shingai1986BR}$^{d,e,g}$ \\
&pre-B\"otC neurons$^{h}$ & 112-2,240&\cite{Butera1999JNeuro,Faisal2005CB}$^{d,e,h}$\\
\hline
\end{tabular}
    \captionsetup{singlelinecheck=off}
\caption[Number of \Na~and \K~channels]{Details of the data sources:
  \begin{enumerate}[label=(\alph*)]
    \item \Na~density: 1.5-10 channels/$\mu  m^2$\cite{FenwickJPhy1982}, the average diameter of rounded chromaffin cells is $d\approx20\mu m$, Area=$\pi d^2$ \cite{TNP2007}. 
    \item \Na~density: 260 channels/$\mu m^2$ \cite{MakielskiBJ1987}, and diameter of roughly $20\mu m$ \cite{MakielskiBJ1987}. 
    \item Number of \Na~channels in Tab.~1 from \cite{Sigworth1977Nature}.
    \item \Na~density: 60 channels/$\mu m^2$ in squid axon, and 68 channels/$\mu m^2$ in pyramidal axon (Tab.~S1 in \cite{Faisal2005CB}). 
    \K~density: 18 channels/$\mu m^2$ in squid axon, and 8 channels/$\mu m^2$ in pyramidal axon (Tab.~S1 in \cite{Faisal2005CB}). 
    Membrane area: squid axon: 0.1 $\mu m$ diameter and 1$mm$ length (Fig.~S2 in \cite{Faisal2005CB}); pyramical cell: $0.08 \mu m$ diameter with 1 $mm$ length (Fig.~S1 in \cite{Faisal2005CB}). Single voltage-gated ion channel conductance is typically in the range of 5-50 pS, and 15-25 pS for \Na~(p.~1148 \cite{Faisal2005CB}).
    \item Single \K~channel conductance (\cite{Shingai1986BR}): inward rectifier in horizontal cells (20-30 pS in 62-125 mM \K, 9-14$^\circ$C); skeletal muscle (10 pS in 155 mM \K, 24-26$^\circ$C);
    egg cells ($\approx$6 pS for 155 mM \K, 14-15$^\circ$C); heart cells (27 pS for 145 mM \K~at 17-23$^\circ$C; 45 pS for 150 mM \K~at 31-36$^\circ$C).
    \item Single \Na~channel conductance is $\approx$14 pS in squid axon, other measurements under various conditions show results in the range of 2-18 pS (Tab.~1 in \cite{Bezanilla1987BJ}).
    \item Maximal conductance for different \K~channels (Tab.~1 in \cite{Forrest2015BMC}): SK \K~(10 mS/$cm^2$), highly TEA \K~(41.6 mS/$cm^2$) sensitive BK \K~(72.8 mS/$cm^2$); membrane area (1521 $\mu m^2$). Maximal conductance for resurgent \Na~(156 mS/$cm^2$).
    Note that the range of \K~channels provided here is for each type of \K~channel, not the total number of \K~channels.
    \item Maximal conductance ($\bar{g}_\text{ion}$) in pacemaker cells of the pre-B\"otzinger complex (pre-B\"otC)  \cite{Butera1999JNeuro}:
    $\bar{g}_\text{NaP}=2.8$ nS for persistent \Na~current, $\bar{g}_\text{Na}=28$ nS for fast \Na~current, and $\bar{g}_\text{K}\in[5.6,11.2]$ nS for different types of \K~channels (p.~384-385).
  \end{enumerate}
  }
    \label{tab:Na_K_channel_num}
\end{table}
Channel noise arises from the random opening and closing of finite populations of ion channels embedded in the cell membranes of individual nerve cells, or localized regions of axons or dendrites.  
Electrophysiological and neuroanatomical measurements do not typically provide direct measures of the sizes of ion channel populations.  
Rather, the size of ion channel populations must be inferred indirectly from other measurements.  
Several papers report the density of sodium or potassium  channels per area of cell membrane \cite{Faisal2005CB,FenwickJPhy1982,MakielskiBJ1987}.  
Multiplying such a density by an estimate of the total membrane area of a cell gives one estimate for the size of a population of ion channels. 
Sigworth \cite{Sigworth1977Nature} pioneered statistical measures of ion channel populations based on the mean and variance of current fluctuations observed in excitable membranes, for instance in the isolated node of Ranvier in axons of the frog.  
Single-channel recordings \cite{Neher1976Nature} allowed direct measurement of the ``unitary", or single--channel-conductance, $g^o_\text{Na}$ or $g^o_\text{K}$.
Most conductance-based, ordinary differential equations  models of neural dynamics incorporate  maximal conductance parameters ($\overline{g}_\text{Na}$ or $\overline{g}_\text{K}$) which nominally represents the conductance that would be present if all channels of a given type were open.  
The ratio of $\overline{g}$ to $g^o$ thus gives an indirect estimate of the number of ion channels in a specific cell type.
Tab.~\ref{tab:Na_K_channel_num} summarizes a range of estimates for ion channel populations from several sources in the literature.  
Individual cells range from 50 to 325,000 channels for each type of ion.  
In \S \ref{sec:num_perf_theorem} of this thesis, we will consider effective channel populations spanning this entire range (cf.~Figs \ref{fig:ISI_K_vth}, \ref{fig:ISI_Na_K_all}).

\subsection{Different Methods for Defining ISIs}
There are several different methods for detecting spikes and quantifying interspike intervals (ISIs). 
In one widely used approach \cite{Gerstein1960Science,GutkinErmentrout1998NC,Lindner2004PRE_interspike,Mukhametov1970WOL,Netoff2012PRCN,Shadlen1998JNe,Stein1965BioJ}, we can define the threshold as the time of upcrossing a fixed voltage, which is also called a Schmitt trigger (after O.H. Schmitt \cite{Schmitt1938IOP}). 
We primarily use this method in this thesis.

As an alternative, the time at which the rate of change of voltage, $dV/dt$, reaches its maximum value (within a given spike) has also been used as the condition for detecting spikes \cite{Azouz2000NAS}.
However, in contrast with the voltage-based Schmitt trigger, using the maximum of $dV/dt$ to localize the spike does not give a well-defined Poincar\'e section.  
To see this, consider that for a system of the form \eqref{eq:langevin-rescaled} we would have to set 
\begin{align}
    \label{eq:d2Vdt2}
    \frac{d^2V}{dt^2}&=\frac{d}{dt}f(V,\mbN)\\
    &=f(V,\mbN)\frac{\partial f}{\partial V}(V,\mbN) - \frac{dM_8}{dt} g_\text{Na}(V-V_\text{Na}) - \frac{dN_5}{dt} g_\text{K}(V-V_\text{K})
    \nonumber
\end{align}
equal to zero to find the corresponding section.  
The difficulty is evident: for the Langevin system the open fraction $M_8$ (resp.~$N_5$) of sodium (resp.~potassium) channels is a diffusion process, and is not differentiable, so ``$dM_8/dt$'' and ``$dN_5/dt$" are not well defined.  
Moreover, even if we could interpret these expressions, the set of voltages $V$ and gating variables $\mbN$ for which \eqref{eq:d2Vdt2} equals zero depends on the instantaneous value of the noise forcing, so the corresponding section would not be fixed within the phase space.  
For a discrete state stochastic channel model, the point of maximum rate of change of voltage could be determined post-hoc from a trajectory, but again depends on the random waiting times between events, and so is not a fixed set of points in phase space.
For these reasons we do not further analyze ISIs based on this method of defining spikes, although we nevertheless include numerical ISI variance based on this method, for comparison (see Fig.~\ref{fig:dvdt1} below).    

As a third possibility, used for example in  \cite{GutkinErmentrout1998NC}, one sets the voltage nullcline ($dV/dt=0$), at the top of the spike, as the Poincar\'e section for spike detection. 
That is, one uses a surface such as $\mathcal{S}_\text{peak}=\{(v,\mbn)\given f(v,\mbn)=0\}\cap\{v>-40\}$.
This condition does correspond to a well-defined Poincar\'e section, albeit one with a different normal direction than the voltage-based sections.  

In contrast to the ISI variance, which depends to some degree on the choice of spike-timing method used, the \emph{mean} ISI is invariant.  
Both in numerical simulations and from experimental recordings, the mean interspike interval using any of the three methods above is very stable. 
But the apparent ISI variance changes, depending on the method chosen.

We observe in both real and simulated voltage traces that the ISI variance, $\sigma^2_\text{ISI}$, depends not only on the method for identifying spikes, but also on the voltage used for the Schmitt trigger.  
To our knowledge this sensitivity of ISI variance to trigger voltage has not been previously reported. 
Generally speaking, from analyzing both simulation and recorded data from in vitro studies, the ISI variance is not a constant, but increases slightly as the voltage threshold defining a ``spike" is increased (cf.~\S\ref{sec:obers_ISI}). 
Thus the ISI variance is not an intrinsically precisely invariant quantity for model or real nerve cells. 

\begin{figure}[htbp] 
   \centering
\includegraphics[width=1\textwidth]{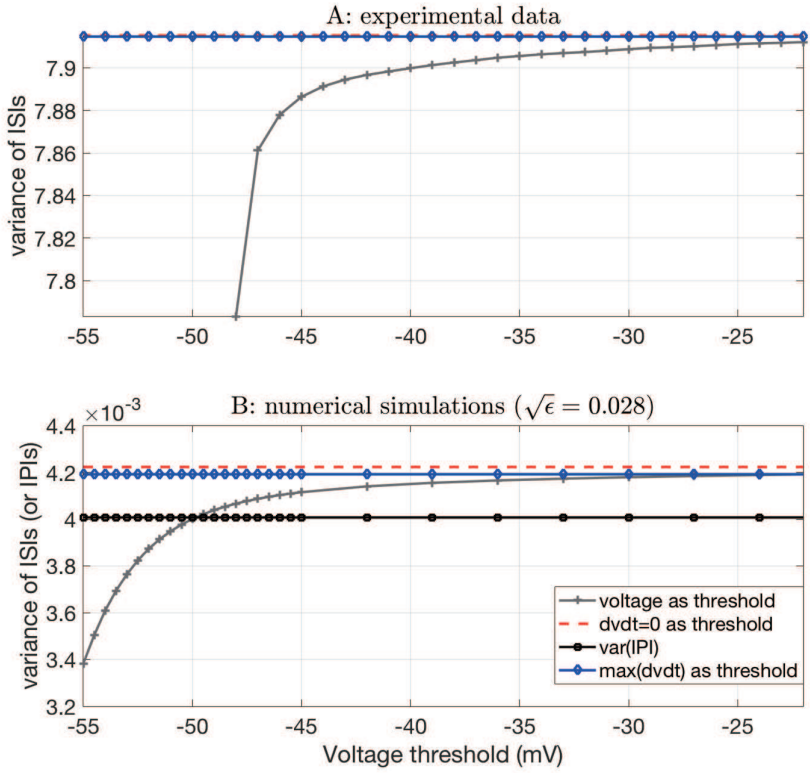}
\caption[Variance of ISI using different threshold conditions]{Variance of interspike intervals using different threshold conditions. 
\textbf{A:} $\sigma^2_\text{ISI}$ of spikes from a single trace of a wild type Purkinje cells comprising 785 ISIs.  $\sigma^2_\text{ISI}=7.9151$ when setting $dv/dt=0$ as the threshold (dashed red), and $\sigma^2_\text{ISI}=7.9146$ when maximum $dv/dt$ is set to be the threshold condition (blue). Different voltage thresholds show increasing $\sigma^2_\text{IS}$ with voltage (gray).  
For experimental methods see \S\ref{sec:experimental-methods}.
\textbf{B:} ISI variance from a Langevin HH (cf.~eq.~\ref{eq:langevin-rescaled}) simulation with small noise ($\sqrt{\epsilon}=0.028$) comprising c.~1000 ISIs.  Labels as in \textbf{A}. The variance of the inter-\emph{phase} intervals is constant regardless of the particular isochron chosen (black).
}\label{fig:dvdt1}
\end{figure}

Fig.~\ref{fig:dvdt1} shows $\sigma^2_\text{ISI}$ obtained empirically from electrophysiological recordings of Purkinje cells \textit{in vitro} (upper plot) and from simulations of the stochastic Hodgkin-Huxley system (lower plot) with a small noise amplitude ($\sqrt{\epsilon}=0.028$) using the three methods for spike time extraction described above, for a single voltage trace comprising 785 interspike intervals. 
The ISI variance as a function of trigger voltage increases steadily from below 7.8 ms$^2$ to above 7.9 ms$^2$ as the trigger voltage increases from -50 mV to -20 mV.
In contrast, the ISI variance obtained from the peak voltage ($dV/dt\approx 0$, obtained using linear interpolation of the first-order voltage difference)
or the maximum slope condition ($d^2V/dt^2\approx 0$ and $dV/dt>0$, obtained using linear interpolation of the second-order voltage difference) give nearly indistinguishable values (red and blue superimposed traces in Fig.~\ref{fig:dvdt1}A) that lie slightly above the largest value of $\sigma^2_\text{ISI}$ at the upper range of the trigger voltage. 

A similar phenomenon occurs for  Langevin simulations of the HH model with small noise (Fig.~\ref{fig:dvdt1}B).  
In this case, the ISI variance based on maximum slope falls slightly below the variance based on the spike peaks; both are similar to the variance obtained with a Schmitt trigger close to $-20$ mV.  
This similarity at higher trigger voltages probably occurs because the inflection point of each spike occurs at nearly the same voltage (at least, for small noise).

As shown in \S\ref{sec:noise_decomp} and \S\ref{sec:num_perf_theorem}, the inter-phase interval (IPI, also refered as iso-phase interval), based on the crossing time of iso-phase sections, provides a uniform $\sigma^2_\text{IPI}$ for all choices of reference iso-phase sections (cf.~Fig.~\ref{fig:ISI_K_vth}).  
Fig.~\ref{fig:dvdt1}B shows the IPI variance (in black) for different mean--return-time isochronal sections, each passing through the limit cycle trajectory at the specified voltage.

For experimental voltage recordings, we cannot specify the interphase variables without a measurement or estimate of the entire state vector.
Fortunately, the sensitivity of ISI variance to voltage threshold, while statistically significant, is relatively small (a few percent), as voltage is the practical measure available for marking spike times.
Moreover, as shown \S\ref{sec:num_perf_theorem}, Theorem \ref{theorem:noise} and Corollary \ref{corollary} can be is well suited to approximating the variance of ISIs ($\sigma^2_\text{ISI}$) despite its threshold-dependence.

\begin{figure}[htbp] 
   \centering
\includegraphics[width=1\textwidth]{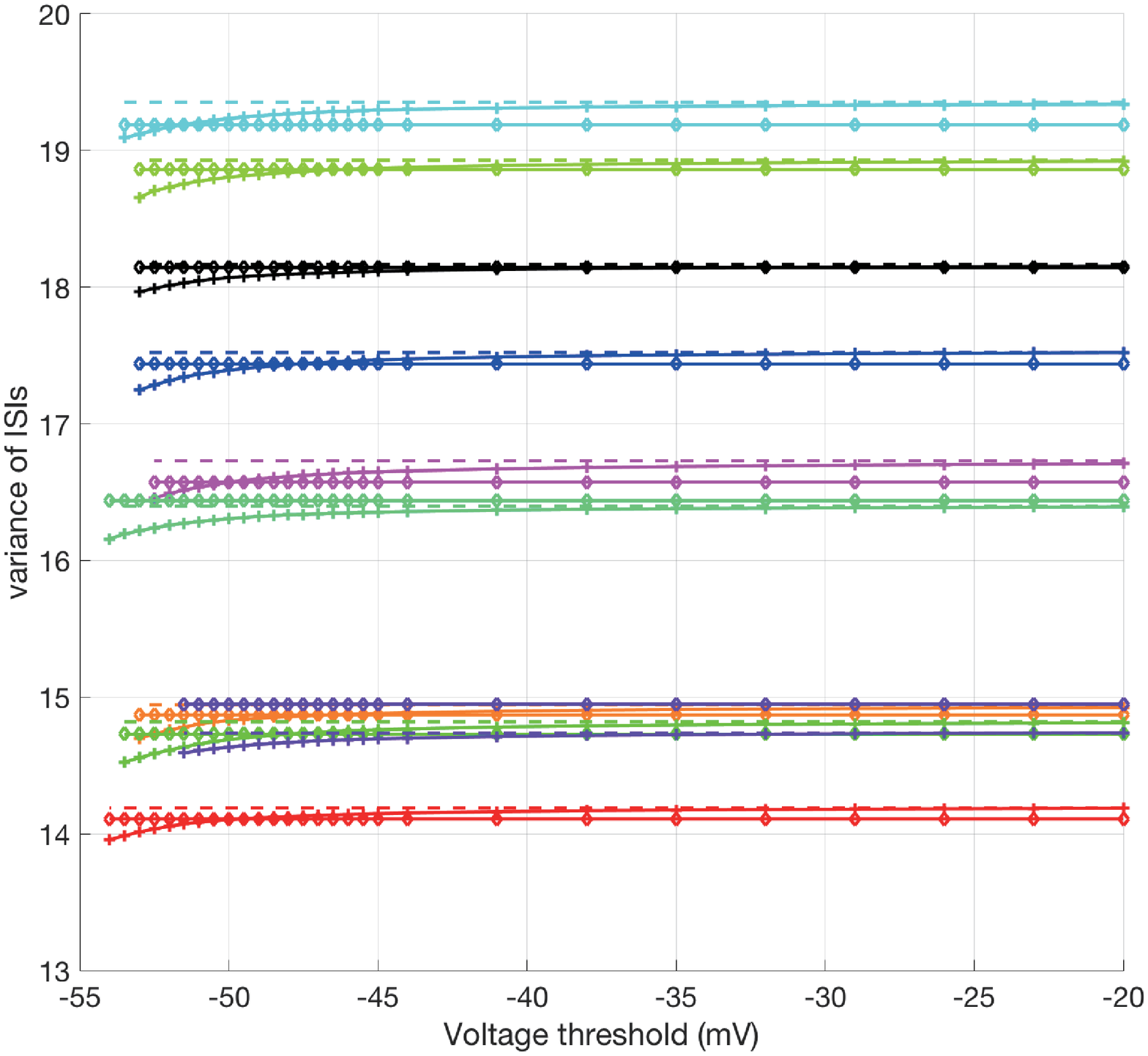}
\caption[Variance of ISI using different thresholds for large noise]{Model simulation of $\sigma^2_\text{ISI}$ for the Langvin model (eqn.~\ref{eq:langevin-rescaled}) using different thresholds when $\epsilon=1$. Ten repeated simulations are plotted, with each containing roughly 1000 ISIs. Solid plus: $\sigma^2_\text{ISI}$ using different voltages as threshold.
Dashed: $dV/dt=0$ as the spike condition.
Diamonds: maximal $dV/dt$ condition. 
Each color represents a different sample with independent noise.  
Variance is in units of ms$^2$.
}\label{fig:dvdt2}
\end{figure}

For moderate to large noise  Langevin model traces ($\epsilon\approx 1$), we also see a systematic shift in $\sigma^2_\text{ISI}$ with increasing Schmitt-trigger voltage. 
However, the size of the shift is an order of magnitude smaller than the variability of the variance across trials.  
Fig.\ref{fig:dvdt2} plots $\sigma^2_\text{ISI}$ versus trigger voltage, as well as the ISI variance based on the peak voltage and the maximal slope conditions, for ten different samples of the Langevin HH model with $\epsilon=1$, each comprising $\gtrsim 1000$ interspike intervals.  
In each case $\sigma^2_\text{ISI}$ is a smoothly increasing function of the trigger voltage, but the range of the increase in variance is approximately  0.25 ms$^2$, while the sample variance of the ISI variance itself is approximately 3.5 ms$^2$ across the ten trials, an order of magnitude larger.  
For comparison, the sample variance of $\sigma^2_\text{ISI}$ across c.~4000 trials, cf.~Fig.~\ref{fig:ISI_K_vth}, is approximately $3\times 10^{-7}$ ms$^2$.
The source of the variance for the larger noise value may involve the introduction of extra  or missing spikes from the regular spiking pattern, cf.~Fig.~\ref{fig:all_SS}.
Thus, although $\sigma^2_\text{ISI}$ based on the standard Schmitt trigger approach is sensitive to the trigger value, the IPI variance estimate given by Theorem \ref{theorem:noise} lies within the range of this sensitivity, which for realistic noise levels is small compared to the intrinsic variability of the variance across trials.  


\subsection{Relation to Other Methods}
Ermentrout and colleagues \cite{ErmentroutBeverlinTroyerNetoff2011JCNS} developed an asymptotic treatment of the interspike interval variance as part of their analysis of the variance of phase response curves; the variance of the phase response in the absence of a perturbing input is simply the ISI variance.
Our analysis was inspired in part by the approach of \cite{ErmentroutBeverlinTroyerNetoff2011JCNS} in that we study the accumulation of variance of the timing variable (the asymptotic phase function for the unperturbed system) over a single period.   
Our approach differs from that of \cite{ErmentroutBeverlinTroyerNetoff2011JCNS} in several key respects.  
While \cite{ErmentroutBeverlinTroyerNetoff2011JCNS} used an additive Gaussian white noise current to obtain stochastic trajectories, in our model the fluctuations arise from channel noise based on a detailed $14\times 28$ D Langevin description of the Hodgkin-Huxley system.  
In addition \cite{ErmentroutBeverlinTroyerNetoff2011JCNS} truncated the small-$\epsilon$ expansion of the phase dynamics at first order, i.e.~they neglected terms at $O(\epsilon^2)$ and higher orders.  
However, their expression for the ISI variance begins with a term that is $O(\epsilon^2)$, suggesting a possibly inconsistency in their result.  
In contrast, we retain terms through $O(\epsilon^2)$, and demonstrate excellent agreement between full numerical and semi-analytic results in the small-$\epsilon$ regime.
Finally, we explicitly construct the ISI variance as a first-passage time problem, allowing us to leverage Dynkin's theorem, which goes beyond the treatment given in \cite{ErmentroutBeverlinTroyerNetoff2011JCNS}.

There is a rich literature on the variability of interspike intervals in low-dimensional neural models, such as integrate and fire models with or without an adaptation current \cite{Lindner2004PRE_interspike,SchwalgerTiloetal2010PCB,Shiau2015JCN,Fisch2012JofNeuro}.
In this literature, there is significant interest in serial correlation structure of successive interspike intervals \cite{Chacron2004PRE,Lindner2005PRE,Schwalger2012EPL}.
However, to our knowledge, the literature since \cite{ErmentroutBeverlinTroyerNetoff2011JCNS} has not further addressed analytical treatment of interspike interval variance for higher dimensional models such as the Hodgkin-Huxley system.  

Jan-Hendrik Schleimer's thesis \cite{Schleimer2013Thesis} addresses the moments of the interspike interval through a different approach.  
In his thesis, Schleimer assumes an a prior reduction from physical coordinates (voltage and gating variables) to a one-dimensional phase description $\phi\in[0,1)$.  
He formulates a stochastic Langevin equation for $\phi$ using Linaro's model \cite{Linaro2011PublicLibraryScience} which is similar (but not pathwise equivalent) to \cite{FoxLu1994PRE,OrioSoudry2012PLoS1,PuThomas2020NECO} of the form (eqn.~(5.1) on page 47)
$$\frac{d\phi}{dt}=\omega+\mbZ(\phi)^\intercal\sqrt{D(\phi)}\xi(t),$$ 
where $\xi$ is a delta-correlated Gaussian white noise disturbance vector, $\mbZ$ is the infinitesimal phase response curve of the deterministic limit cycle system, and $D(\phi)$ is a noise coefficient matrix. 
If $\mbZ$ is $n$-dimensional then $\xi(t)$ is assumed to be $n$-dimensional as well, and $D$ is $n\times n$.    
Using the Stratonovich interpretation, he derives a local phase-specific increment of the timing variability of the form (eqn.~(5.3) on page 47)
$$\sigma^2(\phi)=\mbZ(\phi)^\intercal D(\phi)\mbZ(\phi).$$
Implicitly,$\int_0^1\sigma^2(\phi)\,d\phi$ gives the ISI variability. 
To compare our approach with Schleimer's, we give an explicit expression for $\sigma^2(\phi)$ as a sum of contributions from each directed edge, at each phase of the limit cycle:
\begin{equation}
    \sigma^2(s)=\left(\sum_k \alpha_k(V(S))X_{i(k)}(s)\zeta_k^\intercal \mbZ(s) \mbZ^\intercal(s)\zeta_k\right)\frac1{T_0},
\end{equation}
where the sum runs over directed edges in the ion channel transition graph(s), and the factor of $T_0^{-1}$ reflects changing the range of phase from $[0,T_0]$ to $[0,1]$.


\subsection{Limitations}
Like other approaches in the literature, our calculations are based on a linear approximation to the effects of the noise.  However,
Ito's formula \eqref{eq:ItoLemma} includes terms both of order $\sqrt{\epsilon}$ and $\epsilon$.  The latter weights the Hessian matrix of the asymptotic phase function, $\partial^2_{ij}\phi$.  In our main result \eqref{eq:ISI_channel_noise_decomp} we neglected the contribution of this higher order term.  
    Similar truncations of either Taylor's expansion or Ito's formula are seen throughout the literature, for example eqn.~(3.2.8) in \cite{KuramotoBook}, eqn.~(120) in \cite{Schwemmer2012Springer}, and eqn.~(2) in \cite{ErmentroutBeverlinTroyerNetoff2011JCNS}. 
These authors favor an immediate phase reduction when $\epsilon$ is small,  setting $\mbX(t)\approx \mbX_0(\theta(t))$ and 
\begin{equation}
    \frac{d\phi}{dt}=1+\sqrt{\epsilon} \left(\nabla\phi(\mbX)\right)^\intercal\cdot \mathcal{G}(\mbX)\cdot d\mbW(t)
\end{equation}(\cite{Schwemmer2012Springer,KuramotoBook}).  As in Kuramoto's original phase reduction approach \cite{KuramotoBook}, we also evaluate the infinitesimal phase response curve $\mbZ$ on the limit cycle throughout this thesis.
This ommission of the Hessian term could possible cause additional discrepancies.  Recent advances in the theory of nonlinear oscillators have provided means to obtain the asymptotic phase Hessian \cite{Aminzare2019IEEE,wilson2018greater,wilson2019augmented} but we have not attempted to implement these calculations for our 14D HH model.  

\section{Acknowledgments}
The authors thank Dr.~David Friel (Case Western Reserve University, School of Medicine) for introducing us to the problem of channel noise and spontaneous spike time variability in the \textit{leaner} mouse, for providing access to voltage recordings obtained in his laboratory, and for many illuminating discussions on mechanisms at work in Purkinje cells. 
The authors thank Dr.~Daniela Calvetti and Dr.~Erkki Somersalo (Case Western Reserve University, Department of Mathematics, Applied Mathematics and Statistics) 
for helpful advice and discussion.
This work was made possible in part by grants from the National Science Foundation (DMS-1413770 and DEB-1654989) and the Simons Foundation. PT thanks the Oberlin College Department of Mathematics for research support.  
This research has been supported in part by the Mathematical Biosciences Institute and the National Science Foundation under grant DMS-1440386. 
Large-scale simulations made use of the High Performance Computing Resource in the Core Facility for Advanced Research Computing at Case Western Reserve University.

\begin{appendices}

\section{Model Parameters, Common Symbols and Notations}\label{app:notations}

\begin{table}[htbp]\centering
   {\begin{tabular}{|l|l|r|} 
   \hline
   Symbol  & Meaning & Value \\
   \hline
   $C$ & Membrane capacitance & 1 $\mu F/cm^2$ \\
    \hline
     $\bar{g}_\text{Na}$ & Maximal sodium conductance & 120 $\mu S/cm^2$ \\
     \hline
     $\bar{g}_\text{K}$ & Maximal potassium conductance & 36 $\mu S/cm^2$ \\
     \hline
     $g_\text{leak}$ & Leak conductance & 0.3 $\mu S/cm^2$ \\
   \hline
   $V_\text{Na}$ & 
   Sodium reversal potential for \Na & 50 $mV$\\
   \hline
    $V_\text{K}$ & 
   Potassium reversal potential for \K & -77 $mV$\\
   \hline
    $V_\text{leak}$ & 
   Leak reversal potential & -54.4 $mV$\\
   \hline
   $I_\text{app}$&
   Applied current to the membrane  &10 $nA/cm^2$\\
   \hline
   $\mathcal{A}$ & Membrane Area & $100\,\mu\text{m}^2$\\
   \hline
   $M_\text{tot}$& Total number of \Na~channels & 6,000\\
\hline
$N_\text{tot}$& Total number of\K~channels & 18,00\\
\hline
   \end{tabular}}
   \caption{ {Parameters used for simulations in this paper.}}
   \label{tab:parameters}
\end{table}

 Subunit kinetics for Hodgkin and Huxley parameters are given by
\begin{align}
  \alpha_m(V)&=  \frac{0.1*(25-V)}{ \exp(2.5-0.1V)-1} \label{eq:rate4}  \\
\beta_m(V)&= 4*\exp(-V/18) \label{eq:rate5}  \\
\alpha_h(V)&= 0.07*\exp(-V/20) \label{eq:rate6}  \\
\beta_h(V)&= \frac{1}{ \exp(3-0.1V)+1} \label{eq:rate7}  \\
\alpha_n(V)&= \frac{0.01* (10-V)}{\exp(1-0.1V)-1} \label{eq:rate8}  \\
\beta_n(V)&= 0.125\exp(-V/80) \label{eq:rate9}    
\end{align}


 \[ A_K(V) =\begin{bmatrix}
   D_1(1)& \beta_n(V)             & 0                & 0                  & 0\\
   4\alpha_n(V)& D_1(2)&   2\beta_n(V)              & 0&                   0\\
    0&        3\alpha_n(V)&        D_1(3)& 3\beta_n(V)&          0\\
    0&        0&               2\alpha_n(V)&          D_1(4)& 4\beta_n(V)\\
    0&        0&               0&                 \alpha_n(V)&          D_1(5)
\end{bmatrix},
\]  

 \[ A_\text{Na} =\begin{bmatrix}
D_2(1) & \beta_m&0 &0 &\beta_h&0&0&0\\
3\alpha_m&D_2(2)&2\beta_m&0&0&\beta_h&0&0\\
0&2\alpha_m&D_2(3) &3\beta_m&0&0&\beta_h&0 \\
0&0&\alpha_m&D_2(4)&0&0&0&\beta_h \\
\alpha_h&0&0&0&D_2(5)&\beta_m&0&0\\
0&\alpha_h&0&0&3\alpha_m&D_2(6)&2\beta_m&0\\
0&0&\alpha_h&0&0&2\alpha_m&D_2(7)&3\beta_m\\
0&0&0&\alpha_h&0&0&\alpha_m&D_2(8)\\
\end{bmatrix},
\]  
where the diagonal elements $$D_k(i)=-\sum_{j\neq i}A_\text{ion}(j,i),\quad k\in \{1,2\} \quad \text{ion}\in \{\text{Na},\text{K}\}.$$

\renewcommand{\arraystretch}{1.25}
\begin{table}[htbp]\centering
   \begin{tabular}{cl} 
   \hline\hline
   Symbol  & Meaning \\
   \hline\hline
   $\mbX$ \& $\mbx$ & state vector (random variable \& realization value)
   \\ 
   $\mbM$ \& $\mbm$ & eight-component state vector for the \Na gates (random variable \& realization value)\\
   $\mbN$ \& $\mbn$ & five-component state vector for the \K gates (random variable \& realization value)\\
    \hline \hline
   $C$  & membrane capacitance (1 $\mu F/cm^2$)\\
   $I_\text{app}$  &  applied current (10 $nA/cm^2$)\\
   $\bar{g}_\text{ion}$ &  maximal conductance for $\text{ion}\in\{\text{Na}^+,\text{K}^+\}$   \\
   $V_\text{ion}$&  reversal potential of $\text{ion}\in\{\text{Na}^+,\text{K}^+\}$  \\
    \hline \hline
    $e^\text{Na}_i$ \& $e^\text{K}_i$ & $i$th standard unit vector in $\R^8$ \& $\R^5$ \\
   $\zeta^\text{ion}_k=e^\text{ion}_{j(k)}-e^\text{ion}_{i(k)}$ & stoichiometry vector for the $k$th edge, for $\text{ion}\in\{\text{Na}^+,\text{K}^+\}$  \\
    $\alpha_k(v)$ & voltage-dependent per capita transition rate along $k$th edge, $1\le k \le 28$ \\ $i(k)$ \& $j(k)$ & source \& destination nodes for $k$th edge\\
    $M_{i(k)}$ & fractional occupancy of source node for $k$th edge\\
    $\mbF(\mbX)$ \& $f(x)$ & deterministic part of the evolution equation (mean-field)\\
    $\mathcal{G}$, $S_\text{Na}$, $S_\text{K}$  & noise coefficient matrix for the the $14\times28$D Langevin model, \Na gates and \K gates, resp. \\
     \hline \hline
       $\Delta^k$ & $k$-dimensional simplex in $\R^{k+1}$ given by $y_1+\ldots+y_{k+1}=1, y_i\ge 0$\\
       $\mathcal{D}$ & domain of the (stochastic) differential equation \\
       $\mathcal{V}^0$ & ``nullcline'' surface associated with the voltage variable, where $f(v,\mbm,\mbn)=0$ \\
       $\mathcal{S}$ &  arbitrary section transverse to the deterministic limit cycle \\
       $\mathcal{S}^u$ &  isovoltage Poincar\'{e} section (where voltage is a constant $u$)\\
       $\mathcal{S}^{u}_0$,  $\mathcal{S}^{u}_+$, \&  $\mathcal{S}^{u}_-$ & ``null", ``inward current"  \&  ``outward current" surface for voltage $u$ \\
      &  and $f(v,\mbm,\mbn)=0,\ f(v,\mbm,\mbn)>0 \&\ f(v,\mbm,\mbn)<0$, resp.\\
    \hline \hline
    $\tau(\mbx,\mathcal{S})$ &  first passage time (FPT) from a point $\mbx\in\mathcal{D}$ to section $\mathcal{S}$\\
    $T(\mbx,\mathcal{S})$ & mean first passage time (MFPT) from point $\mbx\in\mathcal{D}$ to set $\mathcal{S}$ \\
    $S(\mbx,\mathcal{S})$ & the second moment of the FPT from a point $\mbx\in\mathcal{D}$ to section $\mathcal{S}$\\
      \hline \hline
      $\tau_k^u$ \& $\tau_k^d$ & $k$th voltage surface upcrossing \& downcrossing time\\
      $I_k$ & $k$th interspike interval (ISI), for some reference voltage $v_0$ \\
      $I$, $H$ \& $\sigma^2_\text{ISI}$ & mean, 2nd moment, and variance of ISI\\
     $\mu_k$ & $k$th iso-phase crossing time\\
  $\Delta_k$ & $k$th iso-phase interval (IPI), for some reference phase $\phi_0$\\
    $\tbar_\epsilon$, $S_\epsilon$, $\sigma^2_\text{IPI}$ & mean, 2nd moment, variance of iso-phase interval (for noise level $\epsilon$)\\
   $\sigma^2_{\phi,k}$ \&  $\sigma^2_{\text{ISI},k}$ & contribution of $k$th edge to the IPI variance and the ISI variance, resp. \\
      \vspace{-3mm}\\
      \hline \hline
      $\gamma(t)$ & deterministic limit cycle trajectory\\
      $\tbar_0$ & period of deterministic limit cycle\\
      $\phi(\mbx)$  & asymptotic phase function for deterministic limit cycle\\
      $\mbZ(t)=\nabla\phi(\gamma(t))$ & infinitesimal phase response curve (iPRC) for deterministic limit cycle  \\
      \hline \hline
    $\tbar_\epsilon$ & mean period for noise level set to $\epsilon$\\
       $\tbar_1=\left.\frac{\partial \tbar_\epsilon}{\partial \epsilon}\right|_{\epsilon=0}$ & sensitivity of the mean period to increasing noise level, in the small-noise limit\\
        \vspace{-3mm}\\
      \hline \hline
    $T_\epsilon(\mbx)$ & mean--return-time (MRT) phase function for noise level set to $\epsilon$ \\
    $T_1(\mbx)=\left.\frac{\partial T_\epsilon(\mbx)}{\partial \epsilon}\right|_{\epsilon=0}$ & sensitivity of the phase function to noise in the small-noise limit\\
    \vspace{-3mm}\\
      $T_0(\mbx)$  & MRT phase function for $\epsilon=0$. Note $T_0(\mbx)=\text{const}-\tbar_0\frac{\phi(\mbx)}{2\pi}$ for an arbitrary constant\\
    \hline\hline
   \end{tabular}
   \caption{Table of Common Symbols and Notations.}
   \label{notations}
\end{table}


\section{Diffusion Matrix of the 14D Model}\label{app:SNaSK}
Define the state vector for \Na~and \K~channels as
\begin{align}
&\mbM=[m_{00},m_{10},m_{20},m_{30},m_{01},m_{11},m_{21},m_{31}]^\intercal,\nonumber 
\end{align}
and $\mbN=[n_0,n_1,n_2,n_3,n_4]^\intercal$, respectively. 

The matrices $S_\text{K}$ and $S_\text{Na}$ are given by
\begin{align*}
S_\text{K}=&\frac{1}{\sqrt{N_\text{ref}}}\left[
\begin{array}{cccc}
-\sqrt{4\alpha_n n_0}& \sqrt{\beta_n n_1}&0&0\\
   \sqrt{4\alpha_n n_0}& -\sqrt{\beta_n n_1}&-\sqrt{3\alpha_n n_1}&\sqrt{2\beta_n n_2} \\
  0 &0 &\sqrt{3\alpha_n n_1}&-\sqrt{2\beta_n n_2}\\
    0 &0&0&0 \\
    0&0&0&0 \\
   \end{array} 
\right.\cdots\\
&\quad\cdots \left. 
\begin{array}{cccc}
0&0&0&0\\
0&0&0&0 \\
\sqrt{2\alpha_n n_2}&-\sqrt{3\beta_n n_3}&-\sqrt{\alpha_n n_3}&\sqrt{4\beta_n n_4} \\
    0&0&\sqrt{\alpha_n n_3}&-\sqrt{4\beta_n n_4} \\
  \end{array}
    \right],
\end{align*}
and
\begin{align*}
S^{(1:5)}_\text{Na}=&\frac{1}{\sqrt{M_\text{ref}}}\left[
\begin{array}{ccccc}
-\sqrt{\alpha_h m_{00}}& \sqrt{\beta_h m_{01}}&-\sqrt{3\alpha_m m_{00}}&\sqrt{\beta_m m_{10}}&
0\\
   0& 0&\sqrt{3\alpha_m m_{00}}&-\sqrt{\beta_m m_{10}}&
   -\sqrt{\alpha_h m_{10}} \\
  0 &0 &0&0&0\\
    0 &0&0&0&0\\
    \sqrt{\alpha_h m_{00}}&-\sqrt{\beta_h m_{01}}&0&0&0 \\
     -\sqrt{\beta_h m_{11}}&0&0&0&0 \\
      0&0&0&\sqrt{\alpha_h m_{20}}&-\sqrt{\beta_h m_{21}} \\
       0&0&0&0&0 \\
    \end{array} 
\right]\\
S_\text{Na}^{(6:10)}=&\frac{1}{\sqrt{M_\text{ref}}}\left[ 
\begin{array}{ccccc}
0&0&0&0&0\\
 \sqrt{\beta_h m_{11}}
   &-\sqrt{2\alpha_m m_{10}}&\sqrt{2\beta_m m_{20}}&0&0 \\
  \sqrt{2\alpha_m m_{10}}
  &-\sqrt{2\beta_m m_{20}}&-\sqrt{\alpha_h m_{20}}&\sqrt{\beta_h m_{21}}\\
    0&0&0 &0&0\\
   0&0&0&0&0 \\
     -\sqrt{\beta_h m_{11}}&0&0&0&0 \\
      0&0&0&\sqrt{\alpha_h m_{20}}&-\sqrt{\beta_h m_{21}} \\
       0&0&0&0&0 \\
    \end{array}
\right]\\
\end{align*}
\begin{align*}
S^{(11:15)}_\text{Na}=&\frac{1}{\sqrt{M_\text{ref}}}\left[ 
\begin{array}{ccccc}
0&0&0&0&0\\
0&0&0&0&0\\
   -\sqrt{\alpha_m m_{20}}&
   \sqrt{3\beta_m m_{30}}&0&0
   &0\\
   \sqrt{\alpha_m m_{20}}&
   -\sqrt{3\beta_m m_{30}}
   &-\sqrt{\alpha_h m_{30}}
   &\sqrt{\beta_h m_{31}}&0\\
 0&0&0&0
   0&\\
0&0&0&0&\sqrt{3\alpha_m m_{01}}\\
 0&0&0&0
   &0\\
   0&0&\sqrt{\alpha_h m_{30}}&-\sqrt{\beta_h m_{31}}
   &0\\
    \end{array}
\right]\\
S^{(16:20)}_\text{Na}=&\frac{1}{\sqrt{M_\text{ref}}}\left[
\begin{array}{ccccc}
0&0&0&0&0\\
0&0&0&0&0\\
0&0&0&0&0\\
0&0&0&0&0\\
-\sqrt{3\alpha_m m_{01}}
   &\sqrt{\beta_m m_{11}}&0&0&0\\
   -\sqrt{\beta_m m_{11}}&-\sqrt{2\alpha_m m_{11}}&\sqrt{2\beta_m m_{21}}&0&0\\
   0&\sqrt{2\alpha_m m_{11}}&-\sqrt{2\beta_m m_{21}}&-\sqrt{\alpha_m m_{21}}&\sqrt{3\beta_m m_{31}}\\
   0&0&0&\sqrt{\alpha_m m_{21}}&-\sqrt{3\beta_m m_{31}}\\
    \end{array}
    \right],
\end{align*}
where $S^{(i:j)}_\text{Na}$ is the i$^{th}$-j$^{th}$ column of $S_\text{Na}$.

Note that each of the 8 columns of $S_\text{K}$ corresponds to the flux vector along a single directed edge in the \K~channel transition graph.
Similarly, each of the 20 columns of $S_\text{Na}$ corresponds to the flux vector along a directed edge in the \Na~graph (cf.~Fig.~\ref{plot:HHNaKgates}). 
Factors $\mref=6000$ and $\nref=1800$ represent the reference number of \K~and \Na~channels from Goldwyn and Shea-Brown's model \cite{GoldwynSheaBrown2011PLoSComputBiol}.

\section{Proof of Lemma \ref{Lemma:vmin_vmax}}\label{append_Lemma_vmin_vmax}
For the reader's convenience we restate

\begin{customlemma}{\ref{Lemma:vmin_vmax}}
For a conductance-based model of the form \eqref{eq:langevin-rescaled}, and for any fixed applied current $I_\text{app}$, there exist  upper and lower bounds $v_\text{max}$ and $v_\text{min}$ such that trajectories with initial voltage condition $v\in[v_\text{min},v_\text{max}]$ remain within this interval for all times $t>0$, with probability 1, regardless of the initial channel state, provided the gating variables satisfy $0\le M_{ij}\le 1$ and $0\le N_i\le 1$.
\end{customlemma}

\begin{proof}
  Let $V_1=\underset{\text{ion}}{\text{min}}\{V_\text{ion}\}\land V_\text{leak}$, and 
$V_2=\underset{\text{ion}}{\text{max}}\{V_\text{ion}\}\lor V_\text{leak}$, where  $\text{ion}\in\{\text{Na}^+, \text{K}^+\}$. 
Note that by assumption, for both the \Na~ and \K~channel, $0\leq \mbM_8 \leq1$, $0\leq \mbN_5 \leq1$. Moreover, $g_i>0,\ g_\text{leak}>0$, therefore  when $V\le V_1$
\begin{align}\label{eq:vmin11}
\frac{dV}{dt}&=\frac{1}{C}\left\{I_\text{app}(t)-\bar{g}_\text{Na}\mbM_8\left(V-V_\text{Na}\right)-\bar{g}_\text{K}\mbN_5\left(V-V_\text{K}\right)-g_\text{leak}(V-V_\text{leak})\right\}\\
\label{eq:vmin21}
&\ge  \frac{1}{C}\left\{I_\text{app}(t)-\bar{g}_\text{Na}\mbM_8\left(V-V_1\right)-\bar{g}_\text{K}\mbN_5\left(V-V_1\right)-g_\text{leak}(V-V_1)\right\}\\
\label{eq:vmin31}
&\ge \frac{1}{C}\left\{I_\text{app}(t)-0\times\mbM_8\left(V-V_1\right)-0\times\mbN_5\left(V-V_1\right)-g_\text{leak}(V-V_1)\right\}\\
\label{eq:vmin41}
&=\frac{1}{C}\left\{I_\text{app}(t)-g_\text{leak}(V-V_1)\right\}.
\end{align} 
Inequality \eqref{eq:vmin21} holds with probability 1 because $V_1 =\underset{i\in \mathcal{I}}{\text{min}}\{V_i\}\land V_\text{leak}$, and  inequality \eqref{eq:vmin31} follows because $V-V_1\le 0$, $g_i>0$ and $\mbM_8\ge 0$, $\mbN_5\ge 0$. 
Let $V_\text{min}:=\text{min}\left\{\frac{I_-}{g_\text{leak}} +V_1,V_1\right\}$.  When $V<V_\text{min}$, $\frac{dV}{dt}>0$. Therefore,  
$V$ will not decrease beyond $V_\text{min}$. 

Similarly, when $V\ge V_2$
\begin{align}\label{eq:vmax11}
\frac{dV}{dt}&=\frac{1}{C}\left\{I_\text{app}(t)-\bar{g}_\text{Na}\mbM_8\left(V-V_\text{Na}\right)-\bar{g}_\text{K}\mbN_5\left(V-V_\text{K}\right)-g_\text{leak}(V-V_\text{leak})\right\}\\
\label{eq:vmax21}
&\le  \frac{1}{C}\left\{I_\text{app}(t)-\bar{g}_\text{Na}\mbM_8\left(V-V_2\right)-\bar{g}_\text{K}\mbN_5\left(V-V_2\right)-g_\text{leak}(V-V_2)\right\}\\
\label{eq:vmax31}
&\le \frac{1}{C}\left\{I_\text{app}(t)-0\times\mbM_8\left(V-V_2\right)-0\times\mbN_5\left(V-V_2\right)-g_\text{leak}(V-V_2)\right\}\\
\label{eq:vmax41}
&=\frac{1}{C}\left\{I_\text{app}(t)-g_\text{leak}(V-V_2)\right\}.
\end{align} 

Inequality \eqref{eq:vmax21} and inequality \eqref{eq:vmax31} holds  because $V_2=\underset{i\in \mathcal{I}}{\text{max}}\{V_i\}\lor V_\text{leak},$  $V-V_2\ge 0$, $g_i>0$ and $\mbM_8\ge 0$, $\mbN_5\ge 0$. 
Let $V_\text{max}=\text{max}\left\{\frac{I_\text{app}}{g_\text{leak}}+V_2, V_2 \right\}$. When $V>V_\text{max},$ $\frac{dV}{dt}<0$.   Therefore, $V$ will not go beyond $V_\text{max}$.  

We conclude that if $V$ takes an initial condition in the interval $[V_\text{min},V_\text{max}],$ and if $0\le M_{ij},N_i\le 1$ for all time, then $V(t)$ remains within this interval for all $t\ge 0$.
 Thus we complete the proof of Lemma \ref{Lemma:vmin_vmax}.
\end{proof}

\section{Experimental Methods}
\label{sec:experimental-methods}
Whole-cell current-clamp recordings of Purkinje cells from \textit{in vitro} cerebellar slice preparations taken from wild type and \emph{leaner} mice were performed in the laboratory of Dr.~David Friel (Case Western Reserve University School of Medicine), as described in \cite{OvsepianFriel2008EJN}.
Experimental procedures conformed to guidelines approved by the
Institutional Animal Care and Use Committee at Case Western
Reserve University. 
Voltage signals were sampled at a frequency of 20kHz, filtered at 5–10 kHz, digitized at a resolution of 32/mV, and analyzed using custom software written in IgorPro and Matlab.

\end{appendices}

\bibliography{Shusen}


\bibliographystyle{spmpsci}      


\end{document}